\numberwithin{equation}{section}
\theoremstyle{plain}
\newtheorem{theorem}{Theorem}[section]
\newtheorem{proposition}[theorem]{Proposition}
\newtheorem{lemma}[theorem]{Lemma}
\newtheorem{corollary}[theorem]{Corollary}
\newtheorem{example}[theorem]{Example}
\theoremstyle{definition}
\newtheorem{definition}[theorem]{Definition}
\newtheorem{remark}[theorem]{Remark}
\newcommand{\llb}{\llbracket}
\newcommand{\rrb}{\rrbracket}
\newcommand{\la}{\langle}
\newcommand{\ra}{\rangle}
\renewcommand{\a}{\alpha}
\renewcommand{\b}{\beta}
\newcommand{\G}{\Gamma}
\newcommand{\e}{\epsilon}
\renewcommand{\d}{\delta}
\newcommand{\w}{\omega}
\newcommand{\mbb}{\mathbb}
\begin{document}

\title{Double field theory, twistors, and integrability in 4-manifolds}

\author{Bernardo Araneda\footnote{Email: \texttt{bernardo.araneda@aei.mpg.de}} \\
Max-Planck-Institut f\"ur Gravitationsphysik \\ 
(Albert Einstein Institut), Am M\"uhlenberg 1, \\
D-14476 Potsdam, Germany}

\date{June 3, 2021}

\maketitle

\begin{abstract}
The search for a geometrical understanding of dualities in string theory, in particular T-duality, has led to the development 
of modern T-duality covariant frameworks such as Double Field Theory, 
whose mathematical structure can be understood in terms of generalized geometry and, more recently, para-Hermitian geometry.
In this work we apply techniques associated to this doubled geometry to four-dimensional manifolds, 
and we show that they are particularly well-suited to the analysis of integrability in special spacetimes, 
especially in connection with Penrose's twistor theory and its applications to general relativity. 
This shows a close relationship between some of the geometrical structures in the para-Hermitian approach to double field theory and 
those in algebraically special solutions to the Einstein equations.
Particular results include the classification of four-dimensional, possibly complex-valued,
(para-)Hermitian structures in different signatures, 
the Lie and Courant algebroid structures of special spacetimes, and the analysis of deformations of (para-)complex structures.
We also discuss a notion of ``weighted algebroids'' in relation to a natural gauge freedom in the framework.
Finally, we analyse the connection with two- and three-dimensional (real and complex) twistor spaces, and 
how the former can be understood in terms of the latter, in particular in terms of twistor families.
\end{abstract}

\tableofcontents

\section{Introduction}

Generalized geometry \cite{Hitchin, Gualtieri} is a branch of differential geometry 
that unifies symplectic and complex geometry and
in which, in particular, vector fields and 1-forms on a manifold $M$ are treated on an equal footing, 
as sections of a ``doubled'' or generalized tangent bundle, $\mathbb{T}M=TM\oplus T^{*}M$. 
The generalization to $\mathbb{T}M$ of the differential structure encoded in the Lie bracket is captured by the notion of
Courant algebroids, which involve the so-called Dorfman bracket (or its antisymmetric version, the Courant bracket).
During the last years these structures have appeared naturally in theoretical high-energy physics,
as a central tool for the understanding of the geometry of string theory,
related in particular to the geometrical interpretation of T-duality (e.g. \cite{Cavalcanti}).  
A prominent line of developments in this respect is Double Field Theory (see below),
where it has recently been shown \cite{Vaisman1, Vaisman2, FRS2017, Svo18} 
that also para-Hermitian geometry is particularly well-suited to the understanding of its mathematical structure.
In this work we show that {\em doubled geometry} is also natural in general relativity,
where it provides a new perspective on `integrability issues' in special spacetimes 
and their connection with twistor theory.

\smallskip
As an initial motivation, let us give a rough argument to introduce para-Hermitian 
and generalized geometry in string theory, and their connections with T-duality.
When considering string toroidal compactifications, a distinctive feature of strings is that their extended 
nature allows them to wrap non-contractible cycles around the compact dimensions.
As a consequence, strings are not only characterized by momentum modes but also by 
{\em winding} modes, which describe how the strings wind around the tori.
Roughly speaking, T-duality is a symmetry of string theory that establishes that momentum and winding states should 
be considered on an equal footing.
As a simple example, suppose we have only one compact dimension, a circle $S^{1}$ with radius $R$. 
The momentum and winding modes are quantized; they are characterized by two quantum numbers $n,m$.
From the mass spectrum for closed strings one can see (as a standard reference cf. \cite[Section 8.3]{Polchinski})
that, when $R\to\infty$, winding modes become very heavy, while momentum 
modes become light and start to form a continuum, as corresponds to a non-compact dimension. 
Similarly, when $R\to 0$, momentum modes become infinitely massive 
but winding modes become light, and the spectrum again approaches a continuum.
Therefore, the $R\to\infty$ and $R\to 0$ limits appear to be physically equivalent, a remarkable feature that is exclusive 
to string theory since a field theory of point-like particles has no winding states (so no spacetime dimension 
`opens up' when $R\to 0$ in that case).
This equivalence comes from the fact that the string spectrum is invariant under the simultaneous change 
$R \leftrightarrow \alpha'/R$, $n \leftrightarrow m$ (where $\alpha'$ is the inverse of the string tension), a symmetry which 
extends to any observable in the theory.
This is the simplest manifestation of T-duality;
more generally, T-duality relates the physics of strings propagating on backgrounds with very different geometries.

\smallskip
Since momentum modes are  conjugate or `dual' to spacetime coordinates $x^{i}$, the consideration of momentum and winding 
modes at the same level suggests the introduction of a new set of coordinates $\tilde{x}_{i}$ `dual' to winding.
This doubling of coordinates allows to formulate a field theory that incorporates the degrees of freedom associated to 
winding without necessarily representing strings, and such that T-duality is a manifest symmetry of the theory. 
This scheme is called Double Field Theory (DFT) since the pioneering work \cite{HullZwiebach} 
(see also e.g. \cite{Aldazabal} for a review), and represents a T-duality covariant formulation of supergravity that incorporates
stringy aspects not present in the usual treatment of supergravity.
Although the doubling of space coordinates should only affect the compact dimensions, it is useful in practice to duplicate all 
space coordinates; in this way one ends up with a doubled space, which in the simplest case is a product $T\times\tilde{T}$ 
and has global coordinates $(x^{i},\tilde{x}_{i})$,
where $T$ and $\tilde{T}$ are $d$-dimensional flat tori that are said to be dual from each other.
This `extended spacetime' is intended to be the target space of the fundamental string, 
and the physical fields\footnote{In the simplest formulation of DFT, 
the physical fields correspond to the massless sector of closed strings, namely a metric tensor, 
a 2-form field (the Kalb-Ramond field) and a scalar field (the dilaton).} 
are organized in representations of the T-duality group ${\rm O}(d,d)$.
Furthermore, from the field content of the theory
one deduces a notion of generalized diffeomorphisms, that are infinitesimally generated 
by a generalized Lie derivative or `D-bracket'. 
Consistency conditions (i.e. closure of the algebra of generalized diffeomorphisms) impose a restriction on the fields.
A particular solution to this restriction is the so-called `section condition' or `strong constraint',
where the fields are forced to depend on only half of the doubled coordinates. 
This half is determined as a subspace that is maximally isotropic (or totally null) with respect to the  
${\rm O}(d,d)$ metric $\eta={\rm d}x^{i}\otimes{\rm d}\tilde{x}_{i}+{\rm d}\tilde{x}_{i}\otimes{\rm d}x^{i}$ 
\cite{HullZwiebach2009b}, i.e. a $d$-dimensional subspace $S$ 
such that\footnote{The metric $\eta$ involved in \eqref{Isotropic} does not 
represent the gravitational field; it is part of the `kinematics' of the theory. 
The gravitational field is unified with the 2-form field into a `generalized metric', which codifies the dynamics;
this will not play a role in the present work.} 
\begin{equation}\label{Isotropic}
 \eta|_{S}=0.
\end{equation}
T-duality acts by transforming any such isotropic subspace into another.

\smallskip
One might already anticipate some parallelism of the DFT framework with generalized geometry, 
in the sense that both formalisms are based on a doubled space.
However, an important difference is that in the former, what is doubled is the base manifold, while 
in the latter is the tangent bundle.
This difference is important if one wants to understand T-duality in the generalized geometry setting, 
since, as argued above, the very essence of T-duality is to exchange the manifolds $T$ and $\tilde{T}$.
Upon restriction to fields satisfying the strong constraint, the D-bracket reduces to the Dorfman bracket
and one can argue that the DFT setting reduces locally to generalized geometry (see e.g. \cite{Hohm}).
A possible formalization of all this (that is also intended to provide a global formulation of DFT)
originates in the work of Vaisman in \cite{Vaisman1, Vaisman2}, who 
shows, in the first place, that the usual extended spacetime of DFT is a flat para-K\"ahler manifold.
To see this, consider first an almost para-Hermitian structure on a $2d$-dimensional manifold $M$: 
a pair $(\eta,K)$ where $K\in{\rm End}(TM)$ and $\eta$ is a metric such that 
\begin{equation}\label{paraHermitian0}
 K^{2}=\mathbb{I}, \qquad \eta(KX,KY)=-\eta(X,Y)
\end{equation}
for any vector fields $X,Y$. The map $K$ is called almost para-complex structure, and
it produces a splitting of the tangent bundle as $TM=L\oplus\tilde{L}$, where $L, \tilde{L}$ are the $(\pm 1)$-eigenbundles of $K$. 
These eigenbundles are maximally isotropic with respect to $\eta$, i.e. they are $d$-dimensional and satisfy \eqref{Isotropic}, 
and this implies that $\tilde{L}\cong L^{*}$ (see Section \ref{Section:background} for details).
If $L$ and $\tilde{L}$ are involutive under the Lie bracket then $K$ is said to be integrable, 
and there exist local coordinates $(x^{i},\tilde{x}_{i})$, $i=1,...,d$, such that, 
denoting $\partial_{i}=\partial/\partial x^{i}$ and $\tilde{\partial}^{i}=\partial/\partial\tilde{x}_{i}$,
one has $L={\rm span}(\partial_{i})$ and $\tilde{L}={\rm span}(\tilde{\partial}^{i})$. 
If $\eta$ is flat, then it can be expressed in these coordinates as $\eta={\rm d}x^{i}\odot{\rm d}\tilde{x}_{i}$.
So there are two complementary foliations $F,\tilde{F}$ such that $L=TF$ and $\tilde{L}=T\tilde{F}$ 
\footnote{This notation means that if $\Sigma$ is a leaf in $F$, then $T\Sigma=L|_{\Sigma}$, etc.},
and $\tilde{L}\cong L^{*}$ implies that $T\tilde{F}\cong T^{*}F$, thus $TM$
has actually the structure of a generalized tangent bundle: $TM\cong TF\oplus T^{*}F$.
Furthermore, Vaisman shows \cite{Vaisman1} that there is a natural notion of D-bracket on $TM$, and that it reduces to the 
Dorfman bracket for fields with dependence only on $F$, so in this way one recovers generalized geometry.
Finally, different T-dual spacetimes correspond to different choices of para-complex structures.

\smallskip
The study of DFT from the perspective of para-Hermitian geometry, in settings more general than the flat para-K\"ahler case,
was further developed in \cite{Vaisman2} and \cite{FRS2017, Svo18, FRS19, MarottaSzabo}, and also in several subsequent works, 
being currently an active research area.
(Early considerations of the relationship between para-Hermitian structures and T-duality can be found in \cite{Hull}.)
A particularly important difference between Hermitian and para-Hermitian geometry is that the 
eigenbundles of an almost-complex structure are complex conjugates of each other, 
while those of an almost para-complex structure are independent. 
As a consequence, the integrability properties of a para-Hermitian structure can be 
split into {\em separate} questions about the involutivity of each eigenbundle, so one can talk about 
``half-integrability''.
We will see that this is especially important in general relativity and its connection with twistor structures.

\smallskip
A key condition in the analysis above is equation \eqref{paraHermitian0}, which implies \eqref{Isotropic} 
for the eigenbundles of $K$.
For real $K$, this means that the signature of $\eta$ must be $(d,d)$; so in four dimensions this is $(++--)$,
which is different from the usual Lorentz signature $(+---)$ of general relativity.
However, if we allow $K$ to be a map on the {\em complexified} tangent bundle, 
then the metric is allowed to have any signature (or to be complex).
Our interest in complex-valued maps arises from the power of using complex methods in relativity,
which date back to remarkable programmes such as Penrose's twistor theory, Newman's H-space, 
and Pleba\'nski's hyper-heavenly (HH) construction.
(One can argue that, in a sense, all these constructions can be understood as parts of the twistor programme.)
This is one of our main motivations in this work:
submanifolds where the condition \eqref{Isotropic} holds are actually the basic object of twistor theory, 
but they arise from a completely different perspective, namely the essential premise of the twistor programme that light-rays 
are more fundamental than spacetime points.
A related motivation comes from the fact that 
a ``complex-valued almost-complex structure'' has some significance in developments in mathematical relativity
and their relation to twistor structures, see \cite{Fla76, Flaherty2, Bailey, Ara18, Ara20}.
As in ordinary para-Hermitian geometry, the phenomenon of ``half-integrability'' is shared by such complex-valued maps.
The present work can be regarded as the application of some of the modern DFT-like techniques mentioned above 
to 4-manifolds, with the extra ingredient that $K$ can be complex-valued so that one can make contact with 
general relativity and twistor theory.

\subsection*{Main results}

The fact that we consider complex-valued ``para-Hermitian'' structures implies that, 
unlike the usual literature in DFT and related matters, we can analyse different signatures of the metric, 
by imposing different reality conditions.
Since we work in a 4-manifold $M$ (which we assume to be orientable), this means that we will deal with
Lorentz, Riemannian (or Euclidean) and split (or neutral) signature.
The Lorentzian case is relevant, of course, for general relativity (and in particular for 
the so-called hyper-heavenly spaces \cite{PlebanskiRobinson}); 
the Riemannian case is relevant because e.g. of the Atiyah-Hitchin-Singer approach to twistor theory \cite{Atiyah}; 
and the split case can be related, since the work of Ooguri and Vafa \cite{Ooguri},
to the geometry of strings with $N=2$ supersymmetry. (It is also related to the LeBrun-Mason twistor construction 
in split signature \cite{LebrunMason}.)

\smallskip
Since an arbitrary 4-manifold with a metric does not naturally come equipped with a (possibly complex-valued)
almost para-Hermitian structure, we first focus on classifying all possible such structures. 
We show that regardless of signature, an almost para-Hermitian structure 
is essentially equivalent to an (anti-)self-dual 2-form (in the sense of Hodge duality). 
Then we use this result to describe the space $P_{x}$ of para-Hermitian structures at any point $x\in M$:
we show that $P_{x}$ has two connected components and, imposing appropriate reality conditions whenever possible,
each component can be described as:
a complex projective line $\mathbb{CP}^{1}$ in Riemannian signature\footnote{This turns out to be analogous
to the result on almost-complex structures, which is already known in the literature \cite{Atiyah}.}, 
the space $(\mathbb{RP}^{1}\times\mathbb{RP}^{1})\backslash\mathbb{RP}^{1}$ in split signature,
and the space $(\mathbb{CP}^{1}\times\mathbb{CP}^{1})\backslash\mathbb{CP}^{1}$ in Lorentz signature 
(and in complex 4-manifolds).
We show that, topologically, these spaces are respectively 
a real 2-sphere $S^{2}$, a hyperboloid of one sheet $H^{1}$, and a complex 2-sphere $\mathbb{C}S^{2}$.
For {\em Hermitian} structures, the corresponding spaces are $\mathbb{CP}^{1}$, 
$\mathbb{CP}^{1}\backslash\mathbb{RP}^{1}$ (a hyperboloid of {\em two} sheets)
and $(\mathbb{CP}^{1}\times\mathbb{CP}^{1})\backslash\mathbb{CP}^{1}$, respectively.
We also find that there is a natural ``gauge freedom'' associated to the representation of a (para-)Hermitian structure, 
and we develop a correspondingly covariant formalism.

\smallskip
We then analyse the integrability conditions for almost (para-)Hermitian structures, and we show that involutivity of one of  
the eigenbundles of $K$ (or ``half-integrability'') is equivalent 
to the existence of certain special spinors, that in relativity language give origin to what are called 
shear-free null geodesic congruences.
This allows us to show that there are natural algebroid structures associated
to any four-dimensional manifold with a shear-free congruence, namely Lie and Courant algebroids.
This gives explicit examples of Courant algebroids in four dimensions, 
that include e.g. all algebraically special Einstein manifolds (such as the Kerr and Schwarzschild black hole solutions, 
but also all vacuum Petrov type II solutions), 
so it connects the structures studied in generalized geometry to spacetimes of interest in 
classical relativity.
Furthermore, the Lie algebroid structure has naturally associated a cochain complex that 
can be used to prove the existence of potentials in problems of interest in relativity.
We are naturally led to the question of generalizing the algebroid structure to fields with general transformation 
properties under the gauge freedom mentioned above, and we discuss the possible construction 
of such an object (``weighted algebroid'') and the associated differential complex.

\smallskip
Finally, we focus on connections with twistor theory, showing that:
any particular half-integrable para-Hermitian structure defines a 2-dimensional (2D) twistor space, 
and if all para-Hermitian structures are half-integrable, then there is 
a 3-dimensional (3D) twistor space, which is a one-parameter family of 2D twistor spaces 
parametrized by projective spinor fields. This 3D space is the total space of
a fibration of 2D twistor spaces over projective spinors if and only if the vacuum Einstein equations are satisfied.
We also discuss the relationships with other twistor constructions in the literature.
Finally we define deformations of para-complex structures in $M$ and analyse their integrability, and we show that 
small, half-integrable deformations exist if and only if the Weyl tensor is half-algebraically special, and in this 
case they define sections of a line bundle over a 2D twistor space.
As in the rest of the paper, our results here are valid for any metric signature.

\subsection*{Conventions and overview}

Our conventions for different kinds of indices, curvature, etc. follow Penrose and Rindler \cite{PR1,PR2}.
The organization of this work is as follows.
We start in Section \ref{Section:background} with some background material: we review basic notions on
para-Hermitian geometry and related structures (keeping always in mind that we allow complex-valued maps), 
generalized geometry and Lie and Courant algebroids, and conformal geometry and the Lee form.
Then in Section \ref{Sec:4D} we study almost para-Hermitian structures in 4-manifolds equipped with a metric 
of different signatures, and we describe the spaces of such structures in each case.
In Section \ref{Sec:Integrability} we study integrability issues: half-integrability of para-Hermitian structures,
the Lie and Courant algebroids associated to special spacetimes,
and a possible notion of ``algebroids'' for the treatment of weighted fields.
In Section \ref{Sec:Twistors} we study some connections with twistor theory: 
two- and three-dimensional twistor spaces, twistor families, and
integrability of small deformations of para-complex structures.
In Section \ref{Sec:Conclusions} we make a summary of this work and comment on some possible future directions.
We include three appendices: Appendix \ref{Appendix:spinors} with a brief review of spinors in 4 dimensions 
(in any signature), Appendix \ref{Appendix:CovariantFormalism} with additional details about the 
``gauge freedom'' and the associated covariant formalism, and Appendix \ref{Appendix:curvature}
with some properties of the curvature of the natural connection in the above formalism.

\section{Background}\label{Section:background}

\subsection{Para-Hermitian and related structures}\label{Sec:paraHermitian}

In the following, $M$ denotes a $d$-dimensional real, smooth manifold. 
The definitions still apply if $M$ is complex.

\begin{definition}\label{Def:paraHermitianStructure}
Let $E\to M$ be a vector bundle with even rank. 
Let $g$ be  a non-degenerate symmetric bilinear form on the fibers.
Given a map $K\in{\rm End}(E)$, we say that $(g,K)$ is a {\em para-Hermitian structure} on $E$ if it holds:
\begin{enumerate}
 \item $K^2=+\mbb{I}$
 \item The $\pm1$ eigenbundles of $K$ have the same rank.
 \item $g(KX,KY)=-g(X,Y)$ for all $X,Y\in E$. 
\end{enumerate}
If we require only the first two conditions, $K$ is said to be a {\em para-complex structure}.
\end{definition}

Consider a para-Hermitian structure $(g,K)$.
The map $K$ gives a decomposition $E=L\oplus\tilde{L}$, where 
$L$ is the $(+1)$-eigenbundle of $K$ and $\tilde{L}$ the $(-1)$-eigenbundle.
Any element $X\in E$ can be written as 
\begin{equation}\label{splitvector0}
 X=x+\tilde{x}
\end{equation}
where $x\in L$ and $\tilde{x}\in\tilde{L}$. In the following, we will extensively use this notational convention.
Note that if $x,y\in L$, i.e. $Kx=x$ and $Ky=y$, then $g(x,y)=g(Kx,Ky)=-g(x,y)$ so $g(x,y)\equiv 0$
for all $x,y \in L$, and similarly $g(\tilde{x},\tilde{y})\equiv0$ for all $\tilde{x},\tilde{y}\in\tilde{L}$. 
We then say that $L$ and $\tilde{L}$ are isotropic with respect to $g$, i.e. $g|_{L}=0=g|_{\tilde{L}}$.
The following is a standard result:

\begin{proposition}
We have the isomorphisms $\tilde{L}\cong L^{*}$ and $L\cong \tilde{L}^{*}$.
\end{proposition}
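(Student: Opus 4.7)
The plan is to exhibit an explicit bundle map $L\to\tilde{L}^{*}$ induced by $g$, and show it is a fiberwise isomorphism. The isomorphism $\tilde{L}\cong L^{*}$ will then follow by symmetry (or by the same argument with the roles of $L$ and $\tilde{L}$ interchanged).

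First, I would recall the ingredients already established in the excerpt: $E=L\oplus\tilde{L}$ with $L,\tilde{L}$ of the same rank (by condition 2 of Definition \ref{Def:paraHermitianStructure}), and both are totally isotropic, $g|_{L}=g|_{\tilde{L}}=0$, by the short computation given just before the proposition. I would then define, fiber by fiber, the bundle morphism
\begin{equation*}
\phi: L\to\tilde{L}^{*}, \qquad \phi(x)(\tilde{y}):=g(x,\tilde{y}), \quad x\in L,\ \tilde{y}\in\tilde{L}.
\end{equation*}
This is well-defined and $C^{\infty}(M)$-linear since $g$ is.

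Next, I would show that $\phi$ is injective. Suppose $\phi(x)=0$, i.e.\ $g(x,\tilde{y})=0$ for every $\tilde{y}\in\tilde{L}$. Since $L$ is isotropic we also have $g(x,y)=0$ for all $y\in L$. Any $Z\in E$ decomposes as $Z=y+\tilde{y}$, so $g(x,Z)=0$ for all $Z\in E$, and non-degeneracy of $g$ forces $x=0$.

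Finally, I would conclude the argument by a dimension count: $L$ and $\tilde{L}^{*}$ have the same rank by hypothesis 2, so the injective bundle map $\phi$ is a fiberwise isomorphism, hence a bundle isomorphism $L\cong\tilde{L}^{*}$, equivalently $\tilde{L}\cong L^{*}$ by dualising. The symmetric statement $L\cong\tilde{L}^{*}$ is obtained verbatim by swapping the roles of $L$ and $\tilde{L}$. There is no real obstacle here; the only subtlety worth emphasising is that the proof crucially uses both the isotropy of $L,\tilde{L}$ (to reduce non-degeneracy of $g$ on $E$ to non-degeneracy of the pairing between $L$ and $\tilde{L}$) and the equal-rank assumption (to upgrade injectivity to an isomorphism), so both parts of the definition are genuinely needed.
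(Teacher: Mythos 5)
Your proposal is correct and follows essentially the same route as the paper: both define the map $x\mapsto g(x,\cdot)$, use isotropy of the eigenbundles to land in the dual of the complementary subbundle, and invoke non-degeneracy of $g$ to conclude. You merely spell out more explicitly the injectivity argument and the rank count that the paper compresses into ``since $g$ is non-degenerate, the map is an isomorphism.''
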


\begin{proof}
If $X\in E = L\oplus\tilde{L}$, in general we have $g(X,\cdot)\in L^{*}\oplus\tilde{L}^{*}$.
But if $\tilde{x}\in \tilde{L}$, then $g(\tilde{x},\tilde{y})=0$ for all $\tilde{y}\in \tilde{L}$ since $\tilde{L}$ is 
isotropic, therefore $g(\tilde{x},\cdot)\in L^{*}$. 
Thus we have a map $\tilde{L}\to L^{*}$ given by 
\begin{equation}\label{isom0}
 \tilde{L}\ni\tilde{x} \mapsto \tilde{x}_{\flat}:=g(\tilde{x},\cdot) \in L^{*},
\end{equation}
and since $g$ is non-degenerate, the map is an isomorphism. The proof of $L\cong \tilde{L}^{*}$ is analogous.
\end{proof}

\noindent
Since here we are not using the abstract index notation, we use the musical isomorphisms $\flat$ and $\sharp$.
From \eqref{isom0} it follows that the map
\begin{equation}
 \varphi:E=L\oplus\tilde{L}\to L\oplus L^{*}, 
 \qquad X=x+\tilde{x}\mapsto \varphi(X)=(x,\tilde{x}_{\flat}) \label{isom}
\end{equation}
is also an isomorphism. 

\begin{definition}\label{Def:almostParaHermitian}
We say that $(g,K)$ is an {\em almost para-Hermitian structure} on a manifold $M$
if $(g,K)$ is a para-Hermitian structure on $TM\otimes\mathbb{C}$.
\end{definition}

\begin{remark}
Notice that the map $K$ in Def. \ref{Def:almostParaHermitian} satisfies the conditions of 
Def. \ref{Def:paraHermitianStructure} and is allowed to be {\em complex-valued}.
From this perspective, we can equally well refer to the map $J=iK$ as a ``Hermitian structure'' in the sense 
that it satisfies $J^{2}=-\mathbb{I}$, its eigenbundles have equal rank, and $g(J\cdot,J\cdot)=g(\cdot,\cdot)$.
We choose the terminology `para-Hermitian' because it suggests that the 
eigenbundles are not related to each other
(unlike in the usual real-valued Hermitian structures), see Remark \ref{Remark:integrability} below.
See also Remark \ref{Remark:reality}.
\end{remark}

\begin{remark}\label{Remark:Conformal}
If $(g,K)$ is an almost para-Hermitian structure on $M$, and $\Omega$ is a nowhere vanishing scalar field,
then $(\Omega^{2}g,K)$ is also an almost para-Hermitian structure on $M$, since
$(\Omega^{2}g)(K\cdot,K\cdot)=-(\Omega^{2}g)(\cdot,\cdot)$.
This means that we can interpret $K$ as an object associated not to a particular metric $g$ but to the 
conformal class $[g]=\{\Omega^{2}g \;|\; \Omega\in C^{\infty}(M), \Omega > 0 \}$, in the sense that 
the conditions of definitions \ref{Def:almostParaHermitian} and \ref{Def:paraHermitianStructure} are satisfied 
for any $g\in[g]$.
Therefore, conformal invariance will play an important role in our study. 
We will refer to $([g],K)$ as an {\em almost para-Hermitian conformal structure} on $M$.
\end{remark}

A para-Hermitian structure $(g,K)$ in $M$ can also be thought of as an ``almost-symplectic structure'', 
in the sense that it automatically defines a non-degenerate 2-form $\w(X,Y):=g(KX,Y)$;
this is usually called the fundamental 2-form.  
The relationship of this with conformal geometry is particularly interesting, see Section \ref{Sec:Conformal} below.
The eigenbundles of $K$ are isotropic with respect to $\w$, that is $\w|_{L}=0$ and $\w|_{\tilde{L}}=0$.

\begin{definition}\label{Def:almostparaKahler}
An {\em almost para-K\"ahler structure} on a manifold $M$ is an almost para-Hermitian structure 
such that the fundamental 2-form is closed, ${\rm d}\w=0$.
\end{definition}

Although our primary interest is in the case where there is only one para-Hermitian structure,
in later sections it will appear naturally the case in which one has more than one para-complex or complex structure.
Because of this, it is useful to recall now the algebras of quaternions $\mathbb{H}$ and of 
para-quaternions (or split-quaternions) $\mathbb{H}'$:
\begin{align}
 \mathbb{H} ={}& \{ q = a+bi+cj+dk \;|\; i^2=j^2=k^2=-1, \; k=ij=-ji \}, \label{quaternions} \\
 \mathbb{H'} ={}& \{ q' = a+bi+cs+dt \;|\; i^2=-1, \; s^2=t^2=1, \; t=is=-si \} \label{splitquaternions}
\end{align}
where $a,b,c,d \in \mathbb{R}$. The quaternion algebra $\mathbb{H}$ is associated to a positive definite 
quadratic form\footnote{The conjugate of a quaternion $q$ is $\bar{q}=a-bi-cj-dk$, and the conjugate of 
a para-quaternion $q'$ is $\bar{q}'=a-bi-cs-dt$.} 
$q\bar{q}=a^2+b^2+c^2+d^2$, while $\mathbb{H}'$ is associated to a quadratic form with 
split signature $q'\bar{q}'=a^2+b^2-c^2-d^2$.
Based on these algebras, we define:
\begin{definition}\label{Def:hypercomplex}
Let $M$ be a real manifold, and let $I_{1},I_{2},I_{3}$ be three anticommuting endomorphisms of $TM$. Then:
\begin{enumerate}
\item $(I_{1},I_{2},I_{3})$ is an {\em almost-hypercomplex} (or almost-quaternionic) structure 
if $I_{1},I_{2},I_{3}$ satisfy the quaternion algebra \eqref{quaternions}.
If $M$ is equipped with a metric $g$, then $(g,I_{1},I_{2},I_{3})$ is an {\em almost-hyperhermitian} structure on $M$ if
it is almost-hypercomplex and $g(I_{i}X,I_{i}Y)=g(X,Y)$ for all $i=1,2,3$ and for all vectors $X,Y$.
\item $(I_{1},I_{2},I_{3})$ is an {\em almost-para-hypercomplex} (or almost-para-quaternionic)
structure if $I_{1},I_{2},I_{3}$ satisfy the para-quaternion algebra \eqref{splitquaternions}.
If $M$ is equipped with a metric $g$, then $(g,I_{1},I_{2},I_{3})$ is an {\em almost-para-hyperhermitian} structure on $M$ if 
it is almost-para-hypercomplex and $g(I_{1}X,I_{1}Y) = g(X,Y)$ and $g(I_{i}X,I_{i}Y)=-g(X,Y)$ for $i=2,3$.
\end{enumerate}
\end{definition}

The word `almost' in all the definitions above can be removed by introducing a notion of integrability.
To do this, it is instructive to first briefly recall the situation in complex geometry.
An almost-complex structure in a real manifold $M$ is a real map $J$ in $TM$ such that 
$J^{2}=-\mathbb{I}$, and whose $\pm i$ eigenbundles, denoted $T^{\pm}M$, have the same rank.
One has the decomposition $TM\otimes\mathbb{C}=T^{+}M\oplus T^{-}M$ 
(note that the elements in $T^{\pm}M$ are necessarily complex).
An almost-complex structure is said to be {\em integrable} if $T^{+}M$ is an involutive distribution in $TM\otimes\mathbb{C}$, 
that is $[\Gamma(T^{+}M),\Gamma(T^{+}M)]\subset \Gamma(T^{+}M)$, where $[\cdot,\cdot]$ is the Lie bracket of vector fields, 
and $\Gamma(E)$ denotes the space of sections of a vector bundle $E$.
A real manifold equipped with an integrable almost-complex structure is a complex manifold.
The integrability condition for $J$ is equivalent to the vanishing of its associated Nijenhuis tensor, where,
if $A:\G(TM)\to\G(TM)$ is a linear map, the Nijenhuis tensor associated to $A$ is the map $N_A:\G(TM)\times\G(TM)\to\G(TM)$ given by
\begin{equation}
 N_A(X,Y)=\frac{1}{4}\left( A^2[X,Y]+[AX,AY]-A([AX,Y]+[X,AY]) \right), \label{Nij}
\end{equation}
for all vector fields $X,Y$.

Mimicking the above definitions for complex structures, we define
\begin{definition}
A {\em (para-)Hermitian structure} on a manifold $M$ is an almost (para-)Hermitian structure $(g,K)$ such that 
the Nijenhuis tensor \eqref{Nij} associated to $K$ vanishes:
\begin{equation}
 N_{K} \equiv 0.
\end{equation}
Similarly, a {\em (para-)hyperhermitian structure} on $M$ is an almost-(para-)hyperhermitian structure $(g,I_1,I_2,I_3)$
such that $N_{I_{i}}\equiv 0$ for all $i=1,2,3$.
\end{definition}

A crucial difference between integrability of ordinary Hermitian and para-Hermitian structures is the following:
\begin{remark}\label{Remark:integrability}
The fact that an ordinary almost-complex structure $J$ is a real tensor implies that its eigenbundles 
are complex conjugates of each other, so $J$ is integrable if and only if {\em both} distributions $T^{+}M$ 
and $T^{-}M$ are involutive.
In the para-Hermitian case, the two eigenbundles of a real-valued $K$ are {\em not} complex conjugates of each 
other, so one of the distributions may be involutive while the other one is not. 
This also applies if $K$ is complex-valued as in Def. \ref{Def:almostParaHermitian}.
\end{remark}

Recalling the splitting $L\oplus\tilde{L}$ induced by $K$, 
one can then separate integrability properties associated to the two eigenbundles $L$ and $\tilde{L}$.
This gives origin to a notion of ``half-integrability'':

\begin{definition}\label{Def:LparaHermitian}
We say that an almost para-Hermitian structure $(g,K)$ on $M$ is {\em half-integrable} if one of the 
eigenbundles of $K$ is involutive. 
If both eigenbundles are involutive, then $K$ is integrable and $(g,K)$ is a para-Hermitian structure on $M$.
\end{definition}

\begin{remark}\label{Remark:LparaHermitian}
In \cite{FRS2017}, half-integrable almost-para Hermitian structures on a manifold $M$ (with real-valued $K$)
are called {\em $L$-para Hermitian} or {\em $\tilde{L}$-para Hermitian} manifolds, 
depending on whether the $(+1)$- or the $(-1)$-eigenbundle is integrable, respectively.
\end{remark}

It is useful to introduce the projectors to $L$ and $\tilde{L}$; respectively:
\begin{equation}
 P:=\tfrac{1}{2}(\mbb{I}+K), \qquad \tilde{P}:=\tfrac{1}{2}(\mbb{I}-K). \label{projectors}
\end{equation}
Using $K^2=\mathbb{I}$, one can easily check that
these operators satisfy $P^2=P$, $\tilde{P}^2=\tilde{P}$ and $P\tilde{P}=0=\tilde{P}P$.
The Nijenhuis tensor \eqref{Nij} for $A\equiv K$ can be rewritten in terms of these projectors as 
\cite[Eq. (3.13)]{FRS2017}
\begin{equation}
 N_K(X,Y)\equiv N_{P}(X,Y)+N_{\tilde{P}}(X,Y),
\end{equation}
where 
\begin{equation}
 N_{P}(X,Y)=\tilde{P} [PX,PY], \qquad N_{\tilde{P}}(X,Y)=P[\tilde{P}X,\tilde{P}Y]
\end{equation}
Note that if $L$ is integrable, then $[PX,PY]\in \G(L)$ and therefore $\tilde{P}[PX,PY]=0$, 
i.e. $N_{P}\equiv 0$.
Hence $N_{P}$ and $N_{\tilde{P}}$ govern the involutivity of $L$ and $\tilde{L}$ respectively.

\begin{remark}\label{Remark:Foliation}
By Frobenius theorem (see e.g. \cite[Theorem 19.21]{Lee-manifolds}), if $M$ is a real manifold and
one has an involutive distribution $L\subset TM$ then the collection of all integral 
manifolds\footnote{An integral manifold of a distribution $L\subset TM$ is an immersed submanifold $N\subset M$ such 
that $T_{p}N=L|_{p}$ for all $p\in N$. A generic distribution does not admit integral manifolds.}
of $L$ forms a foliation of $M$. 
But if the involutive distribution is complex, i.e. $L\subset TM\otimes\mathbb{C}$, then the 
integral manifolds are {\em complex submanifolds} living in the {\em complexification} of $M$ 
\footnote{Note that, given a real manifold $M$, the complex manifold resulting from an integrable almost-complex structure, 
and the complexification of $M$, are two different concepts.}. 
Since complexification requires real-analyticity of $M$ (which is a restrictive property from 
the point of view of relativity), in general we will not assume the existence of a foliation. 
The involutivity of the distribution, however, is well-defined, and is what we mostly need in this work.
\end{remark}

\begin{remark}\label{Remark:Lagrangian}
Since the eigenbundles $L$ and $\tilde{L}$ of $K$ are isotropic with respect to the almost-symplectic form 
$\omega$, we can refer to them as {\em Lagrangian subbundles}. 
If, say, $L$ is integrable and gives origin to a foliation $F$, we can refer to $F$ as a {\em Lagrangian foliation}.
\end{remark}

The integrability of almost-para-Hermitian and related structures is a conformally invariant property. 
Other properties that are not conformally invariant are also of interest, in particular the K\"ahler and 
related conditions:

\begin{definition}\label{Def:paraKahler}
Let $M$ be a real manifold.
\begin{enumerate}
\item An almost-para-Hermitian structure $(g,K)$ on $M$ is {\em para-K\"ahler} if it is para-Hermitian and almost-para-K\"ahler, 
that is, $K$ is integrable and ${\rm d}\omega=0$.
\item An almost-para-hyperhermitian structure $(g,I_1,I_2,I_3)$ on $M$ is {\em para-hyperk\"ahler} 
if it is para-hyperhermitian and ${\rm d}\omega_{i}=0$, where $\omega_{i}(\cdot,\cdot)=g(I_{i}\cdot,\cdot)$, $i=1,2,3$.
\end{enumerate}
\end{definition}

One can show the following:

\begin{proposition}
Let $(g,K)$ be an almost-para-Hermitian structure on a real manifold $M$. Then $(g,K)$ is para-K\"ahler if and only if $K$ is 
parallel with respect to the Levi-Civita connection of $g$. 
Likewise, an almost-para-hyperhermitian structure $(g,I_1,I_2,I_3)$ on $M$ is para-hyperk\"ahler if and only if 
$I_i$ is parallel w.r.t. the Levi-Civita connection of $g$ for all $i=1,2,3$.
\end{proposition}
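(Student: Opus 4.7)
My plan is to prove the equivalence by establishing the standard Koszul-type identity that expresses the covariant derivative $\nabla K$ of the para-complex structure as an algebraic combination of ${\rm d}\w$ and $N_{K}$, so that the two integrability conditions in Def. \ref{Def:paraKahler} collapse onto the single condition $\nabla K=0$. The para-hyperhermitian statement will then follow by applying the single-structure equivalence to each of $(g,I_{1})$, $(g,I_{2})$, $(g,I_{3})$ separately (the first being Hermitian, the other two para-Hermitian).

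For the easy direction $\nabla K=0\Rightarrow$ para-K\"ahler, I combine $\nabla g=0$ with $\nabla K=0$ to obtain $\nabla\w=0$, from which ${\rm d}\w=0$ follows by the torsion-free property of $\nabla$ (since $\rm d$ on forms is the antisymmetrization of $\nabla$). For the Nijenhuis tensor, I substitute $[X,Y]=\nabla_{X}Y-\nabla_{Y}X$ into the defining formula \eqref{Nij} and commute $K$ through $\nabla$; the four resulting terms cancel pairwise, giving $N_{K}\equiv 0$, so $K$ is integrable and $(g,K)$ is para-K\"ahler by Def. \ref{Def:paraKahler}.

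For the converse, I would derive an explicit Koszul-type formula for $g((\nabla_{X}K)Y,Z)$ as a linear combination of $({\rm d}\w)(X,Y,Z)$, $({\rm d}\w)(X,KY,KZ)$ and $g(N_{K}(Y,Z),KX)$. Starting from the usual Koszul identity $2g(\nabla_{X}Y,Z)=X(g(Y,Z))+Y(g(X,Z))-Z(g(X,Y))+g([X,Y],Z)-g([Y,Z],X)+g([Z,X],Y)$ applied first with $Y$ replaced by $KY$, and using $g((\nabla_{X}K)Y,Z)=g(\nabla_{X}(KY),Z)+g(\nabla_{X}Y,KZ)$ (which follows from $\nabla g=0$ together with $g(K\cdot,\cdot)=-g(\cdot,K\cdot)$), the bracket contributions reassemble, after repeated use of $K^{2}=\mbb{I}$, into exterior derivatives of $\w$ plus Nijenhuis terms. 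The schematic identity I expect is $2g((\nabla_{X}K)Y,Z)=\pm({\rm d}\w)(X,Y,Z)\pm({\rm d}\w)(X,KY,KZ)\pm 2g(N_{K}(Y,Z),KX)$, with precise signs dictated by the para-Hermitian conventions; then vanishing of both ${\rm d}\w$ and $N_{K}$ forces $\nabla K=0$ by non-degeneracy of $g$.

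The main technical obstacle is the sign bookkeeping: the para-Hermitian relation $g(K\cdot,K\cdot)=-g(\cdot,\cdot)$ has opposite sign to the Hermitian case and propagates through the symmetry properties of $\w$ and the identification of Lie-bracket combinations with Nijenhuis expressions. I would therefore first record a short lemma listing the basic sign rules (for example $\w(KX,Y)=-\w(X,KY)$ and $g((\nabla_{X}K)KY,Z)=g((\nabla_{X}K)Y,KZ)$) before running the Koszul computation. Once the identity is in hand, the single-structure proposition follows, and the para-hyperk\"ahler case reduces to applying it to each $(g,I_{i})$; the only variation in the $I_{1}$ case is the sign flip $g(I_{1}\cdot,I_{1}\cdot)=+g(\cdot,\cdot)$, which turns the argument into the classical Hermitian K\"ahler identity and yields $\nabla I_{1}=0$ by the same scheme.
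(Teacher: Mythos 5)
Your proposal is correct and follows the standard route: the paper itself states this proposition without proof (``One can show the following''), and the argument you outline is the classical one, namely the easy direction via $\nabla\omega=0\Rightarrow{\rm d}\omega=0$ together with the pairwise cancellation in $N_{K}$, and the converse via the Koszul-type identity expressing $g((\nabla_{X}K)Y,Z)$ in terms of $({\rm d}\omega)(X,Y,Z)$, $({\rm d}\omega)(X,KY,KZ)$ and $g(N_{K}(Y,Z),KX)$. Your sign lemmas check out ($\omega(KX,Y)=-\omega(X,KY)$ and $(\nabla_{X}K)K=-K(\nabla_{X}K)$ both follow from $K^{2}=\mathbb{I}$ and $g(K\cdot,K\cdot)=-g(\cdot,\cdot)$), and the reduction of the para-hyperk\"ahler case to three independent applications (one Hermitian for $I_{1}$, two para-Hermitian for $I_{2},I_{3}$) is exactly right, since Definition \ref{Def:paraKahler} imposes the conditions on each $I_{i}$ separately.
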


\subsection{Generalized geometry and algebroids}

\subsubsection{Elementary notions}\label{Sec:basicGG}

Let $V$ be a $d$-dimensional vector space and $V^{*}$ its dual, and consider the space $V\oplus V^{*}$. 
We will denote elements of $V\oplus V^{*}$ by $(X,\a), (Y,\b)$, etc.
There is a natural inner product $\la \cdot,\cdot \ra$ on $V\oplus V^{*}$ given by
$ \la (X,\a), (Y,\b) \ra = \a(Y)+\b(X)$
\footnote{A factor of $1/2$ is 
often included in the right hand side of this equation, but for our purposes this is not important.}.
Any orthogonal endomorphism, i.e. any elemeny of $\mathfrak{so}(V\oplus V^{*})$, can be written as
\begin{equation*}
 \left( \begin{matrix} A & \beta \\ B & -A^{*} \end{matrix} \right)
\end{equation*}
where $A\in{\rm End}(V)$, $B$ can be viewed as a 2-form in $\wedge^{2}V^{*}$ and $\beta$ as a bivector in $\wedge^{2}V$.
Exponentiating, one gets elements in ${\rm SO}(V\oplus V^{*})$, that can be separated into
\begin{equation}\label{OrthogonalTransformations}
 \exp(B)= \left( \begin{matrix} 1 & 0 \\ B & 1  \end{matrix} \right), \quad 
 \exp(\beta)= \left( \begin{matrix} 1 & \beta \\ 0 & 1  \end{matrix} \right), \quad  
 \exp(A)= \left( \begin{matrix} \exp A & 0 \\ 0 & (\exp A^{*})^{-1}  \end{matrix} \right).
\end{equation}

\begin{remark}\label{Remark:Shear}
The matrix $\exp(B)$ in \eqref{OrthogonalTransformations} is called a {\em B-transformation}, 
and it can be thought of as a {\em shear} transformation in the sense that it fixes projections to $V$ 
while shearing in the direction of $V^{*}$; see \cite[Chapter 2]{Gualtieri}.
This interpretation and terminology will be particularly well-suited to our purposes in later sections.
Analogous considerations apply to the {\em $\beta$-transformation} given by the matrix $\exp(\beta)$.
\end{remark}

While the above is just linear algebra and can be done for any vector space, in Generalized Geometry one applies this 
to the case where $V$ is the tangent space to a point in a $d$-dimensional manifold $M$, 
and the {\em generalized tangent bundle} is defined as $\mathbb{T}M:=TM\oplus T^{*}M$.
The construction above then describes the pointwise structure of $\mathbb{T}M$. 
The differential structure is described by generalizing the notion of Lie bracket $[\cdot,\cdot]$ of vector fields to a bracket 
operation on sections of $\mathbb{T}M$. There are two different bracket operations considered in the 
literature, the {\em Courant} bracket and the {\em Dorfman} bracket.
Here we choose the Dorfman bracket and denote it by $\llbracket \cdot,\cdot \rrbracket$; it is defined by
\begin{equation*}
 \llbracket (X,\a), (Y,\b) \rrbracket = ([X,Y], \pounds_{X}\b - \pounds_{Y}\a + {\rm d}(i_{Y}\alpha) )
\end{equation*}
where $\pounds$ denotes the Lie derivative and $i_{Y}$ the interior product. 
The Courant bracket is the skew-symmetrization of the Dorfman bracket.
While the ordinary Lie bracket is skew-symmetric and satisfies the Jacobi identity, the Dorfman bracket 
is not skew-symmetric, but it satisfies the Jacobi identity. In turn, the Courant bracket is skew-symmetric, 
but it does not satisfy the Jacobi identity.

\subsubsection{Lie and Courant algebroids}

In the literature one frequently encounters {\em real} Lie algebroids, but in this work we need the 
complexified versions (although the manifold is still real). This is also used in e.g. \cite{Gualtieri}, see Chapter 3 therein.

\begin{definition}\label{Def:LieAlgebroid}
A (complex) {\em Lie algebroid} is a triple $(L,[\cdot,\cdot]_{L},\rho)$ where $L\to M$ is a complex vector bundle, 
$[\cdot,\cdot]_{L}:\Gamma(L)\times\Gamma(L) \to \Gamma(L)$ is a bilinear map, 
and $\rho:L\to TM\otimes\mathbb{C}$ is a bundle map called {\em anchor}, such that, 
for all $X,Y,Z\in\G(L)$ and $f\in C^{\infty}(M)$, the following four conditions are satisfied\footnote{The anchor $\rho$ 
extends to a map between sections $\Gamma(L)\to\Gamma(TM\otimes\mathbb{C})$ that we also denote by $\rho$.}:
\begin{enumerate}
 \item $[X,Y]_{L}=-[Y,X]_{L}$.
 \item ${\rm Jac}_{L}(X,Y,Z) := [ X , [Y,Z]_{L} ]_{L}+[ Z, [X,Y]_{L} ]_{L}+[ Y, [Z,X]_{L} ]_{L}=0$.
 \item $\rho([X,Y]_{L})=[\rho(X),\rho(Y)]$, where $[\cdot,\cdot]$ is the Lie bracket of vector fields.
 \item $ [X,fY]_{L}=(\rho(X)f)Y+f[X,Y]_{L}$.
\end{enumerate}
\end{definition}

From the first item we see that $[\cdot,\cdot]_{L}$ is skew-symmetric, and the second means that the 
{\em Jacobiator} for $[\cdot,\cdot]_{L}$ vanishes, or, in other words, the
Jacobi identity for $[\cdot,\cdot]_{L}$ is satisfied. 
Together, these two items imply that $[\cdot,\cdot]_{L}$ is a Lie bracket.
The third item means that $\rho$ is a morphism, and from the fourth item 
we see that the anchor and the bracket are subject to the Leibniz rule.

The following are standard examples of Lie algebroids, taken from \cite[Chapter 3]{Gualtieri}:
\begin{example}
The tangent bundle $TM$ gives origin to the {\em tangent Lie algebroid} $(TM,[\cdot,\cdot],\mathbb{I})$,
where the bracket is the Lie bracket of vector fields and the anchor is the identity map.
\end{example}
\begin{example}\label{Ex:LieAlgFoliation}
If $L\subset TM$ is an involutive distribution, then $(L,[\cdot,\cdot]_L,\mbb{I}_L)$ is a Lie algebroid, 
with $[\cdot,\cdot]_L$ and $\mbb{I}_L$ the restrictions of, respectively, the Lie bracket and the identity map 
of $TM$ to the subbundle $L$.
\end{example}

We will need some objects that can be naturally constructed from the structures in a Lie algebroid:
\begin{definition}[Def. 3.7 and 3.8 in \cite{Gualtieri}]\label{Def:LieAlgDerivatives}
Let $(L,[\cdot,\cdot]_{L},\rho)$ be a Lie algebroid, let $\Lambda^{k}=\wedge^{k}L^{*}$ for $k=0,1,2,...$, 
and let $\omega\in\Gamma(\Lambda^{k})$ and $X_0,...,X_k \in \Gamma(L)$.
\begin{enumerate}
\item The {\em Lie algebroid exterior derivative} is the map ${\rm d}^{L}: \Gamma(\Lambda^{k}) \to \Gamma(\Lambda^{k+1})$
defined by 
\begin{align}
\nonumber ({\rm d}^{L}\omega)(X_0,...,X_k) ={}& \sum_{i}(-1)^{i}\rho(X_{i})\omega(X_{0},...,\hat{X}_{i},...,X_{k}) \\
  &+\sum_{i<j}(-1)^{i+j}\omega([X_i,X_j]_{L}, X_{0}, ..., \hat{X}_{i},..., \hat{X}_{j},..., X_{k}) \label{LieAlgExtDer}
\end{align}
where the notation $\hat{X}_{i}$ means that the vector $X_{i}$ is omitted.
\item The {\em generalized Lie derivative} $\pounds^{L}$ of $\omega$ along $X\in\Gamma(L)$
is the operator defined in terms of ${\rm d}^{L}$ by Cartan's formula
\begin{equation}
 \pounds^{L}_{X} \omega = i_{X}{\rm d}^{L}\omega+{\rm d}^{L}i_{X}\omega \label{GeneralizedLieDerivative}
\end{equation}
where $i_X$ denotes interior product, $i_{X}\omega = \omega(X,...)$.
\end{enumerate}
\end{definition}

As examples of the exterior derivative \eqref{LieAlgExtDer}, if $f\in\Gamma(\Lambda^{0})$ and $\omega\in\Gamma(\Lambda^{1})$ then
\begin{align}
 ({\rm d}^{L}f)(X) ={}& \rho(X)f, \label{LieAlgExtDer0} \\
 ({\rm d}^{L} \omega)(X,Y) ={}& \rho(X)\omega(Y)-\rho(Y)\omega(X)-\omega([X,Y]_{L}). \label{LieAlgExtDer1} 
\end{align}

\begin{remark}[de Rham complexes]\label{Remark:deRham}
The fact that the bracket of a Lie algebroid satisfies the Jacobi identity, together with the morphism property 
of the anchor, imply that the Lie algebroid exterior derivative ${\rm d}^{L}$ satisfies ${\rm d}^{L}\circ{\rm d}^{L}=0$. 
Therefore, $(\Gamma(\Lambda^{\bullet}),{\rm d}^{L})$ is a cochain complex (see e.g. \cite[Chapter 3]{Gualtieri}). 
This will be particularly important in sections \ref{Sec:algebroids4D}, \ref{Sec:algebroidsweight}.
\end{remark}

\begin{definition}[Def. 3.1 in \cite{Vaisman2}]\label{Def:CourantAlgebroid}
A {\em Courant algebroid} is a quadruple $(E, \langle\cdot,\cdot\rangle, \rho, \llbracket\cdot,\cdot\rrbracket)$
where: 
$E\to M$ is a vector bundle, 
$\langle\cdot,\cdot\rangle:\G(E)\times\G(E)\to C^{\infty}(M)$ is a non-degenerate symmetric bilinear form, 
$\rho:E\to TM$ is a bundle map called {\em anchor}, and
$\llbracket\cdot,\cdot\rrbracket : \G(E)\times\G(E)\to \G(E)$ is a bilinear operation called {\em Dorfman bracket}, 
subject to the following axioms for all $X,Y,Z\in\G(E)$:
\begin{enumerate}
 \item $\rho(X) \la Y,Z \ra = \la \llbracket X,Y \rrbracket, Z \ra + \la Y,\llbracket X,Z \rrbracket \ra $
 \item $\la \llbracket X,X \rrbracket,Y \ra = \tfrac{1}{2}\rho(Y)\la X,X \ra$ 
 \item $\llbracket X, \llbracket Y,Z \rrbracket \rrbracket = \llbracket \llbracket X,Y \rrbracket, Z\rrbracket +
 \llbracket Y, \llbracket X,Z \rrbracket \rrbracket$
\end{enumerate}
\end{definition}

In terms of the Dorfman bracket, the axioms in Definition \ref{Def:CourantAlgebroid} have the following meaning:
the first axiom means invariance of the inner product with respect to the bracket;
from the second axiom we see that the bracket is not skew-symmetric;
and the third axiom means that the bracket satisfies the Jacobi identity.

The definition of a Courant algebroid can be given using two different bracket operations: 
the Courant bracket $\llbracket\cdot,\cdot\rrbracket_{\rm Cour}$ or the Dorfman bracket $\llbracket\cdot,\cdot\rrbracket$. 
The difference is that one can require the bracket either to be skew-symmetric (Courant) or to satisfy 
the Jacobi identity (Dorfman), but not both. 
The original definition, introduced in \cite{ZWX}, is in terms of the Courant bracket.
The two brackets are related by
$\llbracket X,Y \rrbracket_{\rm Cour}=\tfrac{1}{2}(\llbracket X,Y\rrbracket -\llbracket Y,X\rrbracket)$.

\begin{remark}\label{rem-ma}
If only the first and second axioms in Definition \ref{Def:CourantAlgebroid} are required,
the bracket is said to be {\em metric-compatible}.
The quadruple $(E, \langle\cdot,\cdot\rangle, \rho, \llbracket\cdot,\cdot\rrbracket)$
is then called {\em metric algebroid} \cite{Vaisman1}.
\end{remark}

From the axioms in Definition \ref{Def:CourantAlgebroid} one can deduce two additional identities:
\begin{align}
 \llbracket X, f Y \rrbracket={}& (\rho(X) f ) Y+f\llbracket X,Y \rrbracket, \label{IdentityCourant1} \\
 \rho(\llbracket X,Y \rrbracket)={}& [\rho(X),\rho(Y)] \label{IdentityCourant2}
\end{align}
for any $f\in C^{\infty}(M)$. The first identity means that the anchor and the bracket satisfy the Leibniz rule; 
the second identity means that $\rho$ is a morphism of bundles.
These identities are sometimes included as part of the definition of a Courant algebroid, but it is known that they 
can  be obtained from the axioms: to prove \eqref{IdentityCourant1}, compute $\la \llbracket X, fY \rrbracket, Z \ra $ 
for $Z\in\Gamma(E)$ arbitrary and use the first axiom;
to prove \eqref{IdentityCourant2}, compute both sides independently applied to $\la Z,W \ra$ for arbitrary $Z,W\in\Gamma(E)$,
and use the first and third axioms.

As shown in \cite{ZWX}, a natural example of a Courant algebroid can be obtained from a Lie bialgebroid. 
A related example, which is key for our purposes, is the following: 

\begin{proposition}[See e.g. Example 2.6 in \cite{Svo18}]\label{Prop:CourantAlgebroid}
Let $(L, [\cdot,\cdot]_{L}, \rho_L)$ be a  Lie algebroid, and consider the vector bundle $L\oplus L^{*}$.
Denote sections of $L\oplus L^{*}$ by $(X,\a), (Y,\b)$ etc., and introduce
the following maps: 
\begin{align}
 & \pi_{L\oplus L^{*}}((X,\a))=X \label{anch}, \\
 & \la (X,\a),(Y,\b)\ra_{L\oplus L^{*}} = \a(Y)+\b(X), \label{innerp} \\
 & \llbracket (X,\a), (Y,\b) \rrbracket_{L\oplus L^{*}} = ([X,Y]_L, \pounds^L_X\b-\pounds^L_Y\a+{\rm d}^L(i_{Y}\alpha)) \label{Dorf}
\end{align}
where ${\rm d}^L$ and $\pounds^{L}$ were defined in Definition \ref{Def:LieAlgDerivatives}. 
Then the quadruple 
\begin{equation}
 (L\oplus L^{*}, \langle\cdot,\cdot\rangle_{L\oplus L^{*}}, \pi_{L\oplus L^{*}}, \llbracket\cdot,\cdot\rrbracket_{L\oplus L^{*}})
\end{equation}
is a Courant algebroid, with inner product $\langle\cdot,\cdot\rangle_{L\oplus L^{*}}$, anchor $\pi_{L\oplus L^{*}}$ 
and Dorfman bracket $\llbracket\cdot,\cdot\rrbracket_{L\oplus L^{*}}$.
\end{proposition}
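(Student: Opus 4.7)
The plan is to verify, in the order they are listed, (a) that $\la\cdot,\cdot\ra_{L\oplus L^{*}}$ is a non-degenerate symmetric bilinear form and $\pi_{L\oplus L^{*}}$ is a bundle map (both are immediate: the first from the canonical pairing between $L$ and $L^{*}$, the second from the bundle map $\rho_{L}$ given in the Lie algebroid data), and then (b) the three axioms of Definition \ref{Def:CourantAlgebroid}. The principal technical input throughout will be a Cartan calculus for $L$. This is available because of Definition \ref{Def:LieAlgDerivatives} together with ${\rm d}^{L}\circ{\rm d}^{L}=0$ (Remark \ref{Remark:deRham}); in particular I will use the Lie-algebroid analogues
\[
[\pounds^{L}_{X},i_{Y}]=i_{[X,Y]_{L}},\qquad [\pounds^{L}_{X},\pounds^{L}_{Y}]=\pounds^{L}_{[X,Y]_{L}},\qquad [\pounds^{L}_{X},{\rm d}^{L}]=0,
\]
each of which follows formally from \eqref{GeneralizedLieDerivative} once one knows that ${\rm d}^{L}$ squares to zero and that $[\cdot,\cdot]_{L}$ is a Lie bracket with $\rho_{L}$ a morphism. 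A direct consequence that I will invoke repeatedly is the evaluation formula $(\pounds^{L}_{X}\w)(Y)=\rho_{L}(X)\w(Y)-\w([X,Y]_{L})$ for a 1-form $\w$, which is obtained by combining \eqref{GeneralizedLieDerivative} with \eqref{LieAlgExtDer0}--\eqref{LieAlgExtDer1}.

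For the first axiom, I would expand $\la\llb(X,\a),(Y,\b)\rrb_{L\oplus L^{*}},(Z,\g)\ra_{L\oplus L^{*}}$ using \eqref{Dorf} and \eqref{innerp}: the $L$-component contributes $\g([X,Y]_{L})$ while the $L^{*}$-component contributes $(\pounds^{L}_{X}\b)(Z)-(\pounds^{L}_{Y}\a)(Z)+({\rm d}^{L}i_{Y}\a)(Z)$. Adding the symmetric term $\la(Y,\b),\llb(X,\a),(Z,\g)\rrb\ra$ and substituting the evaluation formula above, all contributions in which $X$ does not act as a derivation cancel pairwise (the $\b([X,Z]_{L})$, $\g([X,Y]_{L})$, $\a([Y,Z]_{L})$ and $\rho_{L}(Z)\a(Y)$ terms annihilate in antisymmetric pairs), and the remainder collapses exactly to $\rho_{L}(X)(\b(Z)+\g(Y))=\pi((X,\a))\la(Y,\b),(Z,\g)\ra$. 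The second axiom is the quickest: $\llb(X,\a),(X,\a)\rrb_{L\oplus L^{*}}=(0,{\rm d}^{L}(i_{X}\a))$, because the two Lie-derivative terms cancel by skew-symmetry of $[\cdot,\cdot]_{L}$; pairing this with $(Y,\b)$ and applying \eqref{LieAlgExtDer0} yields $\rho_{L}(Y)(\a(X))$, which matches $\tfrac{1}{2}\rho_{L}(Y)(2\a(X))$ on the right-hand side.

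The main obstacle is the third axiom (Jacobi identity for the Dorfman bracket), which is where the proof does real work. The $L$-component reduces immediately to ${\rm Jac}_{L}\equiv 0$ from Definition \ref{Def:LieAlgebroid}, so the task is the $L^{*}$-component. The strategy is to expand both sides using \eqref{Dorf} and classify the resulting one-form-valued terms into four types: $\pounds^{L}\pounds^{L}$, mixed $\pounds^{L}{\rm d}^{L}i$ and ${\rm d}^{L}i\,\pounds^{L}$, ${\rm d}^{L}i\,{\rm d}^{L}i$, and those containing $[\cdot,\cdot]_{L}$. The $\pounds^{L}\pounds^{L}$-terms collapse using $[\pounds^{L}_{X},\pounds^{L}_{Y}]=\pounds^{L}_{[X,Y]_{L}}$; the mixed terms are handled by $[\pounds^{L}_{X},{\rm d}^{L}]=0$ together with $[\pounds^{L}_{X},i_{Y}]\g=i_{[X,Y]_{L}}\g$, which produces exactly the contributions needed to match the ${\rm d}^{L}i_{[X,Y]_{L}}\g$ piece coming from the outer bracket on the right-hand side; and the double ${\rm d}^{L}i\,{\rm d}^{L}i$ terms vanish since ${\rm d}^{L}\circ{\rm d}^{L}=0$. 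This follows the template of the standard proof in \cite[Chap.~3]{Gualtieri} for $L=TM$ and goes through verbatim once the Cartan calculus on $L$ is in place. Finally, the auxiliary identities \eqref{IdentityCourant1}--\eqref{IdentityCourant2} need not be verified separately, since, as noted in the text after Definition \ref{Def:CourantAlgebroid}, they are formal consequences of the three axioms just established.
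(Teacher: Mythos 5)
Your verification is correct, but there is nothing in the paper to compare it against: the paper states Proposition \ref{Prop:CourantAlgebroid} as a known result, citing Example 2.6 of \cite{Svo18} (the construction is the Liu--Weinstein--Xu Courant algebroid of a Lie bialgebroid, specialized to the case where the bracket on $L^{*}$ is trivial), and supplies no proof. What you have written is the standard textbook verification that the paper delegates to the references. Your use of the Cartan calculus on $L$ --- the identities $[\pounds^{L}_{X},i_{Y}]=i_{[X,Y]_{L}}$, $[\pounds^{L}_{X},\pounds^{L}_{Y}]=\pounds^{L}_{[X,Y]_{L}}$, $[\pounds^{L}_{X},{\rm d}^{L}]=0$ and ${\rm d}^{L}\circ{\rm d}^{L}=0$ --- is exactly the right input, and these do hold for any Lie algebroid for the reasons you give (the last two are immediate from Cartan's formula once ${\rm d}^{L}$ squares to zero; the first needs the explicit formula \eqref{LieAlgExtDer} but is standard). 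I checked your cancellation pattern for axiom~1 via the evaluation formula $(\pounds^{L}_{X}\omega)(Z)=\rho_{L}(X)\omega(Z)-\omega([X,Z]_{L})$, your computation $\llb (X,\a),(X,\a)\rrb=(0,{\rm d}^{L}(i_{X}\a))$ for axiom~2, and the term-by-term collapse of the $L^{*}$-component of the Jacobi identity for axiom~3; all are correct. One small point worth making explicit: the anchor of the Courant algebroid must land in $TM$, so wherever $\pi_{L\oplus L^{*}}$ appears acting on functions you are really using $\rho_{L}\circ\pi_{L\oplus L^{*}}$; you acknowledge this in passing and it does not affect the argument, but the composition is what actually satisfies Definition \ref{Def:CourantAlgebroid} (this matters later in Proposition \ref{Prop:Vaisman}, where the anchor $P$ genuinely maps into $TM\otimes\mathbb{C}$).
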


This procedure for constructing Courant algebroids from Lie algebroids is particularly natural in manifolds with a foliation,
as described by Vaisman in \cite[Section 3]{Vaisman2}.
The following result is essentially Proposition 3.1 in \cite{Vaisman2} (see also \cite[Section 3.4]{Svo18}): 

\begin{proposition}\label{Prop:Vaisman}
Let $(g,K)$ be an almost-para Hermitian structure on a $d$-dimensional real manifold $M$ (Def. \ref{Def:almostParaHermitian}). 
Let $L$ and $\tilde{L}$ be the two eigenbundles of $K$, and assume that $L$ is involutive. 
Then the quadruple
\begin{equation*}
 (TM\otimes\mathbb{C},g,P,\llb\cdot,\cdot\rrb)
\end{equation*}
is a Courant algebroid, where the inner product $g$ is the metric, the anchor $P$ is the projector 
\eqref{projectors} to $L$, and the Dorfman bracket is
\begin{equation*}
\llbracket X, Y \rrbracket =
 \varphi^{-1}(\llbracket (x,\tilde{x}_{\flat}), (y,\tilde{y}_{\flat}) \rrbracket_{L\oplus L^{*}}).
\end{equation*}
Here, $\varphi$ and $\flat$ are the isomorphisms \eqref{isom} and \eqref{isom0}, 
$\llb\cdot,\cdot\rrb_{L\oplus L^{*}}$ is the Dorfman bracket \eqref{Dorf}, and 
$x=PX$, $\tilde{x}=\tilde{P}X$, $y=PY$, $\tilde{y}=\tilde{P}Y$ are the decompositions of vectors induced by $K$.
\end{proposition}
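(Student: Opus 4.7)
The plan is to reduce the statement to the previous two propositions by transporting the canonical Courant algebroid structure on $L\oplus L^{*}$ back to $TM\otimes\mathbb{C}$ along the isomorphism $\varphi$ of \eqref{isom}. Since $L$ is assumed involutive, Example \ref{Ex:LieAlgFoliation} gives us a Lie algebroid $(L,[\cdot,\cdot]|_{L},\mathbb{I}|_{L})$, and then Proposition \ref{Prop:CourantAlgebroid} applied to this Lie algebroid produces the Courant algebroid
$(L\oplus L^{*},\langle\cdot,\cdot\rangle_{L\oplus L^{*}},\pi_{L\oplus L^{*}},\llb\cdot,\cdot\rrb_{L\oplus L^{*}})$. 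So the task is really just to check that pulling back this structure by $\varphi$ yields exactly the inner product $g$, the anchor $P$, and the bracket as written in the statement.

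For the inner product, I would decompose $X=x+\tilde{x}$ and $Y=y+\tilde{y}$ according to $K$; using that $L$ and $\tilde{L}$ are isotropic with respect to $g$, one gets $g(X,Y)=g(x,\tilde{y})+g(\tilde{x},y)=\tilde{y}_{\flat}(x)+\tilde{x}_{\flat}(y)$, which is precisely $\langle\varphi(X),\varphi(Y)\rangle_{L\oplus L^{*}}$ as defined in \eqref{innerp}. For the anchor, the identity $\pi_{L\oplus L^{*}}(\varphi(X))=\pi_{L\oplus L^{*}}(x,\tilde{x}_{\flat})=x=PX$ is immediate from \eqref{anch} and the definition of $P$ in \eqref{projectors}. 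The bracket on $TM\otimes\mathbb{C}$ is then simply \emph{defined} by the formula
$\llb X,Y\rrb=\varphi^{-1}(\llb\varphi(X),\varphi(Y)\rrb_{L\oplus L^{*}})$, and this matches the expression in the statement.

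With these three identifications in place, the three Courant axioms of Definition \ref{Def:CourantAlgebroid} hold on $(TM\otimes\mathbb{C},g,P,\llb\cdot,\cdot\rrb)$ if and only if they hold on $(L\oplus L^{*},\langle\cdot,\cdot\rangle_{L\oplus L^{*}},\pi_{L\oplus L^{*}},\llb\cdot,\cdot\rrb_{L\oplus L^{*}})$, because $\varphi$ is a fibrewise linear isomorphism of vector bundles that intertwines all of the structures involved, and because the anchor appearing in the axioms is pulled back consistently on both sides. The latter axioms are true by Proposition \ref{Prop:CourantAlgebroid}. This establishes the claim.

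The only genuinely non-formal step, and the one I would write out carefully, is the identification $g(X,Y)=\langle\varphi(X),\varphi(Y)\rangle_{L\oplus L^{*}}$, since it is this fact that forces the isomorphism $\varphi$ built from $\flat$ to intertwine the two inner products without any additional factor. Everything else is functorial transport of structure along $\varphi$, and the involutivity of $L$ enters exclusively through its role in making $(L,[\cdot,\cdot]|_{L},\mathbb{I}|_{L})$ a Lie algebroid in the first place. No assumption on $\tilde{L}$ is used, which is consistent with the ``half-integrability'' philosophy emphasized in Definition \ref{Def:LparaHermitian} and Remark \ref{Remark:integrability}.
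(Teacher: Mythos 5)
Your proposal is correct and follows essentially the same route as the paper: construct the Lie algebroid on the involutive eigenbundle $L$ via Example \ref{Ex:LieAlgFoliation}, obtain the Courant algebroid on $L\oplus L^{*}$ from Proposition \ref{Prop:CourantAlgebroid}, and transport it to $TM\otimes\mathbb{C}$ along $\varphi$, with the isotropy of $L$ and $\tilde{L}$ giving the identification $g(X,Y)=\langle\varphi(X),\varphi(Y)\rangle_{L\oplus L^{*}}$. The only difference is cosmetic: the paper also writes out the transported Dorfman bracket explicitly (its eq.\ \eqref{dorfTB}), whereas you leave it defined abstractly by pullback, which suffices for the claim as stated.
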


\begin{proof}
The tangent bundle splits as $TM\otimes\mathbb{C}=L\oplus\tilde{L}$, where $L$ and $\tilde{L}$ are the 
eigenbundles of $K$ and $L$ is involutive.
From Example \ref{Ex:LieAlgFoliation}, the distribution $L$ defines a Lie algebroid $(L,[\cdot,\cdot]_L,\mbb{I}_L)$.
Therefore, applying Proposition \ref{Prop:CourantAlgebroid}, the quadruple
\begin{equation}\label{Courant1}
 (L\oplus L^{*},\langle\cdot,\cdot\rangle_{L\oplus L^{*}}, \pi_{L\oplus L^{*}}, \llbracket\cdot,\cdot\rrbracket_{L\oplus L^{*}})
\end{equation}
is a Courant algebroid, where the anchor, inner product and Dorfman bracket are given by \eqref{anch}, 
\eqref{innerp} and \eqref{Dorf}.
Now one simply has to transfer this structure to $TM\otimes\mathbb{C}$ using 
the isomorphisms $\varphi$ and $\flat$ (eqs. \eqref{isom}, \eqref{isom0}).
Any elements $X,Y\in \Gamma(TM\otimes\mathbb{C})$ can be written as 
$X=x+\tilde{x}$ and $Y=y+\tilde{y}$, where $x,y,\in\Gamma(L)$ and $\tilde{x},\tilde{y}\in\Gamma(\tilde{L})$.
Using these decompositions, and the fact that $g|_{L}=0=g|_{\tilde{L}}$, a brief calculation shows that
\begin{equation*}
 g(X,Y) = \langle (x,\tilde{x}_{\flat}), (y,\tilde{y}_{\flat}) \rangle_{L\oplus L^{*}},
\end{equation*}
so the metric $g$ plays the role of the inner product in Def. \ref{Def:CourantAlgebroid}.
Furthermore, recalling the projector $P$ given in \eqref{projectors} and using again $X=x+\tilde{x}$, 
we have
\begin{equation*}
 P(x+\tilde{x})=x=\pi_{L\oplus L^{*}}((x,\tilde{x}_{\flat}))
\end{equation*}
so $P$ plays the role of the anchor. Finally, let us see the Dorfman bracket.
Since any elements $\a,\b\in \Gamma(L^{*})$ can be written as 
$\a=\tilde{x}_{\flat}$, $\b=\tilde{y}_{\flat}$ for some $\tilde{x},\tilde{y}\in\Gamma(\tilde{L})$,
the Dorfman bracket \eqref{Dorf} is
\begin{equation}
 \llbracket (x,\tilde{x}_{\flat}), (y,\tilde{y}_{\flat}) \rrbracket_{L\oplus L^{*}} = 
 ([x,y]_L, \pounds^{L}_{x} \tilde{y}_{\flat} - \pounds^{L}_{y} \tilde{x}_{\flat} + {\rm d}^{L}g(\tilde{x},y)) \label{dorf2}
\end{equation}
where we recall that by definition, $\tilde{x}_{\flat}(y)\equiv g(\tilde{x},y)$.
The right hand side of \eqref{dorf2} is, of course, an element of $L\oplus L^{*}$.
In order to map this to $L\oplus\tilde{L}$, one uses the inverse isomorphisms $\sharp:L^{*}\to \tilde{L}$ and
$\varphi^{-1}:L\oplus L^{*}\to L\oplus\tilde{L}$, which map
$L\oplus L^{*}\ni(z,\gamma)\mapsto \varphi^{-1}((z,\gamma))= z+\gamma^{\sharp} \in L\oplus\tilde{L}$.
Replacing $(z,\gamma)$ by the right hand side of \eqref{dorf2}, we get
\begin{align}
\nonumber \llbracket x+\tilde{x}, y+\tilde{y} \rrbracket \equiv {}& 
 \varphi^{-1}(\llbracket (x,\tilde{x}_{\flat}), (y,\tilde{y}_{\flat}) \rrbracket_{L\oplus L^{*}})\\
 ={}& [x,y]+\left( \pounds^L_{x}\tilde{y}_{\flat} - \pounds^L_{y}\tilde{x}_{\flat}+{\rm d}^L g(\tilde{x},y) \right)^{\sharp}.
 \label{dorfTB}
\end{align}
(This is the unnumbered equation above eq. (20) in \cite{Svo18}.)
\end{proof}

\subsection{Conformal geometry and the Lee form}\label{Sec:Conformal}

As noticed in Remark \ref{Remark:Conformal}, conformal invariance plays an important role in our work.
Recall that, given a manifold $M$ and a metric $g$ on it, the conformal class of $g$ 
is defined as $[g]=\{\Omega^{2}g \;|\; \Omega\in C^{\infty}(M), \Omega>0 \}$.
We refer to the pair $(M,[g])$ as a conformal structure.
The elements in $[g]$ are called conformal representatives.

\begin{definition}
Let $(M, [g])$ be a conformal structure. 
A {\em Weyl connection} is a linear, torsion-free connection ${}^{\rm w}\nabla$ 
such that for any conformal representative $g\in[g]$, it holds ${}^{\rm w}\nabla g = -2f\otimes g$ 
for some 1-form $f$. We call $f$ the {\em Weyl 1-form}. 
\end{definition}

Under a change of conformal representative $g\to\hat{g}=\Omega^{2}g$, by definition we must have 
${}^{\rm w}\nabla \hat{g} = -2\hat{f}\otimes \hat{g}$ for some 1-form $\hat{f}$. Replacing $\hat{g}=\Omega^{2}g$ on 
the left hand side, we see that $f$ and $\hat{f}$ are related by (from now on we will frequently use the abstract index notation)
\begin{equation}\label{TransformationWeylForm}
 \hat{f}_{a} = f_{a} - ({\rm d}\log\Omega)_{a}.
\end{equation}

If we choose a metric $g\in[g]$ with Levi-Civita connection $\nabla$, the relationship between 
${}^{\rm w}\nabla$ and $\nabla$ is given by a tensor field $Q:\Gamma(TM)\times\Gamma(TM)\to\Gamma(TM)$.
Explicitly, one has
\begin{equation*}
 g({}^{\rm w}\nabla_{X}Y,Z) = g(\nabla_{X}Y,Z)+g(Q(X,Y),Z),
\end{equation*}
where
\begin{equation*}
 g(Q(X,Y),Z)=f(X)g(Y,Z)+f(Y)g(X,Z)-f(Z)g(X,Y).
\end{equation*}
In index notation this can be expressed as 
\begin{equation}\label{WeylConnection1}
 {}^{\rm w}\nabla_{a}Y^{b}=\nabla_{a}Y^{b}+Q_{ac}{}^{b}Y^{c},
\end{equation}
where 
\begin{equation}\label{WeylConnection2}
 Q_{ac}{}^{b} = \delta^{b}_{c}f_{a} + \delta^{b}_{a}f_{c} - g^{bd}g_{ac}f_{d}.
\end{equation}

There are interesting relations between the 1-form $f$ associated to a Weyl connection and 
certain properties of almost para-Hermitian structures. This dates back to the original work of H. C. Lee in \cite{Lee-conformal} 
on almost symplectic manifolds $(M,a_{ab})$: $M$ is a $d$-dimensional manifold (with $d$ even) 
and $a_{ab}=a_{[ab]}$ is a non-degenerate 2-form.
The inverse of $a_{ab}$ is $(a^{-1})^{ab}$, so that $(a^{-1})^{ac}a_{cb}=\delta^{a}{}_{b}$.
Lee defines the ``curvature tensor'' of $a_{ab}$ as $({\rm d}a)_{abc}$, and the ``curvature vector'' as
$(a^{-1})^{bc}({\rm d}a)_{abc}$.
Two such spaces $(M, a_{ab})$ and $(\hat{M},\hat{a}_{ab})$ are said to be ``conformal'' to each other if there exists 
a scalar field $\phi$ such that 
\begin{equation}\label{LeeConformal}
 \hat{a}_{ab}=\phi a_{ab}.
\end{equation}
Assuming $d>2$ and defining the 1-form
\begin{equation*}
 k_{a} := \tfrac{1}{(d-2)}(a^{-1})^{bc}({\rm d}a)_{abc},
\end{equation*}
it is observed in \cite{Lee-conformal} that $\hat{k}_{a}$ and $k_{a}$ are related by
\begin{equation}\label{LeeConformalRelation2}
 \hat{k}_{a} = k_{a} - ({\rm d}\log\phi)_{a}
\end{equation}
It follows from this that $({\rm d}k)_{ab}=({\rm d}\hat{k})_{ab}$, so $({\rm d}k)_{ab}$ is called the ``first conformal curvature 
tensor'' by Lee. 
There is also a ``second conformal curvature tensor'': a 3-form $c_{abc}$ given by
\begin{equation}\label{LeeSecondConformalCurvature}
 c_{abc} := ({\rm d}a)_{abc} + (k\wedge a)_{abc}.
\end{equation}
The tensors $c_{abc}$ and $\hat{c}_{abc}$ are related by $\hat{c}_{abc}=\phi c_{abc}$.

Now, as already noticed, a manifold with an almost-para-Hermitian structure $(g,K)$ comes automatically with 
an almost-symplectic form $\omega_{ab}=g_{cb}K^{c}{}_{a}$. Therefore, we may take $a_{ab}\equiv \omega_{ab}$ in Lee's construction.
Since $K^{a}{}_{c}K^{c}{}_{b}=\delta^{a}{}_{b}$, we have $(\omega^{-1})^{ab}=g^{ac}g^{bd}\omega_{cd}\equiv \omega^{ab}$.
Under a conformal transformation of the metric, $g_{ab}\to\hat{g}_{ab}=\Omega^{2}g_{ab}$, the new almost-symplectic 
2-form is $\hat{\omega}_{ab} = \Omega^{2}\omega_{ab} $. Thus, we take $\phi = \Omega^{2}$ in \eqref{LeeConformal} 
and subsequent formulas, and we define:
\begin{definition}
Let $([g],K)$ be an almost (para-)Hermitian conformal structure on a $d$-dimensional manifold $M$, with $d>2$.
For a choice $g\in[g]$, let $\omega$ be the associated almost-symplectic 2-form. 
We define the {\em Lee form} as
\begin{equation}\label{LeeForm}
 \theta_{a} = \tfrac{1}{2(d-2)}\omega^{bc}({\rm d}\omega)_{abc}.
\end{equation}
\end{definition}

If $\nabla_{a}$ is the Levi-Civita connection of $g_{ab}$, a short calculation using $\omega^{ab}\omega_{ab}=-d$ 
shows that \eqref{LeeForm} can be written as
\begin{equation*}
 \theta_{a} = \tfrac{1}{(d-2)}K^{b}{}_{c}\nabla_{b}K^{c}{}_{a}.
\end{equation*}
This is equivalent to $\theta_{a}=-\frac{1}{(d-2)}K^{c}{}_{a}\nabla_{b}K^{b}{}_{c}$, or in index-free notation:
\begin{equation}\label{LeeForm2}
 \theta(X)=-\tfrac{1}{(d-2)}(\delta \omega)(KX)
\end{equation}
for any vector field $X$, where $\delta$ is the codifferential.

From \eqref{LeeConformalRelation2} and \eqref{LeeForm} we see that 
the Lee form $\theta_{a}$ has exactly the transformation property of a Weyl 1-form under conformal transformations 
of the metric $g_{ab}\to \hat{g}_{ab}=\Omega^{2}g_{ab}$ 
(i.e. $\theta_{a} \to \hat{\theta}_{a} = \theta_{a} - ({\rm d}\log\Omega)_{a}$).
Therefore, from the discussion above, we deduce:
\begin{proposition}\label{Prop:Lee}
An almost (para-)Hermitian structure $(g,K)$ on $M$ induces a natural Weyl connection in the conformal structure $(M,[g])$,
by taking the Lee form as the Weyl 1-form, $\theta_{a} \equiv f_{a}$.
\end{proposition}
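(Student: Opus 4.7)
The plan is to verify two things: first, that the Lee form transforms as a Weyl $1$-form; second, that using it as such actually produces a well-defined connection, independent of the chosen conformal representative.

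First, I would make explicit the transformation property of $\theta_{a}$. Under $g_{ab} \mapsto \hat{g}_{ab} = \Omega^{2} g_{ab}$ the fundamental $2$-form rescales as $\hat{\omega}_{ab} = \Omega^{2}\omega_{ab}$, since $\hat{\omega}_{ab} = \hat{g}_{cb}K^{c}{}_{a}$ and $K$ is unchanged. This is exactly the situation \eqref{LeeConformal} studied by Lee with $\phi = \Omega^{2}$, so \eqref{LeeConformalRelation2} gives $\hat{\theta}_{a} = \theta_{a} - ({\rm d}\log\Omega)_{a}$, which is the transformation \eqref{TransformationWeylForm} of a Weyl $1$-form.

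Next, I would define the candidate connection. Pick any $g\in[g]$ with Levi-Civita connection $\nabla$, and set ${}^{\rm w}\nabla_{a}Y^{b} := \nabla_{a}Y^{b} + Q_{ac}{}^{b}Y^{c}$ with $Q_{ac}{}^{b} = \delta^{b}_{c}\theta_{a} + \delta^{b}_{a}\theta_{c} - g^{bd}g_{ac}\theta_{d}$ as in \eqref{WeylConnection1}--\eqref{WeylConnection2}, using $\theta$ in place of $f$. A direct computation of ${}^{\rm w}\nabla_{a}g_{bc}$ using $\nabla g = 0$ and the explicit form of $Q$ shows ${}^{\rm w}\nabla_{a}g_{bc} = -2\theta_{a}g_{bc}$, so the metric compatibility condition required of a Weyl connection is satisfied for this representative.

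The main thing to check, and the step where one must be careful, is that ${}^{\rm w}\nabla$ does not depend on the choice of representative $g\in[g]$. For this I would combine the classical conformal transformation formula for the Levi-Civita connection, $\hat{\nabla}_{a}Y^{b} = \nabla_{a}Y^{b} + C_{ac}{}^{b}Y^{c}$ with $C_{ac}{}^{b} = \delta^{b}_{c}\Upsilon_{a} + \delta^{b}_{a}\Upsilon_{c} - g^{bd}g_{ac}\Upsilon_{d}$ and $\Upsilon_{a} = ({\rm d}\log\Omega)_{a}$, with the transformation $\hat{\theta}_{a} = \theta_{a} - \Upsilon_{a}$. Since $C_{ac}{}^{b}$ and $Q_{ac}{}^{b}$ have the same algebraic form in their respective $1$-forms, one gets $\hat{Q}_{ac}{}^{b} = Q_{ac}{}^{b} - C_{ac}{}^{b}$, and therefore the two sides of $\hat{\nabla} + \hat{Q}$ and $\nabla + Q$ coincide. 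Hence the connection built from $(\hat{g},\hat{\theta})$ equals the one built from $(g,\theta)$, so ${}^{\rm w}\nabla$ depends only on $([g],K)$. Torsion-freeness is immediate because both $\nabla$ and $Q_{ac}{}^{b}$ are symmetric in the lower indices $(a,c)$, completing the verification that ${}^{\rm w}\nabla$ is a Weyl connection with Weyl $1$-form $\theta$.
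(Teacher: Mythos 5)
Your proof is correct and follows essentially the same route as the paper: the key step in both is that the rescaling $\hat{\omega}_{ab}=\Omega^{2}\omega_{ab}$ combined with Lee's relation \eqref{LeeConformalRelation2} gives $\hat{\theta}_{a}=\theta_{a}-({\rm d}\log\Omega)_{a}$, which is precisely the Weyl 1-form transformation \eqref{TransformationWeylForm}. The additional checks you carry out (metric compatibility ${}^{\rm w}\nabla_{a}g_{bc}=-2\theta_{a}g_{bc}$, torsion-freeness, and independence of the conformal representative via the cancellation $\hat{Q}_{ac}{}^{b}=Q_{ac}{}^{b}-C_{ac}{}^{b}$) are all valid and merely make explicit the standard facts about Weyl connections that the paper leaves implicit in \eqref{WeylConnection1}--\eqref{WeylConnection2}.
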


This motivates the following definition:
\begin{definition}\label{Def:CompatibleWeylConnection}
Let $([g],K)$ be an almost (para-)Hermitian conformal structure on $M$, and let ${}^{\rm w}\nabla$ be a Weyl connection. 
We say that ${}^{\rm w}\nabla$ and $K$ are {\em compatible} if the Weyl 1-form associated to ${}^{\rm w}\nabla$ and 
the Lee form associated to $K$ coincide (in other words, ${}^{\rm w}\nabla$ is induced by $K$).
\end{definition}

\begin{remark}
In the literature, see e.g. \cite[Section 4]{Gover} (also \cite{Bailey}), the usual definition of compatibility of
a Weyl connection ${}^{\rm w}\nabla$ with a Hermitian structure $K$ is that they must satisfy
\begin{equation}\label{Compatibility2}
 {}^{\rm w}\nabla_{a}K^{a}{}_{b}=0.
\end{equation}
Using \eqref{WeylConnection1}-\eqref{WeylConnection2}, it follows easily that this is true if and only if 
the condition of Definition \ref{Def:CompatibleWeylConnection} holds, namely the Weyl 1-form is equal to the Lee form.
Thus, the two definitions coincide.
However, we have chosen Def. \ref{Def:CompatibleWeylConnection} since it is an immediate consequence of Lee's construction, 
while the geometric meaning of requiring \eqref{Compatibility2} as a compatibility condition is not clear to us.
\end{remark}

Finally, from Lee's results \cite{Lee-conformal} and the definitions given in Section \ref{Sec:paraHermitian} we have:
\begin{proposition}[Lee \cite{Lee-conformal}]
Let $(g,K)$ be an almost (para-)Hermitian structure on a $d$-dimensional manifold $M$. Then:
\begin{enumerate}
\item $(g,K)$ is almost (para-)K\"ahler if and only if the Lee form \eqref{LeeForm} vanishes.
\item For $d=4$, $(g,K)$ is locally conformally almost (para-)K\"ahler if and only if the Lee form is closed.
\item For $d>4$, $(g,K)$ is locally conformally almost (para-)K\"ahler if and only if $c_{abc}\equiv 0$ (where 
$c_{abc}$ is defined in \eqref{LeeSecondConformalCurvature}).
\end{enumerate}
\end{proposition}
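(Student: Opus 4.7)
The plan is to prove the three parts in sequence, after a common preparatory step. First I would verify, by contracting \eqref{LeeSecondConformalCurvature} with $\omega^{bc}$ and using $\omega^{bc}\omega_{bc}=-d$ together with $\omega^{bc}\omega_{ca}=\delta^{b}{}_{a}$, that Lee's tensor $c_{abc}$ is $\omega$-primitive, i.e.\ $\omega^{bc}c_{abc}=0$. This produces the pointwise decomposition $({\rm d}\omega)_{abc}=c_{abc}-(k\wedge\omega)_{abc}$ of ${\rm d}\omega$ into its primitive part and its $\omega$-trace part. Next I would establish the conformal covariance $\hat c_{abc}=\Omega^{2}c_{abc}$ by substituting $\hat\omega=\Omega^{2}\omega$ and the transformation \eqref{LeeConformalRelation2} into the definition \eqref{LeeSecondConformalCurvature}, with the $({\rm d}\log\Omega)\wedge\omega$ cross terms cancelling cleanly.

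For part (1), the forward implication is immediate from \eqref{LeeForm}. For the converse I would use the splitting above: if $\theta=0$ (equivalently $k=0$) then $({\rm d}\omega)_{abc}=c_{abc}$ is primitive, and in $d=4$ every $3$-form is of the form $\alpha\wedge\omega$ (the Lefschetz map $L:\wedge^{1}\to\wedge^{3}$ is an isomorphism by dimension count), so $c\equiv 0$ and hence ${\rm d}\omega=0$.

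For part (2), with $d=4$, I would combine (1) with the conformal transformation of the Lee form: $(g,K)$ is locally conformally almost (para-)K\"ahler iff some $\hat g=\Omega^{2}g$ produces $\hat\theta=\theta-({\rm d}\log\Omega)=0$, i.e.\ $\theta={\rm d}\log\Omega$ locally, which by the Poincar\'e lemma is solvable iff ${\rm d}\theta=0$.

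For part (3), with $d>4$, the almost (para-)K\"ahler condition on a conformal rescaling requires both $\hat\theta=0$ and $\hat c=0$. The conformal covariance of $c$ makes $\hat c=0\iff c\equiv 0$ an invariant of the conformal class. Assuming $c=0$, differentiating ${\rm d}\omega=-k\wedge\omega$ yields $0={\rm d}k\wedge\omega=L({\rm d}k)$; for $n=d/2\geq 3$ the hard-Lefschetz isomorphism $L^{n-2}:\wedge^{2}\to\wedge^{2n-2}$ then forces ${\rm d}k={\rm d}\theta=0$, so the Poincar\'e lemma produces a local $\Omega$ with $\hat\theta=0$, while $\hat c=\Omega^{2}c=0$ automatically. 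The hardest step I anticipate is the careful coefficient and sign bookkeeping, in particular verifying both the $\omega$-primitivity of $c$ and the conformal covariance $\hat c=\Omega^{2}c$, given the factor-of-two normalization difference between the paper's $\theta$ and Lee's $k$.
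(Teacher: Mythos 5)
Your preparatory computations check out: $\omega^{bc}c_{abc}=0$ does follow from $\omega^{bc}\omega_{ca}=\delta^{b}{}_{a}$ and $\omega^{bc}\omega_{bc}=-d$, and the cancellation giving $\hat{c}_{abc}=\Omega^{2}c_{abc}$ works exactly as you describe (the paper records this identity just below \eqref{LeeSecondConformalCurvature}). Parts (2) and (3) are also sound: reducing ``locally conformally almost (para-)K\"ahler'' to local exactness of $\theta$ via the transformation law \eqref{LeeConformalRelation2}, and, in part (3), the chain $c=0\Rightarrow {\rm d}\omega=-k\wedge\omega\Rightarrow {\rm d}k\wedge\omega=0\Rightarrow {\rm d}k=0$ using injectivity of the Lefschetz map $\beta\mapsto\beta\wedge\omega$ on 2-forms for $d\geq 6$, is precisely the right mechanism. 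Be aware that the paper itself carries out none of this: its proof disposes of item 1 as ``following from the definitions'' and cites Lee's Theorem 5 for items 2 and 3, so your write-up is a self-contained reconstruction rather than a variant of the paper's argument; the primitive-plus-trace decomposition of ${\rm d}\omega$ is the correct engine for it.

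The one genuine gap is in the converse of part (1). The statement is for arbitrary $d$, but your argument that $\theta=0$ forces ${\rm d}\omega=0$ invokes the isomorphism between 1-forms and 3-forms given by wedging with $\omega$, which kills the primitive part of ${\rm d}\omega$ only when $d=4$. For $d\geq 6$ the primitive component $c_{abc}$ of ${\rm d}\omega$ is invisible to the trace, and the implication is in fact false as stated: a nearly K\"ahler 6-manifold such as $S^{6}$ has ${\rm d}\omega$ of type $(3,0)+(0,3)$, hence vanishing Lee form, yet ${\rm d}\omega\neq 0$. The correct general statement --- which you implicitly use in your own part (3) when you require both $\hat{\theta}=0$ and $\hat{c}=0$ --- is that $(g,K)$ is almost (para-)K\"ahler iff $\theta=0$ \emph{and} $c_{abc}\equiv 0$, the second condition being automatic only in four dimensions. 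So you should either restrict item 1 to $d=4$ or add the hypothesis $c\equiv 0$; silently specializing to $d=4$ leaves the claim unproved (indeed unprovable) in the remaining cases. This is a defect of the proposition as stated, not of your strategy, but it must be flagged rather than passed over.
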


\begin{proof}
The first item follows from the definitions of almost (para-)K\"ahler (Def. \ref{Def:almostparaKahler}, ${\rm d}\omega = 0$)
and the Lee form \eqref{LeeForm}.
The second and third items are Theorem 5 in \cite{Lee-conformal}.
\end{proof}

\section{Almost para-Hermitian structures in four dimensions}\label{Sec:4D}

\subsection{Self-dual forms}

Consider a 4-dimensional, orientable, real manifold $M$, equipped with a metric.
Let $\Lambda^{k}(M)=\wedge^{k} T^{*}M$ be the space of $k$-forms.
The Hodge star operator $*$ can be seen as a map $ * : \Lambda^{2}(M)\to\Lambda^{2}(M)$
satisfying $*^{2}=(-1)^{s}$, where $s$ is the number of $(-1)$'s appearing in the signature of the metric.
Thus, in Riemannian ($s=0$) and split ($s=2$) signature, $*$ always defines a para-complex structure in $\Lambda^{2}(M)$, 
whereas in Lorentz signature ($s=3$) it defines a complex structure in $\Lambda^{2}(M)$.
The eigenvalues of $*$ are $\pm \sqrt{(-1)^{s}}$. The space of 2-forms is then decomposed as 
\begin{equation}\label{SDdecomposition}
 \Lambda^{2}(M)=\Lambda^{2}_{+}(M)\oplus\Lambda^{2}_{-}(M)
\end{equation}
where $\Lambda^{2}_{+}(M)$ is the eigenspace corresponding to the eigenvalue $+\sqrt{(-1)^{s}}$, 
and $\Lambda^{2}_{-}(M)$ corresponds to $-\sqrt{(-1)^{s}}$.
Elements of $\Lambda^{2}_{+}(M)$ are called {\em self-dual} (SD) 2-forms, and elements of $\Lambda^{2}_{-}(M)$
are {\em anti-self-dual} (ASD) 2-forms.

\begin{theorem}\label{Theorem:SD}
Let $M$ be a real, 4-dimensional, orientable manifold with a metric $g$, let $A\in{\rm Aut}(TM\otimes\mathbb{C})$ 
and let $W$ be a bilinear map defined by
\begin{equation}\label{Def:W}
 W(X,Y) = g(AX,Y)
\end{equation}
for any vectors $X,Y$. Furthermore, let $a$ be a non-vanishing scalar\footnote{For the case $a=0$, see 
Remark \ref{Remark:NullKahler} below.}.
Then $A$ satisfies the conditions
\begin{align}
 A^{2} ={}& a \mathbb{I}, \label{Ksquare} \\
 g(AX,AY) ={}& -a g(X,Y) \label{gKK}
\end{align}
if and only if the map $W$ is either a self-dual or an anti-self-dual 2-form.
\end{theorem}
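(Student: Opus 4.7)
The plan is to reduce the theorem to a purely algebraic statement about antisymmetric 2-tensors in 4D and then carry out the analysis using the 2-spinor decomposition. Throughout, the hypothesis $a\neq 0$ is in force.

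First I would show that, under \eqref{Ksquare}, condition \eqref{gKK} is equivalent to the antisymmetry of $W$. Indeed, substituting $Y\mapsto AY$ in \eqref{gKK} and using $A^{2}Y=aY$ yields $a\,g(AX,Y)=-a\,g(X,AY)$, whence $W(X,Y)=-W(Y,X)$ since $a\neq 0$. Conversely, antisymmetry of $W$ means $g(AX,Z)=-g(X,AZ)$ for all $Z$, and taking $Z=AY$ together with \eqref{Ksquare} recovers \eqref{gKK}. The theorem therefore reduces to the following: for an antisymmetric $W_{ab}$, the endomorphism $A^{a}{}_{b}=g^{ac}W_{cb}$ satisfies $A^{2}=a\mathbb{I}$ for some $a\neq 0$ if and only if $W$ is SD or ASD.

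For this reduced claim I would invoke the 2-spinor decomposition (see Appendix \ref{Appendix:spinors}): any antisymmetric $W_{ab}$ splits as
\begin{equation*}
 W_{AA'BB'}=\phi_{AB}\epsilon_{A'B'}+\tilde\phi_{A'B'}\epsilon_{AB},
\end{equation*}
with $\phi_{AB}$ and $\tilde\phi_{A'B'}$ symmetric, these two pieces being respectively the ASD and SD parts of $W$. A short computation---relying on the standard 2-dimensional identity $\phi^{A}{}_{C}\phi^{C}{}_{B}=-\tfrac{1}{2}(\phi_{DE}\phi^{DE})\,\delta^{A}{}_{B}$ for any symmetric $\phi_{AB}$, and analogously for $\tilde\phi$---then yields
\begin{equation*}
 (A^{2})^{AA'}{}_{BB'}=-\tfrac{1}{2}\bigl(\phi_{DE}\phi^{DE}+\tilde\phi_{D'E'}\tilde\phi^{D'E'}\bigr)\,\delta^{A}{}_{B}\,\delta^{A'}{}_{B'}+2\,\phi^{A}{}_{B}\,\tilde\phi^{A'}{}_{B'}.
\end{equation*}
For this to equal $a\,\delta^{A}{}_{B}\,\delta^{A'}{}_{B'}$, the cross term $\phi^{A}{}_{B}\,\tilde\phi^{A'}{}_{B'}$ must be a scalar multiple of $\delta^{A}{}_{B}\,\delta^{A'}{}_{B'}$; but $\phi^{A}{}_{B}$ is traceless (because $\phi_{AB}$ is symmetric), so $\phi^{A}{}_{B}\propto \delta^{A}{}_{B}$ forces $\phi_{AB}=0$, and similarly $\tilde\phi^{A'}{}_{B'}\propto \delta^{A'}{}_{B'}$ forces $\tilde\phi_{A'B'}=0$. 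Hence exactly one of $\phi,\tilde\phi$ must vanish, i.e.\ $W$ is SD or ASD.

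In the converse direction, if $W$ is (A)SD then the cross term disappears and the displayed identity reads $W^{a}{}_{c}W^{c}{}_{b}=-\tfrac{1}{4}(W_{cd}W^{cd})\,\delta^{a}{}_{b}$, so $A^{2}=a\mathbb{I}$ with $a=-\tfrac{1}{4}W_{cd}W^{cd}$; the hypothesis $a\neq 0$ corresponds to the (A)SD form having non-vanishing norm. The step I expect to be the main obstacle is this cross-term argument: ruling out that a simultaneously nonzero SD part and nonzero ASD part could conspire so that $\phi^{A}{}_{B}\,\tilde\phi^{A'}{}_{B'}$ combines with the diagonal terms to give a multiple of the identity. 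The tracelessness of $\phi^{A}{}_{B}$ closes this off cleanly; without the 2-spinor decomposition the same verification via Hodge-star identities in the three signatures (where $*^{2}=\pm 1$) would be considerably more awkward.
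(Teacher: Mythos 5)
Your proof is correct, but it takes a genuinely different route from the one written in the paper. The paper's proof is purely tensorial: after deriving antisymmetry of $W$ from \eqref{Ksquare}--\eqref{gKK}, it writes the volume form as $\varepsilon_{abcd}=N\,W_{[ab}W_{cd]}$, fixes $N$ from the normalization $\varepsilon^{abcd}\varepsilon_{abcd}=(-1)^{s}4!$, and extracts ${}^{*}W_{ab}=\mp\sqrt{(-1)^{s}}\,W_{ab}$ in one direction and the identity $\delta^{a}{}_{b}=-4({}^{*}W^{de}W_{de})^{-1}\,{}^{*}W^{ac}W_{cb}$ in the other. You instead reduce to the algebraic claim via the clean observation that, given \eqref{Ksquare}, condition \eqref{gKK} is \emph{equivalent} to antisymmetry of $W$ (the paper only uses the forward implication), and then settle everything with the spinor splitting $W_{ab}=\phi_{AB}\epsilon_{A'B'}+\tilde\phi_{A'B'}\epsilon_{AB}$ and the trace-free cross-term obstruction. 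This is precisely the alternative the paper itself advertises in Remark \ref{Remark:NullKahler} ("the proof of Theorem \ref{Theorem:SD} is particularly simple if one uses the isomorphisms \eqref{SDdecomposition}, \eqref{formsspinors}...") without writing it out; your version makes the mechanism transparent (a nonzero $\phi^{A}{}_{B}\,\tilde\phi^{A'}{}_{B'}$ can never be proportional to $\delta^{A}{}_{B}\delta^{A'}{}_{B'}$ since both factors are trace-free) and extends verbatim to the null case $a=0$ of that remark, whereas the paper's volume-form argument is self-contained, signature-explicit, and avoids spinor machinery. The only loose ends are cosmetic: your phrase "exactly one of $\phi,\tilde\phi$ must vanish" should be "at least one vanishes, and not both since $W$ is non-degenerate ($A$ being an automorphism)", and in the case where the cross term is forced to be the \emph{zero} multiple of the identity one concludes $\phi\otimes\tilde\phi=0$ directly rather than via $\phi\propto\delta$; neither affects the conclusion.
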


\begin{proof}
We will use the abstract index notation, and the metric and its inverse to raise and lower indices as convenient.
Notice that since $A$ is an automorphism, the map $W$ in \eqref{Def:W} is non-degenerate.
Appendix B in \cite{Wald} will be useful in the following.
Since $M$ is orientable, it has a volume 4-form $\varepsilon_{abcd}$, and
the metric allows to normalize it as $\varepsilon^{abcd}\varepsilon_{abcd}=(-1)^{s}4!$. 

Assume first that $A\in{\rm Aut}(TM\otimes\mathbb{C})$ satisfies \eqref{Ksquare} and \eqref{gKK}. In index notation this is
\begin{align}
 A^{a}{}_{c}A^{c}{}_{b} ={}& a \delta^{a}{}_{b}, \label{Ksquare-index} \\
 g_{cd}A^{c}{}_{a}A^{d}{}_{b} ={}& -a g_{ab} \label{gKK-index}
\end{align}
with $a \neq 0$. The bilinear map \eqref{Def:W} is $W_{ab} = g_{bc}A^{c}{}_{a}$.
Multiplying by $A^{b}{}_{d}$ and using \eqref{gKK-index}: 
\begin{equation*}
 W_{ab}A^{b}{}_{d} = g_{bc}A^{c}{}_{a}A^{b}{}_{d}=-a g_{ad}.
\end{equation*}
Contracting now with $A^{d}{}_{c}$ and using \eqref{Ksquare-index}:
\begin{equation*}
 W_{ab}A^{b}{}_{d}A^{d}{}_{c} = a W_{ac} = -a g_{ad}A^{d}{}_{c} = -a W_{ca},
\end{equation*}
which shows that $W_{ac}=-W_{ca}$ and therefore $W_{ab}$ is a 2-form. Now we want to show that it is (A)SD.
We can write the volume form as $\varepsilon_{abcd}=N W_{[ab}W_{cd]}$, where $N$ is 
determined by the normalization of $\varepsilon_{abcd}$. Expanding the skew-symmetrization:
\begin{equation*}
 \varepsilon_{abcd} = \tfrac{N}{3}(W_{ab}W_{cd}+W_{ac}W_{db}+W_{ad}W_{bc}).
\end{equation*}
Contracting with $W^{cd}$ and using that $W_{ab}=-W_{ba}$:
\begin{equation*}
 \varepsilon_{abcd}W^{cd} = \tfrac{N}{3} (W_{ab}W_{cd}W^{cd}+2W_{ac}W_{db}W^{cd}).
\end{equation*}
Now, using that $W_{db}=g_{eb}A^{e}{}_{d}$ and $W^{cd}=g^{ce}g^{df}W_{ef}$, we have
\begin{equation*}
 W_{db}W^{cd}=A^{c}{}_{d}A^{d}{}_{b}=a \delta^{c}{}_{b}
\end{equation*}
from which it also follows that $W_{cd}W^{cd}=-4a$. Therefore
\begin{equation}\label{dualW0}
 \varepsilon_{abcd}W^{cd} = -\tfrac{2}{3}Na W_{ab}.
\end{equation}
Using now the normalization of $\varepsilon_{abcd}$:
\begin{equation*}
 (-1)^{s}4!=\varepsilon^{abcd}\varepsilon_{abcd}=N\varepsilon^{abcd}W_{ab}W_{cd} = \tfrac{8}{3} N^{2}a^{2},
\end{equation*}
thus 
\begin{equation*}
 Na = \pm 3 \sqrt{(-1)^{s}}.
\end{equation*}
Replacing in \eqref{dualW0} and using the definition ${}^{*}W_{ab}=\frac{1}{2}\varepsilon_{abcd}W^{cd}$, 
we get
\begin{equation*}
 {}^{*}W_{ab} = \mp \sqrt{(-1)^{s}} \; W_{ab},
\end{equation*}
which shows that $W_{ab}$ must be (A)SD.

Suppose now that the map $W$ defined in \eqref{Def:W} is a 2-form. 
Writing the volume form again as $\varepsilon_{abcd} = N W_{[ab}W_{cd]}$, 
and using the normalization $\varepsilon^{abcd}\varepsilon_{abcd}=(-1)^{s}4!$,
a short calculation reveals that $(-1)^{s}4!=2N {}^{*}W^{ab}W_{ab}$, thus
\begin{equation}\label{normalization}
 N = 12(-1)^{s} ({}^{*}W^{ab}W_{ab})^{-1}.
\end{equation}
Now, from formula (B.2.13) in \cite{Wald} we have
\begin{equation*}
 (-1)^{s}3! \delta^{a}{}_{b} = \varepsilon^{cdea}\varepsilon_{cdeb}
\end{equation*}
Replacing $\varepsilon_{cdeb}=N W_{[cd}W_{eb]}$ and the expression for $N$, we get
the general identity
\begin{equation*}
 \delta^{a}{}_{b} = -4 ({}^{*}W^{de}W_{de})^{-1} \; {}^{*}W^{ac}W_{cb}.
\end{equation*}
Assume now that $W_{ab}$ is (A)SD:
\begin{equation*}
 {}^{*}W_{ab} = \e W_{ab},
\end{equation*}
where $\e=\pm\sqrt{(-1)^{s}}$, depending on the signature of $g$ and on whether we consider SD or 
ASD forms. In any case, the above identity becomes
\begin{equation}\label{Wsquare}
 \delta^{a}{}_{b} = -4 (W^{de}W_{de})^{-1} \; W^{ac}W_{cb},
\end{equation}
which shows that $A^{a}{}_{b}=g^{ac}W_{bc}$ satisfies \eqref{Ksquare-index}, with $a=-\frac{1}{4}W_{ef}W^{ef}$. 
Finally, to see that \eqref{gKK} holds, we compute
\begin{equation*}
 g_{cd}A^{c}{}_{a}A^{d}{}_{b} = -W_{da}A^{d}{}_{b} = -a g_{ab}
\end{equation*}
where the second equality is deduced form \eqref{Ksquare-index} by contracting with $g_{da}$.
\end{proof}

\begin{corollary}\label{Corollary:SD}
Let $A\in{\rm Aut}(TM\otimes\mathbb{C})$ and let $W$ be defined by \eqref{Def:W}.
Define a scalar field $\phi$, a bilinear map $\omega$, and a map $K\in{\rm Aut}(TM\otimes\mathbb{C})$ by, respectively,
\begin{align}
 \phi ={}& (-\tfrac{1}{4}W_{ab}W^{ab})^{1/2}, \label{Def:phi} \\
 \omega ={}& \phi^{-1} W, \label{Def:omega} \\
 g(KX,Y) ={}& \omega(X,Y) \label{Def:K}
\end{align}
for any vectors $X,Y$. Then $K$ satisfies the conditions
\begin{align}
 K^{2} ={}& \mathbb{I}, \label{paraHermitian1} \\
 g(KX,KY) ={}& -g(X,Y) \label{paraHermitian2}
\end{align}
if and only if $\omega$ is either a self-dual or an anti-self-dual 2-form.
\end{corollary}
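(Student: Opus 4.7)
The plan is to reduce the corollary to a direct application of Theorem \ref{Theorem:SD} with $K$ playing the role of $A$, the single new ingredient being that the normalization defining $\phi$ is tailored so that the scalar constant $a$ appearing in \eqref{Ksquare}-\eqref{gKK} equals $1$ on the nose.

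First I would observe that the definition \eqref{Def:K} simply says that $\omega(X,Y) = g(KX,Y)$ is the bilinear map associated to $K$ in the sense of \eqref{Def:W}. Consequently, Theorem \ref{Theorem:SD} applied to $K$ rather than $A$ asserts that the pair of conditions $K^{2}=a\mathbb{I}$ and $g(KX,KY)=-a\,g(X,Y)$ holds for some nonzero scalar $a$ if and only if $\omega$ is a (A)SD 2-form, and inspection of the second half of that proof identifies the scalar as $a = -\tfrac{1}{4}\omega_{ab}\omega^{ab}$ (see \eqref{Wsquare}).

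The core computation is then to check that with the choice \eqref{Def:phi}-\eqref{Def:omega} of $\phi$ and $\omega$, this scalar is exactly $1$. Indeed
\begin{equation*}
 \omega_{ab}\omega^{ab} = \phi^{-2} W_{ab}W^{ab} = \phi^{-2}\cdot(-4\phi^{2}) = -4,
\end{equation*}
so $a = 1$, which matches precisely \eqref{paraHermitian1}-\eqref{paraHermitian2}. This establishes both implications: if $\omega$ is (A)SD then by Theorem \ref{Theorem:SD} $K$ satisfies $K^{2}=a\mathbb{I}$ and $g(K\cdot,K\cdot)=-a\,g(\cdot,\cdot)$ with $a=1$, giving \eqref{paraHermitian1}-\eqref{paraHermitian2}; conversely, if \eqref{paraHermitian1}-\eqref{paraHermitian2} hold, they are exactly \eqref{Ksquare}-\eqref{gKK} with $a=1$, so Theorem \ref{Theorem:SD} yields that $\omega$ is (A)SD.

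The only delicate point, and thus the main obstacle, is verifying that the construction of $\phi$ and $\omega$ is actually legitimate in both directions. For the ``if'' direction one must check that $\omega$ being (A)SD forces $W$ to be antisymmetric and $W_{ab}W^{ab}\neq 0$, so that $\phi$ is a well-defined nonzero (possibly complex) scalar and $\omega=\phi^{-1}W$ is meaningful; this follows from non-degeneracy of $A$ together with the identity $W_{cd}W^{cd}=-4a$ derived in the proof of Theorem \ref{Theorem:SD}. Finally, although $\phi$ is only determined up to a sign by the square root in \eqref{Def:phi}, flipping the sign merely replaces $\omega$ by $-\omega$ and $K$ by $-K$, both of which preserve the (A)SD and para-Hermitian conditions, so the ambiguity is harmless.
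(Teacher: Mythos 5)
Your proposal is correct and follows the route the paper intends: the corollary is an immediate application of Theorem \ref{Theorem:SD} to $K$ and $\omega$, with the normalization \eqref{Def:phi}--\eqref{Def:omega} forcing the scalar $a=-\tfrac{1}{4}\omega_{ab}\omega^{ab}$ to equal $1$, exactly as you compute. Your added remarks on the non-vanishing of $W_{ab}W^{ab}$ (needed for $\phi\neq 0$) and on the harmless sign ambiguity of the square root are correct and slightly more careful than the paper, which states the corollary without proof.
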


\begin{remark}\label{Remark:reality}
Suppose that $\omega$ in \eqref{Def:omega} is an (A)SD 2-form. Then the map $K$ defined 
in \eqref{Def:K} satisfies equations \eqref{paraHermitian1} and \eqref{paraHermitian2},
and its eigenvalues are $+1,+1,-1,-1$ so the two eigenbundles of $K$ have the same rank. 
According to definitions \ref{Def:paraHermitianStructure} and \ref{Def:almostParaHermitian}, 
we say that $(g,K)$ is an almost para-Hermitian structure.
Notice that {\em $K$ is not necessarily real-valued}, this is the reason why we use $TM\otimes\mathbb{C}$
instead of $TM$ in Def. \ref{Def:almostParaHermitian}.
The reality of $K$ depends on the signature of $g$:
\begin{enumerate}
\item Riemannian signature $(++++)$: (A)SD forms can be chosen to be real, but $\phi$ in \eqref{Def:phi} 
is always purely imaginary, so $K$ is purely imaginary or complex-valued. 
\item Split signature $(++--)$: (A)SD forms can be chosen to be real, but $\phi$ can be real or complex, so
$K$ can be real- or complex-valued.
\item Lorentz signature $(+---)$: (A)SD forms are necessarily complex, so $K$ is complex-valued.
\end{enumerate}
\end{remark}

\subsection{Spaces of almost para-Hermitian structures}\label{Sec:spaceParaHermitian}

The results of Theorem \ref{Theorem:SD} and its Corollary \ref{Corollary:SD} show us that we can think of 
maps satisfying \eqref{paraHermitian1} and \eqref{paraHermitian2}
equivalently in terms of non-degenerate (A)SD 2-forms. 
Here we will give a convenient parametrization of the space of such maps at a point $x\in M$, that is of the space
\begin{equation}\label{SpaceParaHermitian}
 P_{x}:=\{ K\in{\rm Aut}(T_{x}M\otimes \mathbb{C}) \;|\; K^{2}=\mathbb{I} \;\;\text{and}\;\; g(K\cdot,K\cdot)=-g(\cdot,\cdot) \}.
\end{equation}
As emphasized in Remark \ref{Remark:reality}, we consider $K\in{\rm Aut}(T_{x}M\otimes \mathbb{C})$ 
since a tensor satisfying \eqref{paraHermitian1}-\eqref{paraHermitian2} is not necessarily real.
Imposing reality conditions changes the structure of $P_{x}$, and this will depend on the metric signature.
Our results in this subsection can be summarized in the following form:

\begin{theorem}\label{Theorem:spaceParaHermitian}
Let $M$ be a real, orientable 4-manifold with a metric $g$, and let $x\in M$. 
The space \eqref{SpaceParaHermitian} has two connected components, 
and depending on the signature of $g$, each component can be parametrized as follows:
\begin{enumerate}
\item Riemannian signature: the elements in the set \eqref{SpaceParaHermitian} are not real but 
can be chosen to be purely imaginary, and each of the two components in the space of such maps is 
\begin{equation}\label{ParaHermitianEuclidean}
 P^{(R)}_{x}\cong \mathbb{CP}^1 \cong S^{2}
\end{equation}
where $\mathbb{CP}^1$ is the complex projective line, and $S^{2}$ is the unit 2-sphere. 
\item Split signature: the elements in the set \eqref{SpaceParaHermitian} can be chosen to be real, 
and each of the two components in the space of such maps is 
\begin{equation}
 P^{(S)}_{x}\cong (\mathbb{RP}^{1}\times\mathbb{RP}^{1})\backslash\mathbb{RP}^{1} \cong H^{1}
\end{equation}
where $\mathbb{RP}^{1}$ is the real projective line, and $H^{1}$ is a hyperboloid of one sheet. 
\item Lorentz signature: the elements in the set \eqref{SpaceParaHermitian} are necessarily complex, 
and each of the two components in the space of such maps is 
\begin{equation}
  P^{(L)}_x \cong (\mathbb{CP}^{1}\times\mathbb{CP}^{1})\backslash\mathbb{CP}^{1} \cong \mathbb{C}S^{2}
\end{equation}
where $\mathbb{C}S^{2}$ is the complexified 2-sphere. 
\end{enumerate}
\end{theorem}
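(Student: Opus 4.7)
The plan is to invoke Corollary \ref{Corollary:SD} and reduce the study of $P_{x}$ to one of normalized self-dual or anti-self-dual 2-forms. The bilinear map $\omega(X,Y)=g(KX,Y)$ is a 2-form on $T_{x}M\otimes\mathbb{C}$, and the conditions $K^{2}=\mathbb{I}$, $g(K\cdot,K\cdot)=-g(\cdot,\cdot)$ are equivalent to $\omega\in\Lambda^{2}_{+}\oplus\Lambda^{2}_{-}$ being purely self-dual or purely anti-self-dual, together with the normalization $\omega_{ab}\omega^{ab}=-4$ (obtained by contracting \eqref{gKK-index} with $g^{ab}$). Since $\Lambda^{2}_{+}\cap\Lambda^{2}_{-}=\{0\}$ is excluded by the normalization, $P_{x}$ splits as a disjoint union of two connected pieces, one per chirality, and by parity it is enough to analyse the self-dual case.

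The next step is to identify the quadric $\{\omega\in\Lambda^{2}_{+}:\omega_{ab}\omega^{ab}=-4\}$ inside the 3-dimensional space $\Lambda^{2}_{+}$, equipped with the induced quadratic form $Q(\omega):=\omega_{ab}\omega^{ab}$. First I would treat the fully complex case (which is the Lorentzian case on the nose, since $\Lambda^{2}_{+}$ in Lorentz signature is genuinely complex with a complex $Q$): the smooth affine quadric $Q=-4$ in $\mathbb{C}^{3}$ is, by definition, a complex 2-sphere $\mathbb{C}S^{2}$, and it is diffeomorphic to $(\mathbb{CP}^{1}\times\mathbb{CP}^{1})\setminus\mathbb{CP}^{1}$ by viewing it as the affine chart of the smooth projective quadric $\{z_{1}^{2}+z_{2}^{2}+z_{3}^{2}-z_{4}^{2}=0\}\cong\mathbb{CP}^{1}\times\mathbb{CP}^{1}$ in $\mathbb{CP}^{3}$ minus the conic at infinity $\{z_{4}=0\}$. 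An alternative, more illuminating derivation that I would include uses the 2-spinor formalism to be introduced in Appendix \ref{Appendix:spinors}: writing $\omega_{AA'BB'}=\phi_{AB}\epsilon_{A'B'}$ with symmetric $\phi_{AB}$, the normalization becomes $\phi_{AB}\phi^{AB}=\mathrm{const}\neq 0$, and the non-degenerate symmetric 2-spinor $\phi_{AB}$ factorizes as $\alpha_{(A}\beta_{B)}$, with the $\pm 1$-eigenspinors $\alpha,\beta$ of $\phi^{A}{}_{B}$ determined uniquely up to a reciprocal rescaling fixed modulo sign. Ordering them by the sign identifies the self-dual component with ordered pairs of distinct points of $\mathbb{P}(\mathbb{S})\cong\mathbb{CP}^{1}$, reproducing $(\mathbb{CP}^{1}\times\mathbb{CP}^{1})\setminus\mathbb{CP}^{1}$.

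For the real signatures I would then impose reality on $\omega$, which amounts to determining the signature of $Q$ restricted to the real SD subspace. A direct computation in an orthonormal frame shows that in Riemannian signature $Q|_{\Lambda^{2}_{+}(\mathbb{R})}$ is positive definite, so $Q=-4$ has no real solutions and $\omega$ (hence $K$) must be purely imaginary; the resulting locus is the 2-sphere $Q(\mathrm{Im}\,\omega)=4$, which carries a natural complex structure making it $\mathbb{CP}^{1}$, in agreement with Remark \ref{Remark:reality} and with the quaternionic-structure argument on spinors that forces $[\beta]$ to be the antipode of $[\alpha]$. In split signature $Q|_{\Lambda^{2}_{+}(\mathbb{R})}$ has Lorentz signature $(1,2)$ and the level set $Q=-4$ is a one-sheeted hyperboloid $H^{1}$; via the spinor factorization with real $\mathbb{S}$ it is simultaneously $(\mathbb{RP}^{1}\times\mathbb{RP}^{1})\setminus\mathbb{RP}^{1}$, and one checks that this is diffeomorphic to $H^{1}\cong\mathbb{R}\times S^{1}$ because it is a 2-torus minus a non-separating simple closed curve.

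The main obstacle I anticipate is bookkeeping: verifying that the signature of the induced form on $\Lambda^{2}_{+}$ is as claimed in each case, and tracking how complex conjugation acts on the spinor parametrization so that the two chiralities do not get identified by a real involution after reality conditions are imposed. Once this is pinned down, the topological identifications with $S^{2}$, $H^{1}$ and $\mathbb{C}S^{2}$, and the projective descriptions in terms of $\mathbb{CP}^{1}$ and $\mathbb{RP}^{1}$ products, reduce to standard facts about affine quadrics in 3-dimensional spaces.
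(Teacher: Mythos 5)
Your proposal is correct and rests on essentially the same two ingredients as the paper's proof, only with the emphasis reversed: the paper leads with the principal-spinor factorization $\varphi_{AB}=\alpha_{(A}\beta_{B)}$ of the normalized (A)SD form (giving the $(\mathbb{P}^{1}\times\mathbb{P}^{1})\backslash\mathbb{P}^{1}$ description directly), and uses the quadric picture only as a secondary check via the triple $K_{1},K_{2},K_{3}$ and the computations $(aK_{1}+bK_{2}+cK_{3})^{2}=(a^{2}+b^{2}+c^{2})\mathbb{I}$, resp.\ $(a^{2}-b^{2}+c^{2})\mathbb{I}$, whereas you lead with the affine quadric in $\Lambda^{2}_{\pm}$ and relegate the spinor factorization to an alternative. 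Your signature bookkeeping and reality conditions agree with the paper's (positive-definiteness forcing $\beta_{A}=i\alpha^{\dagger}_{A}$ and a purely imaginary $K$ in the Riemannian case, real principal spinors in split signature, no reality condition in the Lorentzian case), so nothing essential is missing.
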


\begin{remark}
The result \eqref{ParaHermitianEuclidean} is analogous to the result about almost-complex structures 
in Riemannian geometry, which is already known in the literature (see Remark \ref{Remark:Euclidean} below).
\end{remark}

We will first analyse the general structure of \eqref{SpaceParaHermitian} and then study 
the different signatures separately.

First of all, the two connected components in $P_{x}$ refer to the fact that, if $K\in P_{x}$, 
then the 2-form $\omega(\cdot,\cdot)=g(K\cdot,\cdot)$ can be self-dual or anti-self-dual.
We will focus on only one of the components of $P_{x}$; the analysis for the other is analogous.

The parametrization of $P_{x}$ involves the use of spinors.
We refer to Appendix \ref{Appendix:spinors} for a brief review of spinors in four dimensions, as well as for 
notation and conventions (see also Appendix \ref{Appendix:CovariantFormalism} for spinor {\em fields}).
In particular, we raise and lower spinor indices with the symplectic forms $\e_{AB}$, $\e_{A'B'}$ and their inverses. 
In any signature, the spaces of SD and ASD 2-forms have the spinor decomposition
\begin{equation}\label{formsspinors}
 \Lambda^{2}_{+}(M) \cong \mathbb{S}'^{*}\odot\mathbb{S}'^{*}, 
 \qquad  \Lambda^{2}_{-}(M) \cong \mathbb{S}^{*}\odot\mathbb{S}^{*},
\end{equation}
where $\mathbb{S}'^{*}$ and $\mathbb{S}^{*}$ are the (dual) primed and unprimed spinor bundles 
(see the beginning of Appendix \ref{Appendix:CovariantFormalism}).
In indices, this means that if $W^{+}_{ab}$ is SD, and $W^{-}_{ab}$ is ASD, 
then there exist symmetric spinors $\varphi_{AB}=\varphi_{(AB)}$ and $\psi_{A'B'}=\psi_{(A'B')}$
such that 
\begin{align}
 W^{+}_{ab} ={}& \psi_{A'B'}\epsilon_{AB}, \label{W+0} \\
 W^{-}_{ab} ={}& \varphi_{AB}\epsilon_{A'B'}. \label{W-0}
\end{align}
These 2-forms are non-degenerate if and only if $\psi_{A'B'}\psi^{A'B'} \neq 0$ 
and $\varphi_{AB}\varphi^{AB} \neq 0$ respectively.
In the following, since we will focus on only one of the two components of $P_{x}$,
we are free to choose to work with either SD or ASD forms. 
For concreteness, from now on we will focus on the ASD case (the SD case being entirely analogous), 
that is on forms like \eqref{W-0}, and we omit the superscript ``$-$'' since it will not be needed.

\begin{remark}[Convention]\label{Remark:chirality}
If a tensor/spinor depends non-trivially only on unprimed spinors, we will often refer to it as having ``negative chirality'', 
while if it depends only on primed spinors we will say that it has ``positive chirality''. 
For example, we say that $\varphi_{A}$ or $T^{a}{}_{b}=\tau^{A}{}_{B}\delta^{A'}{}_{B'}$ have negative chirality, 
and that $U_{aB}=\sigma_{A'}\epsilon_{AB}$ or $V_{ab}=\rho_{A'B'}\epsilon_{AB}$ have positive chirality. 
More formally, the notions of negative and positive chirality refer to the two kinds of 
representations of the spin group (in the notation of appendix \ref{Appendix:spinors},
negative chirality corresponds to representations $(n,0)$, and positive chirality to $(0,m)$).
\end{remark}

\begin{remark}[Null-K\"ahler]\label{Remark:NullKahler}
We notice that the proof of Theorem \ref{Theorem:SD} is particularly simple if one uses
the isomorphisms \eqref{SDdecomposition}, \eqref{formsspinors} and the explicit expressions 
\eqref{W+0} and \eqref{W-0}. 
Actually, these decompositions also allow to deal with the case $a=0$ in \eqref{Ksquare}. 
Suppose that $N\in{\rm End}(TN\otimes\mathbb{C})$ satisfies $N^{2}=0$ and 
$g(NX,Y)+g(X,NY)=0$ for all $X,Y$. 
Then using \eqref{SDdecomposition}, \eqref{formsspinors}--\eqref{W-0}, it is straightforward to show that
any $N$ satisfying these conditions must be of the form
\begin{equation}
 N^{a}{}_{b}=\a^{A}\a_{B}\delta^{A'}{}_{B'} \qquad \text{or } \qquad N^{a}{}_{b}=\mu^{A'}\mu_{B'}\delta^{A}{}_{B}
\end{equation}
for some spinors $\a^{A}$, $\mu^{A'}$, depending on whether the 2-form $g(N\cdot,\cdot)$ is SD or ASD.
We may refer to the pair $(g,N)$ as an `almost-null K\"ahler structure'. 
It follows that the space of such maps at a point $x\in M$ is simply the space of spinors at $x$.
If the spinor $\a^{A}$ (or $\mu^{A'}$) is parallel under the Levi-Civita connection of $g$, 
then this is called a {\em Null-K\"ahler structure}, see \cite{Dun01} and \cite[Section 10.2.3]{Dun10}.
(If $N$ is real, then the signature of $g$ must be split, however here we allow $N$ to be complex.)
\end{remark}

Let $W_{ab}$ be an ASD 2-form at $x\in M$, with the spinor representation \eqref{W-0}, and let
$(\xi_{A},\eta_{A})$ be a basis of $\mathbb{S}^{*}|_{x}$, where $\xi_{A}\eta^{A}=\chi\neq0$. 
Then $\varphi_{AB}$ can expanded as (see Appendix \ref{Appendix:spinors})
\begin{equation}\label{ExpansionInBasis}
 \varphi_{AB}=\chi^{-2}[\varphi_{2}\xi_{A}\xi_{B}-\varphi_{1}(\xi_{A}\eta_{B}+\eta_{A}\xi_{B})+\varphi_{0}\eta_{A}\eta_{B}],
\end{equation}
where $\varphi_{0}=\varphi_{AB}\xi^{A}\xi^{B}$, $\varphi_{1}=\varphi_{AB}\xi^{A}\eta^{B}$ and 
$\varphi_{2}=\varphi_{AB}\eta^{A}\eta^{B}$.
The scalar \eqref{Def:phi} is then $\phi=(-\frac{1}{2}\varphi_{AB}\varphi^{AB})^{1/2}=\chi^{-1}(\varphi^{2}_1-\varphi_{0}\varphi_{2})^{1/2}$.
Thus, the tensor $K^{a}{}_{b}=\phi^{-1} g^{ac}W_{bc}$ satisfies \eqref{paraHermitian1} and \eqref{paraHermitian2} 
and is explicitly
\begin{equation}\label{KGR0}
 K^{a}{}_{b}=\frac{1}{\chi\sqrt{\varphi^{2}_1-\varphi_{0}\varphi_{2}}} 
 \left[\varphi_{2}\xi^{A}\xi_{B}-\varphi_{1}(\xi^{A}\eta_{B}+\eta^{A}\xi_{B}) + \varphi_{0}\eta^{A}\eta_{B} \right]\delta^{A'}{}_{B'}.
\end{equation}
This can be written in a more convenient form by expressing $\varphi_{AB}$ in terms of its principal spinors, 
see Appendix \ref{Appendix:spinors}: 
since $\varphi_{AB}\varphi^{AB}\neq0$, there exist two non-proportional spinors $\a_{A},\b_{A}$ such that 
\begin{equation}\label{decompositionPNDs}
 \varphi_{AB}=\a_{A}\b_{B}+\a_{B}\b_{A}.
\end{equation}
Independently of signature, the spinors $\a_{A}$ and $\b_{A}$ are in general {\em complex}, since they 
are obtained by finding the roots of a second order polynomial; see around eq. \eqref{PrincipalSpinors}. 
However, whereas in Lorentzian and Riemmanian signature they are {\em necessarily} complex,
in split signature the combination in the right hand side of \eqref{decompositionPNDs} is real.

The decomposition \eqref{decompositionPNDs} implies that $\phi=\a_{C}\b^{C}$, therefore
\begin{equation}\label{KGR}
  K^{a}{}_{b} = (\a_{C}\b^{C})^{-1} (\a^{A}\b_{B}+\b^{A}\a_{B})\delta^{A'}{}_{B'}.
\end{equation}
From Corollary \ref{Corollary:SD} we see that {\em any} map satisfying conditions \eqref{paraHermitian1} and \eqref{paraHermitian2} 
in four dimensions has the form \eqref{KGR} for some spinors $\a_{A},\b_{A}$ 
(the only other possibility being choosing positive instead of negative chirality).
However, \eqref{KGR} does not uniquely fix these spinors: we see from this equation that $K$ is invariant under
\begin{equation}\label{gaugefreedomK}
 \a_{A} \to \lambda \a_{A}, \qquad \b_{A} \to \mu \b_{A}
\end{equation}
for any non-vanishing scalars $\lambda$ and $\mu$.
Now, the space of spinors at a point (the fiber of the spin bundle) is $\mathbb{C}^{2}$ in signature 
$(+---)$ and $(++++)$, and $\mathbb{R}^2$ in signature $(++--)$.  
Since any element of the set \eqref{SpaceParaHermitian} (with negative chirality) can be put in the form \eqref{KGR}, 
and since the spinors are subject to the projective equivalence \eqref{gaugefreedomK},
we deduce that {\em a priori} each component of the set $P_{x}$ can be parametrized by 
the product of two projective spaces, that is $\mathbb{CP}^1\times\mathbb{CP}^1$.
(Notice that although in the split signature case the projective spin space is $\mathbb{RP}^{1}$, 
if we allow the rescalings in \eqref{gaugefreedomK} to be complex then we get the complexification 
of $\mathbb{RP}^{1}$, which is $\mathbb{CP}^{1}$.)
But since the spinors $\a_{A}$ and $\b_{A}$ are not allowed to be proportional, we must 
remove the ``diagonal'' from $\mathbb{CP}^1\times\mathbb{CP}^1$,
so we get
\begin{equation}\label{SpaceParaHermitian2}
 P_{x} \cong \{ (\a,\b)\in\mathbb{CP}^{1}\times\mathbb{CP}^{1} \;|\; \a\neq\b \} 
 \cong (\mathbb{CP}^{1}\times\mathbb{CP}^{1})\backslash\mathbb{CP}^{1}
\end{equation}
where we are now abusing notation and using $P_{x}$ for one of the two components of \eqref{SpaceParaHermitian}.
Requiring reality conditions for \eqref{KGR}, however, imposes additional restrictions on the spinors $\a_{A},\b_{A}$. 
We will analyse separately the different signatures.

\subsubsection{Riemannian signature}

In this case we can require the ASD 2-form $W_{ab}$ to be {\em real}. 
To analyse this in more detail, we need to introduce the Euclidean spinor conjugation. This is an 
involution $\dagger$ on the spin spaces, i.e. $\mathbb{S}\to\mathbb{S}$ and $\mathbb{S}'\to\mathbb{S}'$ 
(so it does not interchange the chirality as in the Lorentzian case), 
and if we write a spinor $\varphi_{A}$ in components as $\varphi_{A}=(a,b)$, then 
$\varphi^{\dagger}_{A}=(-\bar{b},\bar{a})$. It follows that $\dagger^{2}=-1$.
Since $\epsilon^{\dagger}_{A'B'}=\epsilon_{A'B'}$, we get that $W_{ab}$ is real
if and only if $\beta_{A}=i\alpha^{\dagger}_{A}$, so $W_{ab}=i(\a_{A}\a^{\dagger}_{B}+\a^{\dagger}_{A}\a_{B})\e_{A'B'}$. 
Furthermore, $(\alpha_{C}\beta^{C})^{-1}=-i/ \Vert\a\Vert^2$, where $\Vert\a\Vert^{2}:=\epsilon_{AB}\a^{A}\a^{\dagger B}$.
Therefore, with the additional requirement of {\em real} ASD 2-forms,
the map \eqref{KGR} becomes in the Riemannian case
\begin{equation}\label{KEuclidean}
 K^{a}{}_{b}=\frac{1}{\Vert\a\Vert^{2}}(\a^{A}\a^{\dagger}_{B}+\a^{\dagger A}\a_{B})\delta^{A'}{}_{B'},
\end{equation}
so we see explicitly that $K$ is parametrized by only {\em one} projective spinor. 
Each of the two components of the set of ``Riemannian para-Hermitian structures'' at $x\in M$ is therefore 
\begin{equation}
 P^{(R)}_{x} \cong \mathbb{CP}^{1}. \label{SEuclidean}
\end{equation}
The complex projective line $\mathbb{CP}^{1}$ is the Riemann sphere, and is topologically $\mathbb{CP}^{1}\cong S^{2}$. 
Thus, there is a 2-sphere of Riemannian para-Hermitian structures at any point.

\begin{remark}\label{Remark:Euclidean}
From \eqref{KEuclidean} we see immediately that $K$ is purely imaginary, since $\dagger^2=-1$.
Therefore, the map 
\begin{equation*}
 J:=iK
\end{equation*}
is real-valued and satisfies $J^{2}=-\mathbb{I}$, so it defines a complex structure 
in each tangent space, and the space of complex (Hermitian) structures at any point is again parametrized by $\mathbb{CP}^{1}$. 
In this way we recover the Riemannian results for almost-complex structures given in \cite[Section 9.1]{WardWells} and 
\cite{Atiyah} (see eq. (9.1.20) in the first reference and page 429 in the second).
\end{remark}

The 2-sphere \eqref{SEuclidean} of Riemannian para-Hermitian structures can also be seen by using non-projective spinors as follows.
From the general form \eqref{KGR0}, 
choosing different values for $\varphi_{0}, \varphi_{1}, \varphi_{2}$ we deduce the following particular cases
\begin{align}
 (K_{1})^{a}{}_{b} ={}& \chi^{-1}(\xi^{A}\eta_{B}+\eta^{A}\xi_{B})\delta^{A'}{}_{B'}, \label{K1} \\
 (K_{2})^{a}{}_{b} ={}& i\chi^{-1}(\xi^{A}\xi_{B}+\eta^{A}\eta_{B})\delta^{A'}{}_{B'}, \label{K2} \\
 (K_{3})^{a}{}_{b} ={}& \chi^{-1}(\xi^{A}\xi_{B}-\eta^{A}\eta_{B})\delta^{A'}{}_{B'} \label{K3}
\end{align}
(where reality conditions imply $\eta_{A}=i\xi^{\dagger}_{A}$).
Any $K$ can be expressed as a linear combination of these tensors\footnote{Note that \eqref{KGR0} is equivalent to
\begin{equation}
 K^{a}{}_{b} = \frac{(-\varphi_{1})}{\sqrt{\varphi_{1}^{2}-\varphi_{0}\varphi_{2}}}(K_{1}){}^{a}{}_{b} 
 +\frac{(-i)(\varphi_{2}+\varphi_{0})}{2\sqrt{\varphi_{1}^{2}-\varphi_{0}\varphi_{2}}}(K_{2}){}^{a}{}_{b} 
 +\frac{(\varphi_{2}-\varphi_{0})}{2\sqrt{\varphi_{1}^{2}-\varphi_{0}\varphi_{2}}}(K_{3}){}^{a}{}_{b} 
\end{equation}
}. It is straightforward to check the following identities:
\begin{equation}\label{triplet}
 (K_{1})^{2}=(K_{2})^{2}=(K_{3})^{2}=\mathbb{I}, \qquad K_{i}K_{j}=-K_{j}K_{i} \;\; \text{for } \;   i \neq  j.
\end{equation}
Using these properties, a short calculation shows that 
\begin{equation}\label{twosphere}
 (aK_1 +b K_2 +cK_3)^{2}=(a^2+b^2+c^2)\mathbb{I},
\end{equation}
where $a,b,c$ are real numbers. Thus, as long as $(a,b,c)\in S^{2}$, the combination $\mathcal{K}_{(a,b,c)}=aK_1 +b K_2 +cK_3$ is 
a new para-Hermitian structure parametrized by an element $(a,b,c)$ of $S^{2}$, so there is a 2-sphere of such structures.

Since the $K_{i}$ are all purely imaginary, we can define the real maps $J_{1}=iK_{1}$, $J_{2}=iK_{2}$ and $J_{3}=iK_{3}$; 
then it follows that these tensors satisfy 
\begin{equation}
 (J_{i})^{2}=-\mathbb{I}, \qquad J_{1}J_{2}=J_{3}, \quad J_{2}J_{3}=J_{1}, \quad J_{3}J_{1}=J_{2},
\end{equation}
so the triple $(J_1,J_2,J_3)$ defines an almost-hyperhermitian structure on $M$ (Def. \ref{Def:hypercomplex}).

\subsubsection{Split signature}

In signature $(++--)$ spinors can be chosen to be real. Therefore, the reality of the map \eqref{KGR0} 
depends on the sign of $\varphi^{2}_1-\varphi_{0}\varphi_{2}$.
This is equivalent to the reality conditions for the principal spinors $\a_{A}$ and $\b_{A}$ in \eqref{decompositionPNDs}: 
if $\varphi^{2}_1-\varphi_{0}\varphi_{2}>0$ then $\a_{A}$ and $\b_{A}$ are real, whereas if 
$\varphi^{2}_1-\varphi_{0}\varphi_{2}<0$ they are complex; see around equation \eqref{PrincipalSpinors}.
Thus, requiring \eqref{KGR0} to be real, and hence an ordinary para-Hermitian structure, is equivalent 
to requiring $\a_{A}$ and $\b_{A}$ in \eqref{KGR} to be real. We can thus restrict to real rescalings in 
\eqref{gaugefreedomK}, so the space of para-Hermitian structures at $x$ (of a definite chirality) is
\begin{equation}\label{SNeutral}
  P^{(S)}_{x} \cong \{ (\a,\b)\in\mathbb{RP}^{1}\times\mathbb{RP}^{1} \;|\; \a\neq\b \} 
  \cong  (\mathbb{RP}^{1}\times\mathbb{RP}^{1})\backslash\mathbb{RP}^{1}.
\end{equation}
The real projective line $\mathbb{RP}^{1}$ is topologically $\mathbb{RP}^{1}\cong S^{1}$, 
so the space of para-Hermitian structures at any point in split signature is the torus $S^{1}\times S^{1}$
with the line $\a=\b$ removed.
To visualize this structure, we consider again the three cases \eqref{K1}, \eqref{K2}, \eqref{K3}, 
where now $\xi_{A}$ and $\eta_{A}$ are real spinors.
We see that $K_1$ and $K_3$ are real, but $K_2$ is purely imaginary, so we define the real map $K'_2=iK_2$.
These tensors satisfy the following:
\begin{equation}
 (K_{1})^{2}=(K_{3})^{2}=\mathbb{I}=-(K'_{2})^{2}, \qquad K_{1}K'_{2}=-K'_{2}K_{1}=K_{3}.
\end{equation}
From these relations and definition \ref{Def:hypercomplex} we see that
the triple $(K_{1},K'_{2},K_{3})$ defines an almost-para-hyperhermitian structure.
If $a,b,c$ are real numbers, we now find
\begin{equation}\label{hyperboloid}
 (aK_1 +b K'_2 +cK_3)^{2}=(a^2-b^2+c^2)\mathbb{I},
\end{equation}
thus the condition for $aK_1 +b K'_2 +cK_3$ to be an almost-para-Hermitian structure is $a^2-b^2+c^2=1$, which 
describes a hyperboloid. 
Therefore, the space \eqref{SNeutral} is in this case a hyperboloid of one sheet.
One can also convince oneself of this by drawing $(S^{1}\times S^{1})\backslash S^{1}$ 
(where we think of the removed $S^1$ as a ``diagonal'' circle ---i.e. non-contractible).

\begin{remark}[Almost-complex structures]\label{Remark:complexsplit}
An analysis analogous to the one that leads to \eqref{SpaceParaHermitian2} can be carried out to find the set of 
almost {\em complex} structures compatible with $g$ in split signature 
(i.e. maps satisfying $J^{2}=-\mathbb{I}$ and $g(J\cdot,J\cdot)=+g(\cdot,\cdot)$).
Requiring $J$ to be real, it is straightforward to see (cf. around eq. \eqref{PrincipalSpinors}) 
that the principal spinors must be complex. Since they are complex conjugates of each other, i.e. $\beta_{A}=\bar{\alpha}_{A}$, 
$J$ is parametrized by only one spinor $\alpha_{A}$, and one finds
\begin{equation}
 J^{a}{}_{b} = -\frac{i}{(\alpha_{C}\bar{\alpha}^{C})}(\alpha^{A}\bar{\a}_{B}+\bar{\a}^{A}\a_{B})\delta^{A'}{}_{B'}
\end{equation}
(note that this is real since $\overline{(\alpha_{C}\bar{\alpha}^{C})}=-(\alpha_{C}\bar{\alpha}^{C})$).
We have the freedom $\alpha_{A}\to\lambda\alpha_{A}$ with $\lambda\in\mathbb{C}^{\times}$, and we must exclude 
the set $\bar{\alpha}_{A}\propto\alpha_{A}$, so it follows that the set of Hermitian structures at a point is 
$\mathbb{CP}^{1}\backslash\mathbb{RP}^{1}$. Topologically this is $S^{2}\backslash S^{1}$, 
which is homemorphic to a hyperboloid of {\em two} sheets. 
This can also be seen by simply requiring $a^2-b^2+c^2=-1$ in \eqref{hyperboloid}. 
We will invoke this result in section \ref{Sec:twistorsinliterature} below.
\end{remark}

\subsubsection{Lorentz signature}

As observed in Remark \ref{Remark:reality}, in this case the elements of the set \eqref{SpaceParaHermitian} 
are necessarily complex-valued, and $P^{(L)}_{x}$ is simply given by \eqref{SpaceParaHermitian2}. 
Topologically, this is the product $S^{2}\times S^{2}$ of two 2-spheres with the ``diagonal'' $\a=\b$ removed.
We can obtain an alternative description of this structure
by considering again the three maps \eqref{K1}, \eqref{K2}, \eqref{K3}.
These tensors satisfy \eqref{triplet}, but now they are complex-valued. 
Considering $\mathcal{K}_{(a,b,c)}=aK_1 +b K_2 +cK_3$, equation \eqref{twosphere} still holds, but
now $a,b,c$ are {\em complex} numbers. The condition for $\mathcal{K}_{(a,b,c)}$ to be a 
``Lorentzian para-Hermitian structure'' is thus 
\begin{equation}
 a^2+b^2+c^2=1, \qquad a,b,c\in\mathbb{C}
\end{equation}
which is the complexification of the 2-sphere, $\mathbb{C}S^{2}$.
(In this case it is perhaps not so easy to argue intuitively that the set $(S^{2}\times S^{2})\backslash S^{2}$ 
is a complex 2-sphere.)

Each of the two components in the set \eqref{SpaceParaHermitian}, without any reality conditions, 
is therefore a complexified sphere $\mathbb{C}S^{2}$.
Notice that this is true in any signature, and also for complex 4-manifolds.

\subsection{Gauge freedom and a covariant formalism}\label{sec:CovariantFormalism}

From Remark \ref{Remark:Conformal} and the results of section \ref{Sec:spaceParaHermitian}, we see that
fixing a (para-)complex structure $K$ at a point $x\in M$ is equivalent to fixing (ignoring reality conditions): 
\begin{enumerate}
\item a conformal structure $[g]$, and
\item two (different) points on the Riemann sphere at $x$.
\end{enumerate}
Over an open neighbourhood in $M$, instead of considering point objects we consider {\em fields}, 
i.e. smooth sections of the appropriate bundles.
In practice we will work with representatives, i.e. 
with some metric $g_{ab}$, or its spinor equivalents $\e_{AB}$ and $\e_{A'B'}$, 
and with two non-proportional spinors $\a_{A},\b_{A}$ (which in split signature are real, while in Riemannian 
signature one has $\b_{A}=i\a^{\dagger}_{A}$).
But since the only given data is $K$, the framework should not depend on the representatives chosen.
Here we give the essential points of a formalism that takes into account this issue, namely that is covariant under 
the transformations associated to this ``gauge freedom''.
A more detailed description of this framework is given in Appendix \ref{Appendix:CovariantFormalism}.

We have seen that the explicit expression of $K$ in terms of the representative spinors is \eqref{KGR}, that is:
\begin{equation}\label{KGR1}
 K^{a}{}_{b} = \frac{1}{(\e^{CD}\a_{C}\b_{D})}\e^{AE}(\a_{E}\b_{B}+\b_{E}\a_{B})\delta^{A'}{}_{B'}.
\end{equation}
We are free to assign conformal weights to the spinors $\a_{A},\b_{A}$, i.e. we can impose these spinors 
to change under a conformal transformation of the metric as
\begin{equation}\label{ConformalWeightSpinors}
 \a_{A} \to \Omega^{w_{0}+1}\a_{A}, \qquad \b_{A} \to \Omega^{w_{1}+1}\b_{A}
\end{equation}
for some real numbers $w_{0},w_{1}$. 
Recalling the conformal behavior
\begin{equation}\label{ConformalRescaling}
 \e_{AB} \to \Omega \e_{AB}, \qquad \e_{A'B'} \to \Omega \e_{A'B'}.
\end{equation}
we see explicitly from \eqref{KGR1} that $K$ is invariant under conformal rescalings.
(The conformal weights in \eqref{ConformalWeightSpinors} are chosen so that 
$\a^{A}\to\Omega^{w_{0}}\a^{A}$, $\b^{A}\to\Omega^{w_{1}}\b^{A}$.)

The ``gauge freedom'' is given by
conformal \eqref{ConformalRescaling} and projective \eqref{gaugefreedomK} rescalings.
In Lorentzian and Riemannian signature, 
these transformations define the group\footnote{$\mathbb{R}^{+}$ denotes the multiplicative group of 
positive real numbers, and $\mathbb{C}^{\times}$ ($\mathbb{R}^{\times}$) is the multiplicative group of complex (real) numbers.} 
$G_o=\mathbb{C}^{\times}\times\mathbb{C}^{\times}\times\mathbb{R}^{+}$, 
where $\mathbb{R}^{+}$ corresponds to conformal rescalings \eqref{ConformalRescaling} and 
$\mathbb{C}^{\times}\times\mathbb{C}^{\times}$ to projective rescalings \eqref{gaugefreedomK}. 
In split signature one replaces $\mathbb{C}^{\times}\times\mathbb{C}^{\times}$ by $\mathbb{R}^{\times}\times\mathbb{R}^{\times}$.

\begin{definition}\label{Def:weightedsfields}
We say that a spinor/tensor/scalar field $\varphi^{B...B'...}_{C...C'...}$ is {\em weighted} and has {\em type $(r,r';w)$}, 
where $r,r',w$ are real numbers, if under 
the transformations \eqref{ConformalRescaling}-\eqref{gaugefreedomK} it changes as 
\begin{equation}\label{weightedfields}
 \varphi^{B...B'...}_{C...C'...} \to \Omega^{w}\lambda^{r}\mu^{r'}\varphi^{B...B'...}_{C...C'...}.
\end{equation}
\end{definition}

This transformation law defines a representation of the group $G_{o}$, see Appendix \ref{Appendix:CovariantFormalism}.
Examples of weighted fields are $\alpha^{A},\beta^{A}$ and $\phi=\alpha_{A}\beta^{A}$; we give the specific weights 
for them in table \ref{Table:weights}.

\begin{table}
\begin{center}
\begin{tabular}{|c|c|c|c|}
\hline
 & $w$ & $r$ & $r'$ \\ \hline
$\b^{A}$ & $w_1$ & $0$ & $1$ \\ \hline 
$\b_{A}$ & $w_1+1$ & $0$ & $1$ \\ \hline 
$\a^{A}$ & $w_0$ & $1$ & $0$ \\ \hline
$\a_{A}$ & $w_0+1$ & $1$ & $0$ \\ \hline
$\phi$ & $w_0+w_1+1$ & $1$ & $1$ \\ \hline
\end{tabular}
\end{center}
\caption{Weights of the fields associated to an almost-para Hermitian structure.}
\label{Table:weights}
\end{table}

A non-trivial issue is how to define a derivative operator that is covariant under all the transformations involved,
since, to begin with, in a conformal structure one does not have the usual Levi-Civita connection.
The first point is to apply Lee's construction, seen in Section \ref{Sec:Conformal}, 
to use the natural Weyl connection ${}^{\rm w}\nabla_{a}$ induced by an almost-para Hermitian conformal 
structure $([g],K)$ (Proposition \ref{Prop:Lee}).
Then, after applying the appropriate machinery, one arrives at:

\begin{definition}\label{Def:GHPconformalCV}
Let $(M,[g],K)$ be a real, 4-dimensional, almost-para-Hermitian conformal structure, 
and let $\a_{A},\b_{A}$ be the spinor fields representing $K$ as in \eqref{KGR}, with $\e^{AB}\a_{A}\b_{B}=\phi\neq 0$.
Let ${}^{\rm w}\nabla_{AA'}$ and $f_{AA'}$ be the Weyl connection and Lee form induced by $K$,
and let $\varphi^{B...B'...}_{C...C'...}$ be a spinor field of type $(r,r';w)$.
We define the covariant derivative $\mathcal{C}_{AA'}$ by
\begin{equation}\label{cvC}
 \mathcal{C}_{AA'}\varphi^{B...B'...}_{C...C'...}
 = {}^{\rm w}\nabla_{AA'}\varphi^{B...B'...}_{C...C'...} +(wf_{AA'}+rL_{AA'}+r'M_{AA'})\varphi^{B...B'...}_{C...C'...}
\end{equation}
where 
\begin{align}
 L_{AA'} :={}& \phi^{-1}\b_{B}\nabla_{AA'}\a^{B} + \phi^{-1}\b_{A}f_{A'B}\a^{B} - w_{0} f_{AA'}, \label{connection1} \\
 M_{AA'} :={}& -\phi^{-1}\a_{B}\nabla_{AA'}\b^{B} - \phi^{-1}\a_{A}f_{A'B}\b^{B} - w_{1} f_{AA'} \label{connection2}
\end{align}
(with $w_0$ and $w_1$ defined by \eqref{ConformalWeightSpinors}), and where $\nabla_{AA'}$ is any 
Levi-Civita connection in the conformal class.
\end{definition}

To understand the construction of \eqref{cvC}, we refer to the discussion that leads to eq. \eqref{CVassociatedbundle} 
in Appendix \ref{Appendix:CovariantFormalism}.
Under a gauge transformation \eqref{weightedfields}, $\mathcal{C}_{AA'}$ satisfies
\begin{equation}\label{transfoCD}
 \mathcal{C}_{AA'}\varphi^{B...B'...}_{C...C'...} \to \Omega^{w}\lambda^{r}\mu^{r'}\mathcal{C}_{AA'}\varphi^{B...B'...}_{C...C'...}.
\end{equation}

The projection of $\mathcal{C}_{AA'}$ into the eigenbundles of $K$ deserves special attention, so 
we introduce:
\begin{definition}
Let $(M,[g],K)$ be a real, 4-dimensional, almost-para-Hermitian conformal structure, 
and let $\a_{A},\b_{A}$ be the spinor fields representing $K$ as in \eqref{KGR}. 
Let $\mathcal{C}_{AA'}$ be the associated covariant derivative, defined by \eqref{cvC}.
We define the operators
\begin{equation}
 \tilde{\mathcal{C}}_{A'}:=\a^{A}\mathcal{C}_{AA'}, \qquad \mathcal{C}_{A'}:=\b^{A}\mathcal{C}_{AA'}. 
 \label{CProjections}
\end{equation}
\end{definition}

\begin{remark}
Let $(g,K)$ be an almost-para-Hermitian structure, let $P$ and $\tilde{P}$ be the associated projectors 
\eqref{projectors} to its eigenbundles, and let $\a^{A},\b^{A}$ be the spinor fields representing $K$ (eq. \eqref{KGR}).
Furthermore, let $\nabla$ be an arbitrary derivative operator, and $X\in\Gamma(TM\otimes\mathbb{C})$. 
In the modern para-Hermitian approach to double field theory, see in particular
\cite{FRS2017} and \cite{Svo18}, an especially important role is played by the operators $\nabla_{\tilde{P}(X)}$ and $\nabla_{P(X)}$ 
(in arbitrary dimensions).
These operators are also crucial in the present work, and in our notation we have the equivalence:
\begin{subequations}
\begin{align}
 & \tilde{\nabla}_{A'}:=\a^{A}\nabla_{AA'} \quad \Leftrightarrow \quad \nabla_{\tilde{P}(X)}, \\
 & \nabla_{A'}:=\b^{A}\nabla_{AA'} \quad \Leftrightarrow \quad \nabla_{P(X)}.
\end{align}
\end{subequations}
In particular, $\tilde{\mathcal{C}}_{A'}$ corresponds to $\mathcal{C}_{\tilde{P}(X)}$, and 
$\mathcal{C}_{A'}$ corresponds to $\mathcal{C}_{P(X)}$.
\end{remark}

\section{Integrable structures}\label{Sec:Integrability}

\subsection{Special spinor fields}\label{Sec:specialspinors}

The characterization given by Theorem \ref{Theorem:spaceParaHermitian} of the space of almost para-Hermitian structures
allows immediately an explicit description of the eigenbundles:
given $K$, there exist projective spinors $\a^{A},\beta^{A}$ such that eq. \eqref{KGR} (or \eqref{KEuclidean} in 
the Riemmanian case) holds, and from this expression one can readily check that the $\pm 1$ eigenbundles are respectively
\begin{align}
 L ={}& \{u^{a}\in TM\otimes\mathbb{C} \;|\; u^{a}=\beta^{A}\nu^{A'}, \;\; \nu^{A'}\in\mathbb{S}'(0,-1;-w_1) \}, \label{LGR} \\
 \tilde{L} ={}& \{v^{a}\in TM\otimes\mathbb{C} \;|\; v^{a}=\alpha^{A}\mu^{A'}, \;\; \mu^{A'}\in\mathbb{S}'(-1,0;-w_0)\}, \label{LtildeGR}
\end{align}
where $\beta^{A}, \alpha^{A}$ are fixed and $\nu^{A'}, \mu^{A'}$ vary,
and where $\mathbb{S}'(r,r';w)$ represents the weighted bundles described in Appendix \ref{Appendix:CovariantFormalism}.
We have to include these weighted bundles because, since $\beta^{A}$ and $\alpha^{A}$ have non-trivial weights 
(see table \ref{Table:weights}), 
the spinors $\nu^{A'}, \mu^{A'}$ should also have non-trivial weights so that the 
elements of $L,\tilde{L}$ are ordinary vectors. 

As discussed in Section \ref{Sec:paraHermitian}, the integrability properties of an almost para-Hermitian structure 
like \eqref{KGR} refer to whether the distributions $L$ and $\tilde{L}$ are involutive.
In particular, according to Definition \ref{Def:LparaHermitian}, $K$ is half-integrable if and only if 
$N_{P} \equiv 0$ or $N_{\tilde{P}}\equiv 0$, and integrable if and only if $N_{P}=N_{\tilde{P}}=0$. 
For the case of \eqref{KGR}, the projectors \eqref{projectors} to $L$ and $\tilde{L}$ are respectively
\begin{subequations}
\begin{align}
 P^{a}{}_{b} ={}& \tfrac{1}{2}(\delta^{a}{}_{b}+K^{a}{}_{b}) = \phi^{-1}\b^{A}\a_{B}\delta^{A'}{}_{B'}, \label{projector4d} \\
 \tilde{P}^{a}{}_{b} ={}& \tfrac{1}{2}(\delta^{a}{}_{b}-K^{a}{}_{b}) = -\phi^{-1}\a^{A}\b_{B}\delta^{A'}{}_{B'}. \label{projectortilde4d}
\end{align}
\end{subequations}
where $\phi=\alpha_{A}\beta^{A}$.

\begin{lemma}\label{Lemma:paraHermitian}
Consider an almost (para-)Hermitian structure on a real 4-manifold $M$ (Def. \ref{Def:almostParaHermitian}),
and let $\nabla$ be the Levi-Civita connection of $g$. For concreteness suppose that the (para-)complex structure 
is of negative chirality (see Remark \ref{Remark:chirality}). Then the (para-)Hermitian structure is
\begin{enumerate}
\item half-integrable if and only if there exists a non-trivial unprimed spinor field $\varphi^{A}$ such that 
\begin{equation}\label{ShearFree}
 \varphi^{A}\varphi^{B}\nabla_{AA'}\varphi_{B} = 0.
\end{equation} 
\item integrable if and only if there are two linearly independent solutions to equation \eqref{ShearFree}.
\end{enumerate}
\end{lemma}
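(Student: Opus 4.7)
The plan is to parametrize sections of $L$ and $\tilde L$ via the explicit spinor expressions \eqref{LGR}--\eqref{LtildeGR}, then translate the involutivity conditions directly into spinor differential equations on the principal spinors $\alpha^A,\beta^A$ of $K$.

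First I would take two sections of $\tilde L$ of the form $v_i^a = \alpha^A \mu_i^{A'}$ ($i=1,2$) and compute their Lie bracket. Using Leibniz and the fact that the spinor symplectic forms $\epsilon_{AB},\epsilon_{A'B'}$ are parallel with respect to the Levi-Civita connection,
\begin{equation*}
 [v_1,v_2]^{AA'} \;=\; \alpha^B\mu_1^{B'}\bigl((\nabla_{BB'}\alpha^A)\mu_2^{A'} + \alpha^A\nabla_{BB'}\mu_2^{A'}\bigr) \;-\; (1\leftrightarrow 2).
\end{equation*}
Involutivity of $\tilde L$ requires the unprimed factor of $[v_1,v_2]^{AA'}$ to be a multiple of $\alpha^A$, i.e.\ $\alpha_A[v_1,v_2]^{AA'}=0$ for all $\mu_i^{A'}$. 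All terms in which $\nabla$ falls on $\mu_i^{A'}$ are annihilated by the factor $\alpha_A\alpha^A=0$, leaving
\begin{equation*}
 \alpha_A[v_1,v_2]^{AA'} \;=\; \bigl(\alpha_A\alpha^B\nabla_{BB'}\alpha^A\bigr)\bigl(\mu_1^{B'}\mu_2^{A'} - \mu_2^{B'}\mu_1^{A'}\bigr).
\end{equation*}
Since $\mu_1^{A'},\mu_2^{A'}$ are arbitrary and independent, vanishing of the right-hand side for all of them forces $\alpha_A\alpha^B\nabla_{BB'}\alpha^A = 0$ for every $B'$; using once more $\alpha_A\alpha^A=0$ to swap $\alpha^A\leftrightarrow\alpha_A$, this is exactly \eqref{ShearFree} with $\varphi^A=\alpha^A$. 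The same calculation applied to $u_i^a=\beta^A\nu_i^{A'}\in L$, contracted with $\beta_A$, shows that involutivity of $L$ is equivalent to \eqref{ShearFree} with $\varphi^A=\beta^A$.

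Part (1) then follows at once: $K$ is half-integrable iff at least one of its principal spinors $\alpha^A,\beta^A$ is shear-free, which gives the required $\varphi^A$; conversely, any shear-free spinor $\varphi^A$ can be promoted to a principal spinor of a half-integrable $K$ by picking any non-proportional second spinor $\psi^A$ via \eqref{KGR}. For part (2), full integrability requires both $\alpha^A$ and $\beta^A$ to be shear-free, and the non-degeneracy condition $\phi=\alpha_A\beta^A\neq 0$ guarantees linear independence; conversely, two linearly independent shear-free spinors serve as principal spinors of an integrable $K$.

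The main technical nuisance is verifying that \eqref{ShearFree} is intrinsic on the weighted spinors of Table \ref{Table:weights}: under a rescaling $\varphi^A\to\lambda\varphi^A$, the anomalous term is proportional to $\varphi^A\varphi^B\varphi_B(\nabla\lambda)/\lambda$, which vanishes because $\varphi^B\varphi_B=0$, so the equation descends to the projective spinor; conformal invariance with respect to $g$ is automatic because the Lie bracket used in the involutivity criterion is metric-independent. With these invariances in hand, the proof is complete.
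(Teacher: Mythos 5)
Your proposal is correct and follows essentially the same route as the paper: the paper computes the Nijenhuis projections $N_{P}(X,Y)=\tilde{P}[PX,PY]$ and $N_{\tilde P}(X,Y)=P[\tilde{P}X,\tilde{P}Y]$ for arbitrary $X,Y$ using \eqref{projector4d}--\eqref{projectortilde4d}, which after the same cancellations (all terms where $\nabla$ hits the primed factor are killed by $\alpha_A\alpha^A=0$, resp.\ $\beta_A\beta^A=0$) reduces to exactly your condition that $\beta^A$, resp.\ $\alpha^A$, be shear-free. Your direct bracket of eigenbundle sections $\alpha^A\mu_i^{A'}$ contracted with $\alpha_A$ is the same computation in a marginally different packaging, and your closing remarks on projective and conformal invariance are consistent with what the paper establishes elsewhere.
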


\begin{proof}
Denote the (para-)complex structure by $K$, and suppose it has negative chirality.
Then $K$ can be written as \eqref{KGR} for some (projective) spinor fields $\a_{A},\b_{A}$ 
(where in the Riemannian case we have $\b_{A}=i\a^{\dagger}_{A}$).
Then the projectors $P,\tilde{P}$ take the form \eqref{projector4d}-\eqref{projectortilde4d}.
Now let us compute $N_{P}(X,Y)$ for arbitrary $X,Y$, using \eqref{projector4d}-\eqref{projectortilde4d}. We have:
\begin{align*}
 (N_{P}(X,Y))^{AA'} ={}& -\phi^{-1}\a^{A}\b_{B}[PX,PY]^{BA'}\\
 ={}&-\phi^{-1}\a^{A}\beta_{B}\left(\phi^{-1}\b^{C}\a_{D}X^{DC'}\nabla_{CC'}(\phi^{-1}\b^{B}\a_{E}Y^{EA'}) \right. \\
  &  \left. - \phi^{-1}\b^{C}\a_{D}Y^{DC'}\nabla_{CC'}(\phi^{-1}\b^{B}\a_{E}X^{EA'})\right) \\
 ={}& -\phi^{-3}\a^{A}\beta_{B} 
(\b^{C}\a_{D}X^{DC'}\a_{E}Y^{EA'}\nabla_{CC'}\b^{B}-\b^{C}\a_{D}Y^{DC'}\a_{E}X^{EA'}\nabla_{CC'}\b^{B}) \\
 ={}& -\phi^{-3}\a^{A}\b_{B}\b^{C}\a_{E}\a_{D}(X^{DC'}Y^{EA'}-X^{DA'}Y^{EC'})\nabla_{CC'}\b^{B} \\
 ={}& -\phi^{-3}(\a_{E}\a_{D}X_{D}{}_{E'}Y^{EE'})\a^{A}\b^{B}\b^{C}\nabla_{C}{}^{A'}\b_{B},
\end{align*}
therefore we see that $N_{P} \equiv 0$ if and only if $\beta^{A}$ satisfies equation \eqref{ShearFree}, 
which proves the first item. 
An analogous calculation applies to $\a^{A}$, and this shows that $N_{\tilde{P}} \equiv 0$ iff $\a^{A}$
satisfies \eqref{ShearFree}. 
Therefore, the condition $N_{P}=N_{\tilde{P}}\equiv 0$ is equivalent to the condition that both spinors $\a_{A}$
and $\b_{B}$ satisfy eq. \eqref{ShearFree}; this proves the second item.
\end{proof}

Not any 4-manifold admits non-trivial solutions to \eqref{ShearFree}, so half-integrability of $K$ 
imposes restrictions. We may call equation \eqref{ShearFree} the ``shear-free condition'', since in 
Lorentzian general relativity its solutions define what are known as shear-free null geodesic congruences: 
the real vector field $\ell^{a}=\varphi^{A}\bar\varphi^{A'}$ is tangent to a null geodesic congruence which is 
shear-free (see e.g. \cite[Section 9.2]{Wald} for the definition of shear of a geodesic congruence).
If, for example, one also asks the metric to be Einstein, then by the Goldberg-Sachs theorem the Weyl 
tensor must be algebraically special, which is a restriction on the curvature.
Still, many physically interesting spacetimes in general relativity satisfy this property; in particular, 
stationary black hole solutions such as Kerr or Schwarzschild (with or without cosmological constant).

In the DFT language, a choice of $K$ means a choice of T-dual `spacetimes', where the physical spacetime 
is related to the involutive eigenbundle.
If we change $K$, the new $K$ might not be half-integrable so it would not define such T-dual spaces.
Requiring any $K$ to be ``appropriate'' in this sense imposes more severe restrictions:

\begin{lemma}\label{Lemma:SDWeyl}
Let $M$ be a real, 4-dimensional, orientable manifold equipped with a metric $g$. 
If all almost-para-Hermitian structures (of a definite chirality) are half-integrable, then the Weyl curvature of $g$ is (A)SD.
\end{lemma}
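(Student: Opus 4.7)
The strategy is to use Lemma~\ref{Lemma:paraHermitian} to recast the hypothesis as a density statement on the set of shear-free spinor fields, and then to invoke the classical Penrose integrability theorem for $\alpha$-surfaces to force the vanishing of one chirality of the Weyl spinor. I will treat the negative-chirality case explicitly, yielding $\Psi_{ABCD}\equiv 0$ (Weyl self-dual); the positive-chirality case is symmetric — swapping primed with unprimed spinors and using the SD avatar \eqref{W+0} instead of \eqref{W-0} — and yields $\tilde{\Psi}_{A'B'C'D'}\equiv 0$ (Weyl anti-self-dual), so together they cover the disjunction ``(A)SD'' in the statement.

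By Lemma~\ref{Lemma:paraHermitian}, any negative-chirality $(g,K)$ represented via \eqref{KGR} by a pair of non-proportional unprimed spinor fields $(\alpha,\beta)$ is half-integrable if and only if at least one of $\alpha,\beta$ satisfies the shear-free equation \eqref{ShearFree}. Thus the hypothesis reads: for every local pair of non-proportional unprimed spinor fields, at least one of them is shear-free. From this I will extract the following pointwise density claim: at every $x\in M$, all but at most one projective direction in $\mathbb{CP}^{1}|_{x}$ is the value at $x$ of some locally-defined shear-free spinor field. Indeed, suppose two distinct directions $\varphi_{1},\varphi_{2}\in\mathbb{CP}^{1}|_{x}$ admitted no local shear-free extension; choose arbitrary smooth non-proportional local extensions $\alpha,\beta$ of $\varphi_{1},\varphi_{2}$ near $x$ (non-proportionality being an open condition, this is easy on shrinking), form $K$ via \eqref{KGR}, and apply the hypothesis: half-integrability would force one of $\alpha,\beta$ to be shear-free, contradicting the choice of $\varphi_{1},\varphi_{2}$.

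Equivalently, through every $x\in M$ there pass local $\alpha$-surfaces tangent to a Zariski-dense set of $\alpha$-plane directions in $\mathbb{CP}^{1}|_{x}$. By the classical theorem of Penrose on $\alpha$-surface integrability (see e.g.\ \cite[Sec.~7.3]{PR2}), this is equivalent to the vanishing of the ASD Weyl spinor $\Psi_{ABCD}$, i.e.\ to the Weyl tensor of $g$ being self-dual, which is the desired conclusion.

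The main technical hurdle is this last step, i.e.\ the Penrose theorem itself. Its proof proceeds by commuting derivatives in a differentiated version of the shear-free equation \eqref{ShearFree} and using the spinor Ricci identity to extract a polynomial obstruction of the schematic form $\varphi^{A}\varphi^{B}\varphi^{C}\Psi_{ABCD}$ (modulo Ricci contributions that drop out upon appropriate contraction with $\varphi$); requiring this obstruction to vanish on a Zariski-dense subset of $\varphi^{A}\in\mathbb{CP}^{1}$ forces the totally symmetric spinor $\Psi_{ABCD}$ to vanish identically at each point. Since this is a well-known result that can be cited rather than re-derived, the new content of the proof lies in the density reduction of the second paragraph.
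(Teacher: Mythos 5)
Your proof is correct and follows essentially the same route as the paper: both reduce the hypothesis, via Lemma~\ref{Lemma:paraHermitian}, to the pointwise vanishing of the quartic $\Psi_{ABCD}\varphi^{A}\varphi^{B}\varphi^{C}\varphi^{D}$ for sufficiently many directions $\varphi$, which forces $\Psi_{ABCD}\equiv 0$; the paper derives this obstruction explicitly by applying $\varphi^{C}\nabla_{C}{}^{A'}$ to \eqref{ShearFree}, where you cite it as the Penrose integrability result. Your ``all but at most one direction'' density argument is in fact slightly more careful than the paper's direct jump to ``all spinors $\varphi^{A}$'', since the hypothesis only guarantees that \emph{one} of the two representative spinors of each $K$ is shear-free.
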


\begin{proof}
Let $K$ be an arbitrary almost para-Hermitian structure, and assume for concreteness that it has negative chirality.
Then we know that it is represented by two projective spinor fields $\a^{A},\b^{A}$. 
We also know that $K$ is half-integrable if and only if one of these spinor fields satisfies eq. \eqref{ShearFree}.
Let $\varphi^{A}$ denote any of these spinors.
From \eqref{ShearFree} we deduce that $\varphi^{A}\nabla_{AA'}\varphi_{B} = \varphi_{B}\pi_{B'}$ 
for some spinor field $\pi_{B'}$.
Applying now $\varphi^{C}\nabla_{C}{}^{A'}$ to equation \eqref{ShearFree}, one obtains the following 
integrability condition:
\begin{align}
\nonumber 0 ={}& \varphi^{C}\nabla_{C}{}^{A'}(\varphi^{A}\varphi^{B}\nabla_{AA'}\varphi_{B}) \\
\nonumber ={}& 2\pi^{A'}\varphi^{A}\varphi^{B}\nabla_{AA'}\varphi_{B} -\varphi^{C}\varphi^{A}\varphi^{B}\Box_{AC}\varphi_{B} \\
 ={}& -\Psi_{ABCD}\varphi^{A}\varphi^{B}\varphi^{C}\varphi^{D} \label{IntegrabilityConditionSFR}
\end{align}
where $\Box_{AC}=\nabla_{A'(A}\nabla_{C)}{}^{A'}$, and $\Psi_{ABCD}$ is the Weyl curvature spinor 
(see \cite{PR1} for details).
If any almost para-Hermitian structure $K$ is half-integrable, then \eqref{IntegrabilityConditionSFR} must be satisfied by 
all spinors $\varphi^{A}$, which implies that $\Psi_{ABCD}\equiv 0$ and thus the Weyl curvature of $g$ is SD.
(Choosing positive chirality instead leads to ASD Weyl curvature.)
\end{proof}

\begin{remark}\label{Remark:anyKintegrable}
When saying that ``any almost-para-Hermitian structure is {\em half}-integrable'', it is convenient to 
be more specific and say, using the terminology mentioned in Remark \ref{Remark:LparaHermitian},
that ``the manifold is $L$-para Hermitian with respect to any $K$'', i.e. the $(+1)$-eigenbundle of any $K$ is involutive.
From this perspective, if a manifold is $L$-para Hermitian with respect to any $K$, then 
it follows that it is para-Hermitian. 
To see this: let $K$ be an arbitrary $L$-para Hermitian structure, i.e. its $(+1)$-eigenbundle is involutive.
If any other $K$ is also $L$-para Hermitian, then in particular this is true for $K'=-K$. The $(+1)$-eigenbundle 
of $K'$ is then involutive, and it is the $(-1)$-eigenbundle of $K$. 
So $K$ is both $L$-para Hermitian and $\tilde{L}$-para Hermitian, i.e., it is para-Hermitian.
\end{remark}

The (A)SD restriction in lemma \ref{Lemma:SDWeyl} is quite strong.
In Lorentz signature, the left- and right-handed Weyl curvature spinors are complex conjugates of each other, 
which implies that (A)SD Weyl curvature means vanishing Weyl tensor, so the manifold must be conformally flat. 
In Riemannian and split signature, the two Weyl curvature spinors are independent from each other, so one of them can vanish 
while the other one is general. 
In the Riemannian case, if one also asks Ricci-flatness (as well as (A)SD Weyl curvature), then it can be shown 
that the manifold must be hyperk\"ahler.

In the DFT literature, a change of para-complex structure is referred to as a {\em change of polarization}.
As mentioned, an arbitrary change of polarization in general does not lead to a new pair of T-dual `spacetimes', 
unless all para-complex structures are half-integrable.
Another possibility is to consider ``small'' changes of polarization, by which we mean 
deformations of the para-complex structure. 
These issues will be addressed in section \ref{Sec:Twistors}, in connection also with twistor constructions.

\subsection{Lie and Courant algebroids}\label{Sec:algebroids4D}

We have seen that, given a 4-dimensional real manifold $M$ with a metric $g$, any $K\in{\rm Aut}(TM\otimes\mathbb{C})$ 
satisfying conditions \eqref{paraHermitian1} and \eqref{paraHermitian2} is characterized by two projective 
spinor fields $[\a_{A}], [\b_{A}]$. 
The tangent bundle splits as $TM\otimes\mathbb{C} = L\oplus \tilde{L}$,
where $L$ and $\tilde{L}$ are the eigenbundles of $K$, given explicitly in eqs. \eqref{LGR}, \eqref{LtildeGR}.
Because of this splitting, any vector in $TM\otimes\mathbb{C}$ can be decomposed into two pieces, 
as originally done in \eqref{splitvector0}.
We can explicitly describe these pieces using \eqref{LGR}, \eqref{LtildeGR}, 
the identity $P+\tilde{P}=\mathbb{I}$ and the fact that the projectors $P$ and $\tilde{P}$ are 
given by \eqref{projector4d}, \eqref{projectortilde4d}. 
That is, any $X\in\Gamma(TM\otimes\mathbb{C})$ is decomposed as
\begin{equation}\label{splitvector1}
 X^{AA'} = (PX)^{AA'} + (\tilde{P}X)^{AA'} \equiv x^{AA'} + \tilde{x}^{AA'},
\end{equation}
where 
\begin{align}
 & (PX)^{AA'} \equiv x^{AA'} = \beta^{A}x^{A'}, \qquad x^{A'}=\phi^{-1}\alpha_{B}X^{BA'}, \label{CompSplitVector1} \\
 & (\tilde{P}X)^{AA'} \equiv \tilde{x}^{AA'} = \alpha^{A}\tilde{x}^{A'}, \qquad \tilde{x}^{A'}=-\phi^{-1}\beta_{B}X^{BA'}. \label{CompSplitVector2}
\end{align}
Analogous decompositions and notation apply to any other vector $Y^{AA'}=y^{AA'}+\tilde{y}^{AA'}$, etc. 
Similarly, a 1-form $W\in\Gamma(T^{*}M\otimes\mathbb{C})$ is decomposed as 
\begin{equation}\label{split1form}
 W_{AA'}=\omega_{AA'} +\tilde{\omega}_{AA'}= \alpha_{A}\omega_{A'} + \beta_{A}\tilde{\omega}_{A'}, 
\end{equation}
with 
\begin{align*}
 \omega_{A'} ={}& \phi^{-1}\beta^{B}\omega_{BA'}, \\
 \tilde{\omega}_{A'} ={}& -\phi^{-1}\alpha^{B}\omega_{BA'}.
\end{align*}

\begin{remark}\label{Remark:weightsvector}
Ordinary tensor fields have trivial $r,r',w$ weights (in the notation of Definition \ref{Def:weightedsfields}), 
but the spinors $\alpha^{A}, \beta^{A}$ have non-trivial weights (given in Table \ref{Table:weights}).
Therefore, when decomposed into pieces like \eqref{splitvector1} or \eqref{split1form},
the primed spinor parts are also non-trivially weighted.
For example, the spinors $x^{A'}$ and $\tilde{x}^{A'}$ in the decomposition \eqref{CompSplitVector1}-\eqref{CompSplitVector2}
have non-trivial weights, which can be deduced from those of $\phi$ and $\alpha^{A},\beta^{A}$.
This is analogous to the fact that $\nu^{A'},\mu^{A'}$ in \eqref{LGR}, \eqref{LtildeGR} are also weighted.
\end{remark}

We now use the above decomposition to describe the Lie and Courant algebroid structures of 4-manifolds 
admitting non-trivial solutions to \eqref{ShearFree}.
Since this equation involves only a {\em conformal} structure and a {\em projective} spinor field, 
the covariant formalism of Section \ref{sec:CovariantFormalism} is here appropriate to give a formulation 
invariant under the associated gauge freedom.
We recall that the metric is allowed to have any signature.

\begin{lemma}\label{Lemma:LieAlgGR}
Let $(M,[g_{ab}])$ be a 4-dimensional, real conformal structure.
Let $g_{ab}\in[g_{ab}]$, with Levi-Civita connection $\nabla_{AA'}$.
Suppose that there is a projective spinor field $[\b_{A}]$, where any representative $\beta_{A}$ satisfies the equation 
\begin{equation}
 \b^{A}\b^{B}\nabla_{AA'}\b_{B}=0. \label{SFR1}
\end{equation}
Let $\a_{A}$ be any spinor field such that $\e^{AB}\a_{A}\b_{B}\equiv\phi\neq0$.
Let $L\subset TM\otimes\mathbb{C}$ be defined by \eqref{LGR}, 
and let $\mathcal{C}_{A'}$ be the operator introduced in \eqref{CProjections}.
\begin{enumerate}
\item The triple $(L,[\cdot,\cdot]_{L},P)$ is a Lie algebroid, where $P$ is the projector \eqref{projector4d},
and the Lie algebroid bracket is
\begin{equation}\label{LieAlgBracket4D}
  [X,Y]^{AA'}_{L} = \beta^{A} (x^{B'}\mathcal{C}_{B'}y^{A'} - y^{B'}\mathcal{C}_{B'}x^{A'}),
\end{equation}
for any vector fields $X,Y\in\Gamma(L)$.
\item Let $\Lambda^{k}=\wedge^{k}L^{*}, k=0,1,2,...$. The Lie algebroid exterior derivative of $(L,[\cdot,\cdot]_{L},P)$ is given by
\begin{align}
 ({\rm d}^{L}f)_{AA'} ={}& \phi^{-1}\a_{A}\mathcal{C}_{A'}f, \label{LieAlgExtDerGR} \\
  ({\rm d}^{L}\omega)_{AA'BB'} ={}& \phi^{-1}(\mathcal{C}_{C'}\omega^{C'})\a_{A}\a_{B}\epsilon_{A'B'}, \label{LieAlgExtDerGR1}
\end{align}
and ${\rm d}^{L}\omega=0$ for any $\omega\in\Gamma(\Lambda^{k})$ with $k\geq2$, where $\omega^{C'}$ 
in the right hand side of \eqref{LieAlgExtDerGR1} refers to the decomposition \eqref{split1form}.
\item For a 1-form $\omega\in\Gamma(L^{*})$, the generalized Lie derivative along $X\in\Gamma(L)$ is
\begin{equation}
 (\pounds^{L}_{X}\omega)_{AA'} = 
 \a_{A}[x_{A'}(\mathcal{C}_{B'}\omega^{B'})+\mathcal{C}_{A'}(x^{B'}\omega_{B'})].
 \label{GenLieDerivativeGR}
\end{equation}
\end{enumerate}
\end{lemma}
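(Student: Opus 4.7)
The overall plan is to deduce each claim from the general Lie-algebroid machinery (Def. \ref{Def:LieAlgebroid} and Def. \ref{Def:LieAlgDerivatives}) applied to $L$, with the real work consisting of translating the Levi-Civita expression for the Lie bracket of two sections of $L$ into the conformal-and-projective-covariant language of $\mathcal{C}_{A'}$. For claim (1), I first note that by Lemma \ref{Lemma:paraHermitian} the hypothesis \eqref{SFR1} makes $L$ involutive, so the complex analogue of Example \ref{Ex:LieAlgFoliation} already supplies a Lie algebroid structure on $L$ with anchor the inclusion $L \hookrightarrow TM\otimes\mathbb{C}$; since the projector $P$ from \eqref{projector4d} restricts to the identity on $L$, it qualifies as this anchor, and the axioms in Def. \ref{Def:LieAlgebroid} are inherited from the Lie bracket of vector fields. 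To derive the explicit formula \eqref{LieAlgBracket4D}, I would expand $[X,Y]^{AA'}$ for $X^{AA'} = \beta^A x^{A'}$, $Y^{AA'}=\beta^A y^{A'}$ by Leibniz, rewrite \eqref{SFR1} in the form $\beta^B\nabla_{BB'}\beta^A=\rho_{B'}\beta^A$ to factor out an overall $\beta^A$, and then reassemble the remaining derivative $\beta^B \nabla_{BB'}$ together with the $\rho_{B'}$-term and the weight corrections into $\mathcal{C}_{B'}$ via \eqref{cvC} and \eqref{connection1}--\eqref{connection2}.

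For claim (2) I use the intrinsic formulas \eqref{LieAlgExtDer0}--\eqref{LieAlgExtDer1} with anchor $P$ and bracket \eqref{LieAlgBracket4D}. Taking $f \in C^\infty(M)$, \eqref{LieAlgExtDer0} yields $({\rm d}^L f)(X) = \beta^A x^{A'}\nabla_{AA'} f$; writing a 1-form in $L^*$ as $\alpha_A F_{A'}$ via the isomorphism \eqref{isom0} and pairing with $X = \beta^A x^{A'}$ (so that $\alpha_A\beta^A=\phi$) forces $F_{A'}=\phi^{-1}\mathcal{C}_{A'} f$, which is \eqref{LieAlgExtDerGR}. For $\omega_{AA'}=\alpha_A \omega_{A'}\in \Gamma(L^*)$, substituting into \eqref{LieAlgExtDer1} and using the bracket of part (1) produces \eqref{LieAlgExtDerGR1}, the prefactor $\alpha_A\alpha_B\epsilon_{A'B'}$ arising from antisymmetrization over the two $L$-arguments. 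The vanishing of ${\rm d}^L\omega$ for $k\ge 2$ is automatic: $L$ has complex rank $2$, so $\wedge^k L^*=0$ for $k\ge 3$ and \eqref{LieAlgExtDer} is trivially zero. Claim (3) then follows directly from Cartan's formula \eqref{GeneralizedLieDerivative}: since $i_X\omega = \phi\, x^{B'}\omega_{B'}$, applying \eqref{LieAlgExtDerGR} to this scalar and adding $i_X$ of \eqref{LieAlgExtDerGR1} yields \eqref{GenLieDerivativeGR} after collecting the two $\alpha_A \mathcal{C}_{A'}(\cdot)$ contributions.

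The principal technical obstacle is the translation carried out in part (1): the Lie bracket is defined via the Levi-Civita connection $\nabla$, while \eqref{LieAlgBracket4D} is phrased in terms of the covariant operator $\mathcal{C}_{A'}$. Bridging the two requires careful bookkeeping of the conformal and projective weights of the primed components $x^{A'}, y^{A'}$ (which are non-trivial, per Remark \ref{Remark:weightsvector}) and verifying that the shear-free scalar $\rho_{B'}$, the Weyl-form corrections relating $\nabla$ to ${}^{\rm w}\nabla$, and the connection 1-forms $L_{AA'}$, $M_{AA'}$ in \eqref{connection1}--\eqref{connection2} combine precisely to reconstruct $\mathcal{C}_{B'}$. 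This reconciliation is what motivates the formalism developed in Section \ref{sec:CovariantFormalism} and Appendix \ref{Appendix:CovariantFormalism}; once it is established in part (1), the remaining claims follow by direct substitution.
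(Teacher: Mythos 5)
Your proposal is correct and follows essentially the same route as the paper: establish the Lie algebroid axioms from involutivity of $L$ with $P$ as anchor, translate the Levi-Civita Lie bracket into the $\mathcal{C}_{A'}$ language, and then obtain items (2) and (3) by direct substitution into \eqref{LieAlgExtDer0}--\eqref{LieAlgExtDer1} and Cartan's formula. The only difference is cosmetic: where you propose to track the proportionality factor $\rho_{B'}$ from \eqref{SFR1} and the weight corrections by hand, the paper shortcuts this by first observing that the symmetric difference tensor lets one replace $\nabla$ by $\mathcal{C}$ inside the bracket and then invoking the identity $\mathcal{C}_{AA'}\beta^{B}=0$ (eq. \eqref{Cbeta} combined with the shear-free condition), which is exactly the reassembly you describe.
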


\begin{proof}
For the first item, we need to check that the conditions in Definition \ref{Def:LieAlgebroid} are satisfied. 
First of all, we know that $L$ is a vector bundle over $M$. 
We have also seen that the condition \eqref{SFR1} implies that sections of $L$ are in involution with 
respect to the usual Lie bracket $[\cdot,\cdot]$ of vector fields, thus its restriction to $L$, $[\cdot,\cdot]_{L}$, 
indeed maps $\Gamma(L)\times\Gamma(L)$ to $\Gamma(L)$, is skew-symmetric and satisfies the 
Jacobi identity. 
In addition, the identity in $TM$ restricted to $L$ coincides with $P$, and it satisfies the conditions to be an anchor.
So it only remains to prove the expression \eqref{LieAlgBracket4D} for the Lie algebroid bracket. To do this,
notice first that, since ordinary vector fields $X,Y$ have trivial $w,r,r'$ weights (see Remark \ref{Remark:weightsvector}), 
we have for example
\begin{equation*}
 \mathcal{C}_{b}X^{a} = \nabla_{b}X^{a} + Q_{bc}{}^{a}X^{c},
\end{equation*}
where $Q_{bc}{}^{a}$ was defined in \eqref{WeylConnection2}. Since $Q_{bc}{}^{a}=Q_{(bc)}{}^{a}$, we get
\begin{equation*}
 X^{b}\nabla_{b}Y^{a}-Y^{b}\nabla_{b}X^{a}  = X^{b}\mathcal{C}_{b}Y^{a}-Y^{b}\mathcal{C}_{b}X^{a}.
\end{equation*}
Using this and decompositions like \eqref{splitvector1}-\eqref{CompSplitVector1}-\eqref{CompSplitVector2} for $X,Y\in\Gamma(L)$,
we have:
\begin{align*}
 [X,Y]^{AA'}_{L} ={}& X^{BB'}\nabla_{BB'}Y^{AA'}-Y^{BB'}\nabla_{BB'}X^{AA'} \\
 ={}& x^{B'}\b^{B}\mathcal{C}_{BB'}(\b^{A}y^{A'})-y^{B'}\b^{B}\mathcal{C}_{BB'}(\b^{A}x^{A'}) \\
={}& \beta^{A} (x^{B'}\mathcal{C}_{B'}y^{A'} - y^{B'}\mathcal{C}_{B'}x^{A'})
\end{align*}
where in the third line we used \eqref{Cbeta} and \eqref{SFR1}, together with the definition \eqref{CProjections}
for $\mathcal{C}_{A'}$.

Now let us prove the second item. From Definition \ref{Def:LieAlgDerivatives}, the Lie algebroid 
exterior derivative acting on a 0-form $f\in\Gamma(\Lambda^{0})$ is given by $({\rm d}^{L}f)(X)=P(X)(f)$ 
(since here the anchor is the projector $P$).
Since $f$ has trivial weights, we can replace the ordinary $\partial_{AA'}$
implicit in this formula by the covariant derivative $\mathcal{C}_{AA'}$, so we get
\begin{equation*}
 ({\rm d}^{L}f)(X) = (PX)^{AA'}\mathcal{C}_{AA'}f.
\end{equation*}
Using \eqref{CompSplitVector1}, formula \eqref{LieAlgExtDerGR} follows.
Consider now a 1-form $\omega\in\Gamma(\Lambda^{1})$. 
The operator ${\rm d}^{L}$ acting on $\omega$ is given by the expression \eqref{LieAlgExtDer1}, with $\rho\equiv P$. 
Since $\omega,X,Y$ have trivial weights we can, as before, replace $\partial_{AA'}$ by $\mathcal{C}_{AA'}$:
\begin{equation*}
 ({\rm d}^{L}\omega)(X,Y) = (PX)^{AA'}\mathcal{C}_{AA'}(\omega_{BB'}Y^{BB'})
 -(PY)^{AA'}\mathcal{C}_{AA'}(\omega_{BB'}X^{BB'}) - \omega_{AA'}[X,Y]^{AA'}_{L}.
\end{equation*}
Replacing the expressions for $X,Y$ and $[X,Y]_{L}$, this is
\begin{align}
\nonumber  ({\rm d}^{L}\omega)(X,Y)  ={}& x^{B'} \mathcal{C}_{B'}(\phi\omega_{A'}y^{A'})-y^{B'} \mathcal{C}_{B'}(\phi\omega_{A'}x^{A'})
  -\phi\omega_{A'} (x^{B'}\mathcal{C}_{B'}y^{A'} - y^{B'}\mathcal{C}_{B'}x^{A'}) \\
\nonumber ={}& \phi \left[ x^{B'} \mathcal{C}_{B'}(\omega_{A'}y^{A'})-y^{B'} \mathcal{C}_{B'}(\omega_{A'}x^{A'})
  -\omega_{A'} x^{B'}\mathcal{C}_{B'}y^{A'} +\omega_{A'} y^{B'}\mathcal{C}_{B'}x^{A'} \right] \\
  ={}& \phi (x^{A'}y^{B'}-y^{A'}x^{B'})\mathcal{C}_{A'}\omega_{B'} \label{LieAlgExtDerGR1-aux}
\end{align}
where in the second line we used \eqref{Cchi}. Now: 
\begin{align*}
\nonumber x^{A'}y^{B'}-y^{A'}x^{B'} ={}& (\epsilon_{C'D'}x^{C'}y^{D'})\epsilon^{A'B'} \\
\nonumber ={}&(\epsilon_{C'D'}\phi^{-2}\alpha_{C}X^{CC'}\alpha_{D}Y^{DD'})\epsilon^{A'B'} \\
 ={}& (\phi^{-2}\alpha_{C}\alpha_{D}\epsilon_{C'D'}\epsilon^{A'B'}) X^{CC'}Y^{DD'}
\end{align*}
where in the second line we replaced the expression \eqref{CompSplitVector1} for $x^{A'}$, and similarly for $y^{A'}$.
Replacing now in \eqref{LieAlgExtDerGR1-aux}, and noticing that 
\begin{equation*}
  ({\rm d}^{L}\omega)(X,Y) =  ({\rm d}^{L}\omega)_{CC'DD'}X^{CC'}Y^{DD'},
\end{equation*}
formula \eqref{LieAlgExtDerGR1} follows. 
If now $\omega\in\Gamma(\Lambda^{k})$, $k\geq2$, then ${\rm d}^{L}\omega$ vanishes 
since it is a $(k+1)$-form with $k\geq 2$ in a 2-dimensional space.

Finally, to show the third item, from the general expression \eqref{GeneralizedLieDerivative} we have
\begin{equation*}
 (\pounds^{L}_{X}\omega)_{AA'} = X^{BB'}({\rm d}^{L}\omega)_{BB'AA'} + [{\rm d}^{L}(X^{BB'}\omega_{BB'})]_{AA'}.
\end{equation*}
Using the second item, this is
\begin{align*}
\nonumber (\pounds^{L}_{X}\omega)_{AA'} ={}& X^{BB'}\phi^{-1}\alpha_{B}\alpha_{A}\epsilon_{B'A'}\mathcal{C}_{C'}\omega^{C'}
 + \phi^{-1}\alpha_{A}\mathcal{C}_{A'}(\phi x^{B'}\omega_{B'}) \\
 ={}& \alpha_{A}x_{A'}\mathcal{C}_{C'}\omega^{C'}+\alpha_{A}\mathcal{C}_{A'}(x^{B'}\omega_{B'}),
\end{align*}
which proves \eqref{GenLieDerivativeGR}.
\end{proof}

\begin{lemma}\label{Lemma:CourantAlgGR}
Let $(M,[g_{ab}])$ be a 4-dimensional, real conformal structure.
Let $g_{ab}\in[g_{ab}]$, with Levi-Civita connection $\nabla_{AA'}$.
Suppose that there is a projective spinor field $[\b_{A}]$, where any representative $\beta_{A}$ satisfies the equation 
\begin{equation}
 \b^{A}\b^{B}\nabla_{AA'}\b_{B}=0. \label{SFR2}
\end{equation}
Let $\a_{A}$ be any spinor field such that $\e^{AB}\a_{A}\b_{B}=\phi\neq0$.
Then the quadruple
\begin{equation}\label{CourantGR}
 (TM\otimes\mathbb{C}, g, P, \llb \cdot,\cdot \rrb)
\end{equation}
is a Courant algebroid, where the anchor $P$ and the Dorfman bracket are respectively
\begin{align}
 P^{AA'}{}_{BB'} ={}& \phi^{-1}\beta^{A}\a_{B}\delta^{A'}{}_{B'}, \\
 \llb X,Y \rrb^{AA'} ={}& 
 \beta^{A} (x^{B'}\mathcal{C}_{B'}y^{A'} - y^{B'}\mathcal{C}_{B'}x^{A'})
 +\a^{A} \left( x^{A'}\mathcal{C}_{B'}\tilde{y}^{B'} - y^{A'}\mathcal{C}_{B'}\tilde{x}^{B'} 
 +\mathcal{C}^{A'}(x^{B'}\tilde{y}_{B'}) \right) \label{Dorfman4D}
\end{align}
with $\mathcal{C}_{A'}$ the operator defined in \eqref{CProjections}.
\end{lemma}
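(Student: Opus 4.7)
The strategy is to invoke Proposition \ref{Prop:Vaisman}, applied to the Lie algebroid furnished by Lemma \ref{Lemma:LieAlgGR}. Concretely, by Lemma \ref{Lemma:paraHermitian} (item~1), the shear-free condition \eqref{SFR2} is equivalent to half-integrability of the almost para-Hermitian structure whose $(+1)$-eigenbundle is $L$ (eq.~\eqref{LGR}); thus $L$ is involutive, and Lemma \ref{Lemma:LieAlgGR} supplies the Lie algebroid $(L,[\cdot,\cdot]_{L},P)$. Proposition \ref{Prop:Vaisman} then yields the Courant algebroid $(TM\otimes\mathbb{C},g,P,\llb\cdot,\cdot\rrb)$ with Dorfman bracket given by \eqref{dorfTB}, establishing the abstract existence of the structure. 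The remaining work is to translate \eqref{dorfTB} into the explicit spinor form \eqref{Dorfman4D}.

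For that translation, sections are first decomposed as $X=x+\tilde{x}$, $Y=y+\tilde{y}$ per \eqref{splitvector1}--\eqref{CompSplitVector2}. The $L$-valued entry of \eqref{dorfTB} is $[x,y]_{L}$, already computed as \eqref{LieAlgBracket4D} in Lemma \ref{Lemma:LieAlgGR}, and this directly reproduces the $\beta^{A}$-term of \eqref{Dorfman4D}. For the $L^{*}$-valued entry, one first identifies $\tilde{y}_{\flat}\in L^{*}$: lowering indices with $g_{ab}=\epsilon_{AB}\epsilon_{A'B'}$ gives $(\tilde{y}_{\flat})_{AA'}=-\alpha_{A}\tilde{y}_{A'}$, which in the decomposition \eqref{split1form} is indeed a pure $L^{*}$-element (i.e. it annihilates $\tilde{L}$) with primed component proportional to $\tilde{y}_{A'}$.

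Next, one applies the explicit formulas \eqref{GenLieDerivativeGR} and \eqref{LieAlgExtDerGR} to evaluate $\pounds^{L}_{x}\tilde{y}_{\flat}$, $\pounds^{L}_{y}\tilde{x}_{\flat}$, and ${\rm d}^{L}g(\tilde{x},y)$. Each term is of the form $\alpha_{A}(\cdots)_{A'}$, i.e. lies in $L^{*}$. The inverse sharp isomorphism $\sharp:L^{*}\to\tilde{L}$ sends such an element to one proportional to $\alpha^{A}(\cdots)^{A'}$, which is therefore the $\tilde{L}$-valued $\alpha^{A}$-part of \eqref{Dorfman4D}. Combining the three contributions — the two generalized Lie derivatives and the exterior derivative of $g(\tilde{x},y)=\phi\,\tilde{x}^{B'}y_{B'}$ (treated as a 0-form) — the cross-terms assemble to yield precisely the bracket $x^{A'}\mathcal{C}_{B'}\tilde{y}^{B'}-y^{A'}\mathcal{C}_{B'}\tilde{x}^{B'}+\mathcal{C}^{A'}(x^{B'}\tilde{y}_{B'})$ appearing in \eqref{Dorfman4D}.

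The principal obstacle is the careful bookkeeping of signs, factors of $\phi$, and the weights carried by $x^{A'},\tilde{x}^{A'}$ (inherited from $\alpha^{A},\beta^{A}$ through Table \ref{Table:weights}). The use of the weighted covariant derivative $\mathcal{C}_{A'}$ in place of $\nabla_{A'}$ is essential here, since it is precisely designed to act covariantly on such weighted primed-spinor fields; this is why the formulas of Lemma \ref{Lemma:LieAlgGR} directly extend to the present setting without extra correction terms. A useful final check is the morphism property \eqref{IdentityCourant2}: applying $P$ to \eqref{Dorfman4D} annihilates the $\alpha^{A}$-term (since $P^{AA'}{}_{BB'}\alpha^{B}=0$) and leaves exactly $[x,y]_{L}$, in agreement with $P(\llb X,Y\rrb)=[P(X),P(Y)]$.
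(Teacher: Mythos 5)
Your proposal is correct and follows essentially the same route as the paper's proof: Lemma \ref{Lemma:LieAlgGR} together with Proposition \ref{Prop:Vaisman} (via Proposition \ref{Prop:CourantAlgebroid}) give the abstract Courant algebroid structure, and the explicit spinor formulas for $\flat$, $\pounds^{L}$ and ${\rm d}^{L}$ are then used to translate \eqref{dorfTB} into \eqref{Dorfman4D}. One small slip: with the conventions \eqref{RaisingLowering} the two $\epsilon$-lowerings contribute cancelling signs, so $(\tilde{y}_{\flat})_{AA'}=+\alpha_{A}\tilde{y}_{A'}$ rather than $-\alpha_{A}\tilde{y}_{A'}$, which is in fact what is needed for the cross-terms to assemble into the bracket you quote.
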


\begin{proof}
Define the bundle $L\subset TM\otimes\mathbb{C}$ by \eqref{LGR}; then
from Lemma \ref{Lemma:LieAlgGR}, the triple $(L,[\cdot,\cdot]_{L},P)$ is a Lie algebroid.
Thus, applying Proposition \ref{Prop:CourantAlgebroid}, the vector bundle $L\oplus L^{*}$ 
together with the maps defined by \eqref{anch}, \eqref{innerp} and \eqref{Dorf}, constitute  
a Courant algebroid. 
Using now the pair of spinors $(\a_{A},\b_{A})$ to define, via \eqref{KGR}, a tensor field $K$ satisfying 
conditions \eqref{paraHermitian1} and \eqref{paraHermitian2}, the eigenbundles of $K$ are given by 
\eqref{LGR} and \eqref{LtildeGR} and we have the splitting $TM\otimes\mathbb{C}=L\oplus\tilde{L}$.
We can then apply Proposition \ref{Prop:Vaisman}, so the quadruple \eqref{CourantGR} is also a Courant algebroid.
The anchor is simply the projector \eqref{projector4d}.
What remains now is to translate the expression \eqref{dorfTB} for the Dorfman bracket to the language we are using in this section.
Using \eqref{CompSplitVector1}, \eqref{CompSplitVector2}, \eqref{LieAlgExtDerGR} and \eqref{GenLieDerivativeGR}, we have
\begin{align}
 & (\tilde{x}_{\flat})_{AA'} = \epsilon_{AB}\epsilon_{A'B'}\tilde{x}^{BB'} = \a_{A}\tilde{x}_{A'}, \\
 & (\tilde{y}_{\flat})_{AA'} = \epsilon_{AB}\epsilon_{A'B'}\tilde{y}^{BB'} = \a_{A}\tilde{y}_{A'}, \\
 & g(\tilde{x},y) = \epsilon_{AB}\epsilon_{A'B'}\tilde{x}^{AA'}y^{BB'} = \phi\tilde{x}_{A'}y^{A'}, \\
 & ({\rm d}^{L}g(\tilde{x},y))_{AA'} = \a_{A}\mathcal{C}_{A'}(\tilde{x}_{B'}y^{B'}), \label{AuxDorf4D1} \\
 & [\pounds^{L}_{x}(\tilde{y}_{\flat})]_{AA'} = \a_{A}[x_{A'}\mathcal{C}_{B'}\tilde{y}^{B'} 
 +\mathcal{C}_{A'}(x^{B'}\tilde{y}_{B'})]. \label{AuxDorf4D2}
\end{align}
The Dorfman bracket \eqref{dorfTB} is
\begin{equation*}
 \llbracket X,Y \rrbracket^{AA'} = [x,y]^{AA'} + (\pounds^{L}_{x}(\tilde{y}_{\flat})-\pounds^{L}_{y}(\tilde{x}_{\flat})+{\rm d}^{L}g(\tilde{x},y))^{AA'}.
\end{equation*}
Since $x,y$ are sections of $L$, we have $[x,y]^{AA'}=[x,y]^{AA'}_{L}$, so we can use formula \eqref{LieAlgBracket4D}.
Replacing also \eqref{AuxDorf4D1}, \eqref{AuxDorf4D2}, we get
\begin{align*}
\llbracket X,Y \rrbracket^{AA'} ={}&  \beta^{A} (x^{B'}\mathcal{C}_{B'}y^{A'} - y^{B'}\mathcal{C}_{B'}x^{A'}) 
 +  \a^{A}[x^{A'}\mathcal{C}_{B'}\tilde{y}^{B'} +\mathcal{C}^{A'}(x^{B'}\tilde{y}_{B'})]  \\
& - \a^{A}[y^{A'}\mathcal{C}_{B'}\tilde{x}^{B'} +\mathcal{C}^{A'}(y^{B'}\tilde{x}_{B'})] 
+\a^{A}\mathcal{C}^{A'}(\tilde{x}_{B'}y^{B'}) \\
 ={}& \beta^{A} (x^{B'}\mathcal{C}_{B'}y^{A'} - y^{B'}\mathcal{C}_{B'}x^{A'}) 
  +\alpha^{A}\left( x^{A'}\mathcal{C}_{B'}\tilde{y}^{B'} - y^{A'}\mathcal{C}_{B'}\tilde{x}^{B'} + \mathcal{C}^{A'}(x^{B'}\tilde{y}_{B'})\right)
\end{align*}
which is \eqref{Dorfman4D}.
\end{proof}

Recall from Remark \ref{Remark:deRham} that a Lie algebroid has naturally associated a 
cochain complex $(\Gamma(\Lambda^{\bullet}),{\rm d}^{L})$.
In our case this is
\begin{equation}\label{derham0}
 0 \to \Gamma(\Lambda^{0}) \xrightarrow{{\rm d}^{L}} \Gamma(\Lambda^{1}) 
 \xrightarrow{{\rm d}^{L}} \Gamma(\Lambda^{2}) \to 0.
\end{equation}
We can think of this as a (twisted) de Rham complex. Since a twisted de Rham complex is locally exact, if an 
object $\varphi$ satisfies ${\rm d}^{L}\varphi=0$, then there exists, {\em locally}\footnote{Of course, {\em global} 
existence of potentials is a different story, where one has to take into account the cohomology of the complex \eqref{derham0}.}, 
another field $\psi$ such that $\varphi={\rm d}^{L}\psi$. 
Therefore, the Lie algebroid structure leads to a ``potentialization'' scheme. 
For example, suppose $W_{AA'}=\alpha_{A}\omega_{A'}\in\Gamma(\Lambda^{1})$ satisfies 
${\rm d}^{L}W=0$, then there exists $f\in\Gamma(\Lambda^{0})$ such that $W={\rm d}^{L}f$.
Using \eqref{LieAlgExtDerGR}-\eqref{LieAlgExtDerGR1}, the equation ${\rm d}^{L}W=0$ 
is $\mathcal{C}_{A'}\omega^{A'}=0$, and the equation $W={\rm d}^{L}f$ is $W_{AA'}=\phi^{-1}\alpha_{A}\mathcal{C}_{A'}f$.
In other words: 
\begin{equation}\label{potentials}
 \mathcal{C}_{A'}\omega^{A'}=0 \quad \Rightarrow \quad \omega_{A'}=\phi^{-1}\mathcal{C}_{A'}f
\end{equation}
for some $f$. Such a basic `potentialization' scheme turns out to be very powerful when applied to problems of interest in relativity 
(this will be the subject of a separate work). 
Notice, however, that the weights of the potentials here are restricted by the fact that the spaces $\Gamma(\Lambda^{k})$ 
in the complex \eqref{derham0} are composed of objects with zero weights.
If we want to study whether the complex \eqref{derham0} can be generalized to fields with other weights, 
we are led to the question of constructing Lie algebroids for weighted fields.
This turns out to be a bit tricky, and will be analysed in the next subsection.

\subsection{Weighted algebroids}\label{Sec:algebroidsweight}

The algebroids constructed in lemmas \ref{Lemma:LieAlgGR} and \ref{Lemma:CourantAlgGR} consist of vectors with 
trivial weights (in the sense of def. \ref{Def:weightedsfields}). 
From the discussion at the end of the previous subsection, we are also interested in 
analysing whether spaces of non-trivially weighted objects can be given an algebroid structure. 
We will see here that ordinary Lie algebroids do not seem to be compatible with weighted fields.

Let $L$ be the bundle \eqref{LGR}, where $\beta^{A}$ satisfies \eqref{ShearFree}.
Consider the weighted vector bundle $L_{(r,r';w)}:=L\otimes\mathbb{S}(r,r';w)$, where $\mathbb{S}(r,r';w)$ 
is the line bundle defined in \eqref{weightedlinebundles}. An element of $L_{(r,r';w)}$ is a vector 
with weights $(r,r';w)$ (so $L$ is simply $L_{(0,0;0)}$):
\begin{equation*}
 X\in L_{(r,r';w)} \quad \Leftrightarrow \quad X^{a}=\beta^{A}x^{A'} 
 \quad \text{and } \quad X\xrightarrow{G_{o}} \lambda^{r}\mu^{r'}\Omega^{w}X.
\end{equation*}
$L_{(r,r';w)}$ is a vector bundle over $M$. In order to construct a Lie algebroid, 
we also need an anchor and a Lie bracket satisfying the conditions given in Def. \ref{Def:LieAlgebroid}.

Let us first focus on the bracket.
Given two weighted vector fields $X\in\Gamma(L_{(r_1,r'_1;w_1)})$, $Y\in\Gamma(L_{(r_2,r'_2;w_2)})$, 
we have a natural candidate for a bracket operation:
\begin{equation}\label{wLiebracket}
 [X,Y]^{a}_{\mathcal{C}} := X^{b}\mathcal{C}_{b}Y^{a} - Y^{b}\mathcal{C}_{b}X^{a}.
\end{equation}
This is skew-symmetric, and, since $\mathcal{C}_{a}$ satisfies \eqref{transfoCD} and $\mathcal{C}_{b}\beta^{A}=0$, 
the result of $[X,Y]^{a}_{\mathcal{C}}$ is again a vector of the form $\beta^{A}\pi^{A'}$ with well-defined weights. 
However, the first problem that arises is that the weights of $[X,Y]^{a}_{\mathcal{C}}$ are 
$(r_1+r_2,r'_1+r'_2;w_1+w_2)$, so $[X,Y]^{a}_{\mathcal{C}}$ is neither an element 
of $L_{(r_1,r'_1;w_1)}$ nor of $L_{(r_2,r'_2;w_2)}$, and consequently, we cannot use $[\cdot,\cdot]_{\mathcal{C}}$ 
as a Lie bracket for a particular $L_{(r,r';w)}$.  
More precisely, what we have is
\begin{equation}\label{gradedbracket}
 [\cdot,\cdot]_{\mathcal{C}}:\Gamma(L_{(r_1,r'_1;w_1)})\times\Gamma(L_{(r_2,r'_2;w_2)})
 \to \Gamma(L_{(r_1+r_2, r'_1+r'_2; w_1+w_2)}).
\end{equation}
This suggests that we consider the vector bundle with graded fibers 
\begin{equation}\label{gradedAlgebroid}
 \mathbb{L}=\bigoplus_{(r,r';w)} L_{(r,r';w)},
\end{equation}
where the sum runs over all possible values of $(r,r';w)$.
We then have $[\cdot,\cdot]_{\mathcal{C}}:\Gamma(\mathbb{L})\times\Gamma(\mathbb{L})\to\Gamma(\mathbb{L})$. 
So the first item in Def. \ref{Def:LieAlgebroid} is satisfied, and we can try to 
give a Lie algebroid structure to $\mathbb{L}$.

The second item in Def. \ref{Def:LieAlgebroid} requires the Jacobi identity for $[\cdot,\cdot]_{\mathcal{C}}$ to hold.
Since $\mathcal{C}_{b}$ {\em cannot} be replaced by $\partial_{b}$ in \eqref{wLiebracket} 
(because $X$ and $Y$ now have non-trivial weights), whether or not the Jacobi identity for $[\cdot,\cdot]_{\mathcal{C}}$ 
is satisfied is a non-trivial issue. To investigate this, we use the Jacobiator:
\begin{equation}\label{Jacobiator}
 {\rm Jac}_{\mathcal{C}}(X,Y,Z)
 :=[X,[Y,Z]_{\mathcal{C}}]_{\mathcal{C}} + [Z,[X,Y]_{\mathcal{C}}]_{\mathcal{C}} + [Y,[Z,X]_{\mathcal{C}}]_{\mathcal{C}}.
\end{equation}
The following result shows that in order for this to vanish, we also need to require (half-)algebraic speciality of the Weyl curvature:
\begin{lemma}\label{Lemma:Jacobiator}
Suppose that the spinor field $\beta^{A}$ is shear-free. 
Then the Jacobiator \eqref{Jacobiator} vanishes if and only if $\beta^{A}$ is a repeated principal spinor of the ASD Weyl curvature.
\end{lemma}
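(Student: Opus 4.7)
The plan is to unfold \eqref{Jacobiator} directly and show that, after using the shear-free property, the only obstruction to the Jacobi identity is a curvature term proportional to $\Psi_{ABCD}\b^{B}\b^{C}\b^{D}$. First I would note that any $X\in\G(\mbb{L})$ has the form $X^{a}=\b^{A}x^{A'}$ with $x^{A'}$ a weighted primed spinor, and that the shear-free hypothesis \eqref{SFR1}, together with the way the 1-forms $L_{AA'}$ and $M_{AA'}$ are built in \eqref{connection1}-\eqref{connection2}, imply $\b^{B}\mc{C}_{BB'}\b^{A}=0$. Setting $\mc{C}_{B'}:=\b^{B}\mc{C}_{BB'}$, the bracket \eqref{wLiebracket} then takes the reduced form
\begin{equation*}
 [X,Y]^{a}_{\mc{C}}=\b^{A}\bigl( x^{B'}\mc{C}_{B'}y^{A'}-y^{B'}\mc{C}_{B'}x^{A'} \bigr),
\end{equation*}
which is again of the shape $\b^{A}\cdot(\text{primed part})$ with graded weights as in \eqref{gradedbracket}, so the iterated bracket is unambiguous.

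Next I would substitute this bracket into itself in \eqref{Jacobiator} and expand. The Leibniz terms, in which $\mc{C}_{B'}$ acts once on one of the primed slots of the inner bracket, organize into cyclic sums that cancel by the same algebraic manipulation used in the proof of Lemma \ref{Lemma:LieAlgGR} --- they are weight-independent and rely only on the antisymmetry of the bracket. What remains is a second-order residual, which after grouping reads
\begin{equation*}
 {\rm Jac}_{\mc{C}}(X,Y,Z)^{a} = \b^{A}\bigl( x^{B'}y^{C'}[\mc{C}_{B'},\mc{C}_{C'}]z^{A'} + \text{cyclic in } x,y,z \bigr).
\end{equation*}
Using $\b^{B}\mc{C}_{BB'}\b^{C}=0$ once more, the inner commutator reduces to the curvature 2-form of $\mc{C}$ contracted with $\b^{B}\b^{C}$ and acting on a weighted primed spinor. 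For \emph{weightless} primed spinors the obstruction already vanishes, since in that case Lemma \ref{Lemma:LieAlgGR} supplies a Lie algebroid and hence a first-Bianchi-like cancellation of the cyclic sum; the task is therefore to isolate the purely weight-dependent residual.

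For a primed spinor $z^{A'}$ of type $(r,r';w)$, the commutator $[\mc{C}_{b},\mc{C}_{c}]z^{A'}$ picks up, in addition to the unweighted Riemann piece, a weight-coupled contribution of the form $(w\,F^{f}_{bc}+r\,F^{L}_{bc}+r'\,F^{M}_{bc})z^{A'}$, with $F^{f},F^{L},F^{M}$ the curvature 2-forms of the 1-forms $f_{AA'},L_{AA'},M_{AA'}$. Using the explicit expressions \eqref{connection1}-\eqref{connection2} together with the identities from Appendix \ref{Appendix:curvature}, and contracting with $\b^{B}\b^{C}$, the $\e_{BC}$-proportional pieces (which would carry $\tilde{\Psi}_{A'B'C'D'}$ on primed spinors) are killed, while the $\e_{B'C'}$-proportional Ricci-type pieces cancel between the $f,L,M$ curvatures and the Weyl-connection correction to Levi-Civita. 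The only surviving contribution is the one produced by $\b^{B}\b^{C}[\nabla,\nabla]\b^{D}$ inside $F^{M}$ (and an analogous term from $F^{L}$), which is proportional to $\Psi_{ABCD}\b^{B}\b^{C}\b^{D}$. Because the weights of $X,Y,Z$ are arbitrary, the vanishing of the resulting cyclic sum on all triples of $\G(\mbb{L})$ is therefore equivalent to $\Psi_{ABCD}\b^{B}\b^{C}\b^{D}=0$, which is exactly the statement that $\b^{A}$ is a repeated principal spinor of $\Psi_{ABCD}$.

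The main obstacle is the curvature bookkeeping in the previous paragraph: one has to track carefully how the $f,L,M$ corrections to the Weyl connection interact (remembering that $L$ and $M$ themselves involve $f$ together with $\nabla\a$ and $\nabla\b$) in order to verify that the Ricci-type and scalar-curvature contributions cancel symmetrically through the cyclic sum, leaving only the Weyl combination $\Psi_{ABCD}\b^{B}\b^{C}\b^{D}$. Granting the explicit formulas from Appendix \ref{Appendix:curvature}, the algebraic reduction is short, but this cancellation is where essentially all the content of the lemma lies.
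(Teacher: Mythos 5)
Your proposal follows essentially the same route as the paper: unfold the Jacobiator of $[\cdot,\cdot]_{\mathcal{C}}$, use $\b^{B}\mathcal{C}_{BB'}\b^{A}=0$ to cancel the first-order terms, and reduce the obstruction to the operator $\mathcal{C}_{A'}\mathcal{C}^{A'}$ acting on weighted primed spinors, which (by the curvature bookkeeping you sketch, carried out in the paper as Lemma \ref{Lemma:CC0} of Appendix \ref{Appendix:curvature} via the identities \eqref{identityPhiGtilde}--\eqref{identityPsibbb}) vanishes for all weights precisely when $\Psi_{ABCD}\b^{B}\b^{C}\b^{D}=0$. The only difference is that you re-derive the appendix lemma inline rather than citing it; the argument is correct.
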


\begin{proof}
Take three arbitrary vector fields in $\Gamma(\mathbb{L})$ (with possibly different weights), 
$X^{a}=\beta^{A}x^{A'}$, $Y^{a}=\beta^{A}y^{A'}$, $Z^{a}=\beta^{A}z^{A'}$. 
A tedious but straightforward calculation shows that 
\begin{equation*}
 ({\rm Jac}_{\mathcal{C}}(X,Y,Z))^{AA'}
 =\beta^{A}[(x_{C'}y^{C'})(\mathcal{C}_{B'}\mathcal{C}^{B'}z^{A'}) - (x_{C'}z^{C'})(\mathcal{C}_{B'}\mathcal{C}^{B'}y^{A'})
 +(y_{C'}z^{C'})(\mathcal{C}_{B'}\mathcal{C}^{B'}x^{A'})].
\end{equation*}
Therefore, the Jacobi identity for $[\cdot,\cdot]_{\mathcal{C}}$ is satisfied if and only if 
\begin{equation}\label{flatconnection}
 \mathcal{C}_{A'}\mathcal{C}^{A'} = 0
\end{equation}
when acting on any weighted primed spinor field. This equation is investigated in Appendix \ref{Appendix:curvature},
where the proof of Lemma \ref{Lemma:CC0} shows that equation \eqref{flatconnection} is satisfied
if and only if $\Psi_{ABCD}\beta^{B}\beta^{C}\beta^{D}=0$, i.e., $\beta^{A}$ is a repeated 
principal spinor of the ASD Weyl curvature.
\end{proof}

Let us now focus on the construction of the anchor.
From Def. \ref{Def:LieAlgebroid}, this should be a map $\rho:\mathbb{L}\to TM\otimes\mathbb{C}$.
An element $X\in L_{(r,r';w)}\subset\mathbb{L}$ has non-trivial weights, so we cannot use the 
projector $P$ as an anchor as before, since $PX$ has also non-trivial weights and thus it is not an element of $TM\otimes\mathbb{C}$.
Therefore, to any weighted vector $X$ we must somehow associate a vector $\rho(X)$ with trivial weights.
To this end, consider three scalar fields $A,B,C$ of types $(1,0;0)$, $(0,1;0)$ and $(0,0;1)$ respectively; that is:
\begin{equation*}
 A\xrightarrow{G_{o}} \lambda A, \qquad  B\xrightarrow{G_{o}} \mu B, \qquad  C\xrightarrow{G_{o}} \Omega C.
\end{equation*}
Then we can define a map 
\begin{equation}\label{weightedanchor}
 \rho : L_{(r,r';w)}\to TM\otimes\mathbb{C}, \qquad  \rho(X) := A^{-r}B^{-r'}C^{-w}X.
\end{equation}
Although this is rather {\em ad hoc} since we have not specified anything about the scalars $A,B,C$, 
we restrict the possible choices of them by requiring 
\begin{equation}\label{conditionscalars}
 \mathcal{C}_{A'}A = \mathcal{C}_{A'}B = \mathcal{C}_{A'}C = 0
\end{equation}
where as before $\mathcal{C}_{A'}=\beta^{A}\mathcal{C}_{AA'}$. 
In Lemma \ref{Lemma:CC0} it is shown that, if $\beta^{A}$ is shear-free and a repeated principal spinor,
the integrability condition $\mathcal{C}^{A'}\mathcal{C}_{A'}\Phi=0$ is satisfied for any weighted scalar field $\Phi$, 
so there are indeed solutions to \eqref{conditionscalars}.
With the choice \eqref{conditionscalars}, if $X\in\Gamma(L_{(r_1,r'_1;w_1)})$, $Y\in\Gamma(L_{(r_2,r'_2;w_2)})$, we have
\begin{align*}
 \rho([X,Y]_{\mathcal{C}})^{a} ={}& A^{-(r_1+r_2)}B^{-(r'_1+r'_2)}C^{-(w_1+w_2)}[X,Y]^{a}_{\mathcal{C}} \\
 ={}& A^{-(r_1+r_2)}B^{-(r'_1+r'_2)}C^{-(w_1+w_2)}(X^{b}\mathcal{C}_{b}Y^{a} - Y^{b}\mathcal{C}_{b}X^{a}) \\
 ={}&  A^{-r_1}B^{-r'_1}C^{-w_1}X^{b}\mathcal{C}_{b}(A^{-r_2}B^{-r'_2}C^{-w_2}Y^{a}) 
 - A^{-r_2}B^{-r'_2}C^{-w_2}Y^{b}\mathcal{C}_{b}(A^{-r_1}B^{-r'_1}C^{-w_1}X^{a}) \\
 ={}& \rho(X)^{b}\mathcal{C}_{b}(\rho(Y)^{a}) -  \rho(Y)^{b}\mathcal{C}_{b}(\rho(X)^{a}) \\
 ={}& [\rho(X),\rho(Y)]^{a}
\end{align*}
where in the third line we used \eqref{conditionscalars}, in the fourth we used the definition \eqref{weightedanchor}, 
and in the fifth the fact that $\rho(X), \rho(Y)$ have trivial weights. 
Therefore, we see that with the choice of the anchor \eqref{weightedanchor}-\eqref{conditionscalars} and 
the bracket \eqref{wLiebracket}, the map \eqref{weightedanchor} is a morphism 
and the third item in the definition \ref{Def:LieAlgebroid} of a Lie algebroid is satisfied.

The last condition that we need to investigate is the fourth item in Def. \ref{Def:LieAlgebroid}, namely 
the Leibniz rule for the bracket and the anchor: 
\begin{equation}\label{Leibniz?}
 [X,fY]_{\mathcal{C}} \stackrel{?}{=} (\rho(X)f) Y +f [X,Y]_{\mathcal{C}}
\end{equation}
for any $X\in\Gamma(L_{(r_1,r'_1;w_1)})$, $Y\in\Gamma(L_{(r_2,r'_2;w_2)})$ and $f\in C^{\infty}(M)$. 
However, the gradation \eqref{gradedbracket} of the bracket implies 
immediately that we cannot expect this condition to be satisfied:
If $f\in C^{\infty}(M)$ is an ordinary function, 
then $fY\in\Gamma(L_{(r_2,r'_2;w_2)})$, so, because of \eqref{gradedbracket},
the left hand side of \eqref{Leibniz?} is an element of $\Gamma(L_{(r_1+r_2,r'_1+r_2;w_1+w_2)})$; 
but in the right hand side, while the second term is in this space as well,
the first term lives in $\Gamma(L_{(r_2,r'_2;w_2)})$, so eq. \eqref{Leibniz?} is generally inconsistent. 
(There is the possibility however of taking $\rho\equiv 0$, but this is not something that we wish to consider here; 
see \eqref{ordinaryLeibniz} below.)
In other words, asking the condition \eqref{Leibniz?} to hold is incompatible with the gradation 
(assuming $\rho\neq0$). 
The structure we appear to need is then some sort of ``graded Lie algebroid'' 
(if such a thing can be defined at all).
Here we simply observe that computation of the left hand side of \eqref{Leibniz?} gives
\begin{equation}\label{ordinaryLeibniz}
 [X,fY]_{\mathcal{C}} = X(f) \; Y + f [X,Y]_{\mathcal{C}},
\end{equation}
which is the Leibniz rule satisfied by the ordinary Lie bracket of vector fields.

In conclusion, the triple $(\mathbb{L}, \rho, [\cdot,\cdot]_{\mathcal{C}})$, where $\mathbb{L}$, $\rho$ and 
$ [\cdot,\cdot]_{\mathcal{C}}$ are defined respectively by \eqref{gradedAlgebroid}, 
\eqref{weightedanchor}-\eqref{conditionscalars} and \eqref{wLiebracket}, satisfies the conditions 
to be a Lie algebroid {\em except} for the Leibniz rule \eqref{Leibniz?}, which must be replaced by the 
ordinary Leibniz rule \eqref{ordinaryLeibniz}.
More precisely, the Leibniz rule \eqref{Leibniz?} {\em cannot} be compatible with the gradation of a vector bundle 
(when $\rho\neq0$), so it seems that one would need to find an appropriate modification of 
it in order to accommodate a situation with graded fibers.

Another possibility is to introduce a notion of ``weighted algebroids'', by simply 
replacing the tangent bundle in the original definition \ref{Def:LieAlgebroid} by the weighted tangent 
bundle $TM^{\mathbb{C}}_{(r,r';w)}=TM\otimes\mathbb{C}\otimes\mathbb{S}(r,r';w)$. 
The bracket \eqref{wLiebracket} is also defined for any section of $TM^{\mathbb{C}}_{(r,r';w)}$, 
so the ordinary Lie bracket of the tangent bundle could be replaced by \eqref{wLiebracket}.
A ``weighted Lie algebroid'' would then be given by definition \ref{Def:LieAlgebroid} and its four items, 
but using $TM^{\mathbb{C}}_{(r,r';w)}$ instead of the tangent bundle.
Then, by simply taking the inclusion map $L_{(r,r';w)}\hookrightarrow TM^{\mathbb{C}}_{(r,r';w)}$ as ``anchor'' 
(in other words, $\rho=P$ again), the triple $(\mathbb{L}, P, [\cdot,\cdot]_{\mathcal{C}})$ 
is a ``weighted Lie algebroid'' in the above sense.

The above definition is completely {\em ad hoc}, and it is not clear if one gains anything at all 
by introducing such a structure, apart from perhaps adapting the Lie algebroid definition to 
a situation with graded fibers in this particular case. 
However, our motivation for analysing this issue was, as mentioned at the end subsection \ref{Sec:algebroids4D}, 
to see if the differential complex \eqref{derham0} can be generalized to objects with arbitrary weights.
The fact that \eqref{derham0} is indeed a differential complex is a consequence of: 
$(i)$ the morphism property of the anchor, and $(ii)$ the vanishing Jacobiator for the Lie algebroid bracket.
If we now take the inclusion $L_{(r,r';w)}\hookrightarrow TM^{\mathbb{C}}_{(r,r';w)}$ as anchor, then 
it is trivially a morphism, and the Jacobiator for the bracket \eqref{wLiebracket} has been investigated
in Lemma \ref{Lemma:Jacobiator}.
Therefore, it still makes sense to try to construct an analogue of the complex \eqref{derham0}. 
To this end, we first introduce weighted differential forms as sections of the space\footnote{Note that 
$\Lambda^{k}_{(r,r';w)}\neq \wedge^{k}(L\otimes\mathbb{S}(r,r';w))^{*}$.} 
$\Lambda^{k}_{(r,r';w)}:=(\wedge^{k}L^{*})\otimes\mathbb{S}(r,r';w)$, and we define the 
``weighted Lie algebroid differential'',
\begin{equation}
 {\rm d}^{\mathbb{L}} : \Gamma(\Lambda^{k}_{(r,r';w)}) \to \Gamma(\Lambda^{k+1}_{(r,r';w)}), \label{weightedd}
\end{equation}
by its action on weighted 0- and 1-forms, in close analogy to \eqref{LieAlgExtDer0}-\eqref{LieAlgExtDer1}:
\begin{align}
 ({\rm d}^{\mathbb{L}}f)(X) :={}& \mathcal{C}_{X}f, \label{weightedd0} \\
 ({\rm d}^{\mathbb{L}}\omega)(X,Y) :={}& \mathcal{C}_{X}\omega(Y) - \mathcal{C}_{Y}\omega(X) - \omega([X,Y]_{\mathcal{C}}).
  \label{weightedd1}
\end{align}
Explicitly, these expressions are formally equal to the right hand sides of \eqref{LieAlgExtDerGR}-\eqref{LieAlgExtDerGR1}, 
but now $f$ and $\omega$ are allowed to have {\em arbitrary} weights. 
We have: 
\begin{lemma}
Suppose that $\beta^{A}$ is shear-free. Then the following sequence 
is a differential complex if and only if $\beta^{A}$ is also a repeated principal spinor of the ASD Weyl 
curvature\footnote{A similar complex was obtained in \cite{Ara20}, but not in the context of Lie algebroids.}:
\begin{equation}\label{wderham}
 0 \to \Gamma(\Lambda^{0}_{(r,r';w)}) \xrightarrow{{\rm d}^{\mathbb{L}}} \Gamma(\Lambda^{1}_{(r,r';w)}) 
 \xrightarrow{{\rm d}^{\mathbb{L}}} \Gamma(\Lambda^{2}_{(r,r';w)}) \to 0.
\end{equation}
\end{lemma}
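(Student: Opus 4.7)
The plan is to reduce the complex condition $({\rm d}^{\mathbb{L}})^{2}=0$ to the operator identity $\mathcal{C}^{A'}\mathcal{C}_{A'}=0$ on weighted scalar fields, and then appeal to Lemma \ref{Lemma:CC0} of Appendix \ref{Appendix:curvature} exactly as was already done in the proof of Lemma \ref{Lemma:Jacobiator}. Since $L$ has rank $2$, the bundles $\Lambda^{k}_{(r,r';w)}=(\wedge^{k}L^{*})\otimes\mathbb{S}(r,r';w)$ vanish for $k\geq 3$, so the sequence \eqref{wderham} has the correct length and the only nontrivial condition for it to be a cochain complex is $({\rm d}^{\mathbb{L}})^{2}=0$ on $\Gamma(\Lambda^{0}_{(r,r';w)})$.

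First I would compute $({\rm d}^{\mathbb{L}} f)_{AA'}$ for $f$ of type $(r,r';w)$ directly from \eqref{weightedd0}. Since $\mathcal{C}_{AA'}$ is, by Definition \ref{Def:GHPconformalCV}, covariant under the full $G_{o}$-action, the weight of $f$ is transparent and the argument in part (2) of Lemma \ref{Lemma:LieAlgGR} goes through verbatim, yielding the formally identical expression $({\rm d}^{\mathbb{L}} f)_{AA'}=\phi^{-1}\alpha_{A}\mathcal{C}_{A'} f$ --- which is precisely the content of the remark right after \eqref{weightedd1}. Applying ${\rm d}^{\mathbb{L}}$ again via \eqref{weightedd1}, decomposing this $1$-form according to \eqref{split1form} (its $\tilde\omega_{A'}$ part vanishes because $\alpha^{B}\alpha_{B}=0$, while its $\omega_{A'}$ part is proportional to $\phi^{-1}\mathcal{C}_{A'}f$), and repeating the manipulations that led to \eqref{LieAlgExtDerGR1-aux} --- which use only shear-freeness of $\beta^{A}$ and the Leibniz rule for $\mathcal{C}_{AA'}$ --- one arrives at an expression of the form
\begin{equation*}
 ({\rm d}^{\mathbb{L}}{\rm d}^{\mathbb{L}} f)_{AA'BB'} \;=\; c\,(\mathcal{C}^{C'}\mathcal{C}_{C'} f)\,\alpha_{A}\alpha_{B}\epsilon_{A'B'},
\end{equation*}
with a nonzero scalar $c$ built out of $\phi$. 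Hence $({\rm d}^{\mathbb{L}})^{2}=0$ on every $\Lambda^{0}_{(r,r';w)}$ if and only if $\mathcal{C}^{A'}\mathcal{C}_{A'}$ annihilates every weighted scalar field.

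The second step is to invoke Lemma \ref{Lemma:CC0} of Appendix \ref{Appendix:curvature}, the same result already used in the proof of Lemma \ref{Lemma:Jacobiator}, which under the shear-free hypothesis \eqref{SFR1} equates vanishing of $\mathcal{C}^{A'}\mathcal{C}_{A'}$ on all weighted scalars with $\Psi_{ABCD}\beta^{B}\beta^{C}\beta^{D}=0$, i.e.\ with $\beta^{A}$ being a repeated principal spinor of the ASD Weyl curvature; this closes both directions. The main technical obstacle is the middle step: ensuring that, beyond the formally identical formulas for ${\rm d}^{\mathbb{L}}f$ and ${\rm d}^{\mathbb{L}}\omega$, no additional curvature-like terms appear when one commutes $\mathcal{C}_{A'}$ past the weight-carrying factor $\phi^{-1}$ hidden in $\omega_{A'}$. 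This is controlled by the transport properties of $\phi$ and $\alpha^{A},\beta^{A}$ under $\mathcal{C}_{AA'}$ built into Definition \ref{Def:GHPconformalCV}, after which the only surviving content of $({\rm d}^{\mathbb{L}})^{2}$ is the skew-symmetric trace $\mathcal{C}^{A'}\mathcal{C}_{A'}$ governed by Appendix \ref{Appendix:curvature}.
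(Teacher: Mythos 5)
Your proposal is correct and follows essentially the same route as the paper: reduce the check to weighted $0$-forms (the $1$-form level is automatic since $\wedge^{3}L^{*}=0$ for the rank-$2$ bundle $L$), identify the obstruction with $\mathcal{C}_{A'}\mathcal{C}^{A'}$ acting on weighted scalars, and invoke Lemma \ref{Lemma:CC0}. The only (inessential) difference is the intermediate computation: the paper writes $({\rm d}^{\mathbb{L}}{\rm d}^{\mathbb{L}}f)(X,Y)=X^{a}Y^{b}[\mathcal{C}_{a},\mathcal{C}_{b}]f$ directly and applies the curvature decomposition \eqref{decompositionCurvatureC}, whereas you compose the explicit index formulas for ${\rm d}^{\mathbb{L}}$ on $0$- and $1$-forms, using $\mathcal{C}_{A'}\phi=0$ to control the weight-carrying factors; both land on the same identity.
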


\begin{proof}
We have to check ${\rm d}^{\mathbb{L}}\circ {\rm d}^{\mathbb{L}} = 0$. 
We only have to see this for (weighted) 0-forms, since for a 1-form $\omega$, ${\rm d}^{\mathbb{L}}{\rm d}^{\mathbb{L}}\omega$
is a 3-form in a 2-dimensional space and thus it vanishes automatically.
Consider an arbitrary weighted scalar $f\in\Gamma(\Lambda^{0}_{(r,r';w)})$, 
and two arbitrary weighted vectors $X^{a}=\beta^{A}x^{A'}$, $Y^{a}=\beta^{A}y^{A'}$, then, using 
\eqref{weightedd0}, \eqref{weightedd1}:
\begin{align*}
 ({\rm d}^{\mathbb{L}}{\rm d}^{\mathbb{L}}f)(X,Y) ={}& 
 \mathcal{C}_{X}\mathcal{C}_{Y}f - \mathcal{C}_{Y}\mathcal{C}_{X}f - \mathcal{C}_{[X,Y]_{\mathcal{C}}}f \\
 ={}& X^{a}Y^{b}(\mathcal{C}_{a}\mathcal{C}_{b}-\mathcal{C}_{b}\mathcal{C}_{a})f \\
 ={}& x^{A'}y^{B'}\beta^{A}\beta^{B}[\mathcal{C}_{a},\mathcal{C}_{b}]f \\
 ={}& x_{B'}y^{B'}\mathcal{C}_{A'}\mathcal{C}^{A'}f
\end{align*}
where in the fourth line we used \eqref{decompositionCurvatureC}. Therefore, ${\rm d}^{\mathbb{L}}{\rm d}^{\mathbb{L}}f=0$ 
if and only if $\mathcal{C}_{A'}\mathcal{C}^{A'}f=0$, which, because of Lemma \ref{Lemma:CC0}, 
is equivalent to requiring $\beta^{A}$ to be a repeated principal spinor.
\end{proof}

Therefore, we see that as long as $\beta^{A}$ is shear-free {\em and} a repeated principal spinor, 
the ``potentialization'' \eqref{potentials} is also valid for fields with {\em arbitrary} weights.

\begin{remark}
If we are only interested in a particular metric $g_{ab}$, i.e. not in a conformal structure, then instead of
$\mathcal{C}_{AA'}$ we can use the so-called ``GHP'' covariant derivative $\Theta_{AA'}$ (which consists of 
only some parts of $\mathcal{C}_{AA'}$) and its projections $\Theta_{A'},\tilde{\Theta}_{A'}$. 
One can show that if $\beta^{A}$ is shear-free and a repeated principal spinor, then $\Theta_{A'}=\beta^{A}\Theta_{AA'}$ 
also satisfies $\Theta_{A'}\Theta^{A'}=0$ (on scalars and primed spinor fields), 
so we again have a ``potentialization'' scheme for $\Theta_{A'}$.
This is particularly useful for analysing integrability of the Einstein equations.
\end{remark}

\begin{remark}
Although the Leibniz rule is not included in the definition \ref{Def:CourantAlgebroid} of a Courant algebroid, 
we saw in eq. \eqref{IdentityCourant1} that it is actually a consequence of the axioms; thus, the vector bundle $\mathbb{L}$ 
cannot give an ordinary Courant algebroid structure either.
\end{remark}

\section{Twistors}\label{Sec:Twistors}

The twistor programme was introduced by Roger Penrose and it describes a non-local correspondence 
between spacetime and another space called twistor space. 
In this section we describe some natural relations between the constructions we have seen so far and twistor theory.
We first give the basic definition of a twistor, 
then we elaborate in sections \ref{Sec:2dimtwistorspace} and \ref{Sec:twistorspaces}
on the relation between the integrable structures studied in previous sections and twistor spaces.

According to Penrose, the basic premise in twistor theory is that light rays are to be 
regarded as more basic entities than spacetime events (see e.g. \cite[Section 6.1]{PR2}). 
Consider real, 4-dimensional Minkowski spacetime $\mathbb{M}$, and a point with coordinates $x^{AA'}_{o}$ 
(with respect to an arbitrary origin). 
A light ray $\gamma$ through this point is described by the null geodesic
\begin{equation}\label{NullRay}
x^{AA'}(\tau)=\tau\lambda^{A}\bar{\lambda}^{A'}+x^{AA'}_{o},
\end{equation}
where $\tau$ is a real parameter, and
we are using that the vector field $p^{AA'}$ tangent to the geodesic is null, so it can be expressed as
$p^{AA'}\equiv \lambda^{A}\bar{\lambda}^{A'}$ for some spinor field $\lambda^{A}$.
Contraction with $\lambda_{A}$ shows that $x^{AA'}(\tau)\lambda_{A}=x^{AA'}_{o}\lambda_{A}$
for all $\tau$. Since this is true for all $\tau$, the contraction defines a spinor $\mu^{A'}$ by
\begin{equation}\label{incidenceRelation}
 x^{AA'}\lambda_{A} \equiv i\mu^{A'}
\end{equation}
where $x^{AA'}$ denotes any point in $\gamma$, and
the factor $i$ is here introduced only to follow standard twistor conventions.
The entire light ray $\gamma$ is then in principle represented by the pair of spinors $(\lambda_{A},\mu^{A'})$.
However, the same $\gamma$ is obtained if we multiply $\lambda_{A}$ and $\mu^{A'}$ by a non-zero 
complex number. Thus, $\gamma$ can be represented as a point in $(\mathbb{C}^{2*}\oplus\bar{\mathbb{C}}^{2})/\sim$, 
where $\sim$ refers to the equivalence relation just mentioned. The space 
$\mathbb{PT}:=(\mathbb{C}^{2*}\oplus\bar{\mathbb{C}}^{2})/\sim$ is called (projective) twistor space 
(we see that it is isomorphic to the complex projective 3-space $\mathbb{CP}^{3}$ \footnote{Twistor space 
is actually $\mathbb{CP}^{3}\backslash\mathbb{CP}^{1}$ (see e.g. \cite{Huggett, WardWells}), 
but we do not need to make this distinction for our elementary presentation.}).
Actually, real light rays lie in a submanifold of $\mathbb{PT}$, since for real $x^{AA'}$ one has the constraint 
$\mu^{A'}\bar\lambda_{A'}+\bar\mu^{A}\lambda_{A}=0$, as follows from \eqref{incidenceRelation}.
On the other hand, given a point $(\lambda_{A},\mu^{A'})$ in twistor space $\mathbb{PT}$, what does this correspond to in spacetime?
The answer is the set of $x^{AA'}$ such that \eqref{incidenceRelation} holds. 
If $\mu^{A'}\bar\lambda_{A'}+\bar\mu^{A}\lambda_{A}\neq0$, \eqref{incidenceRelation} has no 
real solutions for $x^{AA'}$, so one needs to complexify spacetime in order to have solutions. 
In the complexified spacetime $\mathbb{CM}$, suppose that $y^{AA'}$ is a solution to \eqref{incidenceRelation}, 
i.e. $y^{AA'}\lambda_{A}=i\mu^{A'}$, then we see that $x^{AA'}=\lambda^{A}\pi^{A'}+y^{AA'}$ is also a solution 
for any spinor $\pi^{A'}$. 
Thus the set of points in $\mathbb{CM}$ satisfying \eqref{incidenceRelation} is a 2-dimensional (complex) plane $\Sigma$. 
For any two points $x,y$ in $\Sigma\subset\mathbb{CM}$ we have
$x^{AA'}-y^{AA'}=\lambda^{A}\pi^{A'}$ for some $\pi^{A'}$, thus vectors tangents to $\Sigma$ 
are of the form $\lambda^{A}\pi^{A'}$ where $\lambda^{A}$ is fixed and $\pi^{A'}$ varies. 
This implies that any tangent to $\Sigma$ is null, and likewise the inner product 
between any two tangent vectors vanishes. In other words,
\begin{equation}\label{twistorMinkowski}
 \eta|_{T\Sigma} \equiv 0
\end{equation}
where $T\Sigma$ is the tangent bundle to $\Sigma$ and $\eta$ is the Minkowski metric. 
Twistor space $\mathbb{PT}$ is then the moduli space of the 2-surfaces $\Sigma$ in which \eqref{twistorMinkowski} holds.

The above construction of twistor space starts from the explicit expression \eqref{NullRay} for light rays (null geodesics) 
in flat spacetime (which leads to the fundamental relation \eqref{incidenceRelation}),
thus it does not apply to curved spacetimes. For the curved case one has to generalize \eqref{twistorMinkowski}.
We allow also different signatures for the metric:

\begin{definition}\label{Def:Twistor}
Let $M$ be a (possibly complex) 4-dimensional manifold equipped with a metric $g$.
A {\em twistor surface} (or simply {\em twistor}) in $M$ is a totally null 2-surface:
a 2-dimensional surface $\Sigma$ in which the induced metric vanishes identically,
\begin{equation}\label{twistorCurved}
 g|_{T\Sigma} \equiv 0
\end{equation}
where $T\Sigma$ is the tangent bundle to $\Sigma$. We define {\em twistor space} as the moduli space 
of the 2-surfaces \eqref{twistorCurved}. 
\end{definition}

In flat spacetime one can also define twistors via solutions to the twistor equation (see \cite{PR2}), 
but in general curved spacetimes this equation has no solutions so the definition via \eqref{twistorCurved} 
is slightly more general, see \cite{Penrose76, Huggett, WardWells}.
The existence of twistor surfaces is however still quite restrictive; 
in particular, the usual twistor space of twistor theory is 3-dimensional and it requires
the Weyl curvature to be (anti-)self-dual.

Having introduced these basic concepts, we can now describe the connection of these ideas with the 
structures studied in previous sections.
We notice first that, from the basic definition \eqref{twistorCurved} (or \eqref{twistorMinkowski}) of a twistor, 
we see that it coincides with the Lagrangian submanifolds considered in double field theory, eq. \eqref{Isotropic} 
(but the motivation here is completely different, as we just saw).

\begin{remark}
Since here we are interested in twistors, which are generally {\em complex} surfaces, 
in this section we assume the manifold to be real-analytic, so that it admits 
a complexification, which will be denoted by $\mathbb{C}M$.
\end{remark}

\subsection{Two-dimensional twistor spaces}\label{Sec:2dimtwistorspace}

An immediate connection between twistors and the para-Hermitian structures considered in this work is as follows:
A half-integrable (para-)Hermitian structure in four dimensions defines a two-dimensional twistor space.

To see this, consider an almost-para Hermitian structure $(g,K)$ on a 4-manifold $M$, which produces a splitting 
$TM\otimes\mathbb{C}=L\oplus\tilde{L}$, where $L$ and $\tilde{L}$ are both 2-dimensional.
If the structure is half-integrable, then the distribution $L$ (say) is involutive. 
Then the integral manifolds associated to $L$ are 2-surfaces $\Sigma$ such that $T\Sigma = L|_{\Sigma}$. 
Now, the para-Hermitian property implies, as we know, that $g|_{T\Sigma}=0$, 
therefore the integral manifolds (i.e. leaves) $\Sigma$ are twistor surfaces. 
Furthermore, the space of leaves is two-dimensional, thus we have, according to definition \ref{Def:Twistor}, 
a two-dimensional twistor space.
Notice that this is true for any signature.

If the almost-para Hermitian structure is moreover {\em integrable}, i.e. the two eigenbundles $L$ and $\tilde{L}$ 
are involutive, then there are two different sets of twistor surfaces, i.e. two different two-dimensional twistor spaces. 
(In the next subsection we will analyse possible ways of understanding these spaces in relation to the usual three-dimensional 
space of twistor theory.)
Note that each involutive eigenbundle defines a pair of (possibly complex) coordinates, $z^{i}=(z^{1},z^{2})$ 
and $\tilde{z}_{i}=(\tilde{z}_{1},\tilde{z}_{2})$,
which give in particular a basis for the cotangent bundle, $T^{*}M\otimes\mathbb{C}={\rm span}({\rm d}z^{i}, {\rm d}\tilde{z}_{i})$.
Furthermore, the fact that the metric satisfies $g(K\cdot,K\cdot)=-g(\cdot,\cdot)$ implies that it can be 
written in this basis as
\begin{equation}
 g = g_{i}{}^{j}{\rm d}z^{i}\otimes{\rm d}\tilde{z}_{j} + g^{i}{}_{j}{\rm d}\tilde{z}_{i}\otimes{\rm d}z^{j}.
\end{equation}
According to the DFT interpretation, the interchange of the foliations $z^{i}={\rm const.}$ and $\tilde{z}_{i}={\rm const.}$ 
is understood as a T-duality transformation.

\smallskip
If only one of the distributions $L,\tilde{L}$ is integrable,
it is worth mentioning that, apart from the above interpretation in terms of the twistor programme, 
(complex) 4-manifolds admitting a foliation by the associated twistor surfaces have been extensively studied by 
Pleba\'nski and collaborators, who called these surfaces {\em null strings},
and the manifolds admitting them {\em HH-} or {\em hyper-heavenly} spaces; the name originates in the fact that they are a 
`generalization' of the H-spaces, or {\em heavens}, introduced by 
Newman\footnote{This `generalization' is in the sense that while an H-space is a Ricci-flat, 
complex 4-manifold with self-dual curvature, an HH-space was originally defined as a Ricci-flat, complex 4-manifold 
with (half) algebraically special curvature.
But the definition of Newman's H-space is more subtle than this simple characterization in the sense that it is, by construction, 
associated to general, asymptotically flat spacetimes (with real-analytic null infinity), via solutions to the good-cut equation. 
See also section \ref{Sec:twistorsinliterature}.}.
Remarkably, Pleba\'nski and Robinson showed in \cite{PlebanskiRobinson} that 
the existence of such integral submanifolds is sufficient for reducing the full non-linear 
Einstein vacuum equations to a scalar, second-order, non-linear PDE known as `hyper-heavenly equation'.

\smallskip
A 2-dimensional twistor space like the one above has been considered in \cite{Ara20} 
in relation to problems in perturbation theory in relativity. 
It is interesting to analyse how this 2-dimensional twistor space is related to the usual 
3-dimensional space of twistor theory; we will do this next.

\subsection{Three-dimensional twistor spaces}\label{Sec:twistorspaces}

In this section we will show the following:
If all almost-para Hermitian structures are half-integrable, then there exists a three-dimensional (3D)
twistor space $\mathbb{P}\mathcal{T}$, which is a mono-parametric family of two-dimensional (2D) twistor spaces 
parametrized by projective spinor fields.
This family is not necessarily a fibration over projective spinors: it is a fibration iff the vacuum Einstein equations hold.

Our discussion will be valid for any signature (also for complex manifolds). 
We will generically denote the space of projective spinors (at a point) as $\mathbb{P}^{1}$,
that is $\mathbb{P}^{1}=\mathbb{CP}^{1}$ in Riemannian and Lorentz signature (and in complex manifolds), 
and $\mathbb{P}^{1}=\mathbb{RP}^{1}$ in split signature.

We will first see how to get from a 2D twistor space to a 3D one, then we will discuss the fibration structure.
First, recall from Remark \ref{Remark:anyKintegrable} that, if any $K$ is half-integrable, then any $K$ is 
actually integrable. Now let $K$ be an arbitrary integrable para-Hermitian structure. 
From Section \ref{Sec:spaceParaHermitian} we know that it is represented by two projective spinor fields, 
say $\alpha_{A}$ and $\beta_{A}$, in the form \eqref{KGR}, where (since $K$ is integrable) both fields satisfy \eqref{ShearFree}.
Any other para-Hermitian structure $K'$ is represented by two other projective spinor fields $\varphi_{A}$ and $\psi_{A}$ 
in the same way, and, since spin-space is 2-dimensional, at any point these spinors can be expressed as linear combinations 
of $\alpha_{A}$ and $\beta_{A}$, say $\varphi_{A}=a\alpha_{A}+b\beta_{A}$ and $\psi_{A}=c\alpha_{A}+d\beta_{A}$.
But since $\varphi_{A}$ and $\psi_{A}$ should really be thought of as {\em projective} spinor fields, the scaling is irrelevant, 
so we can take $\varphi_{A}=\alpha_{A}+\lambda_{1}\beta_{A}$ and $\psi_{A}=\alpha_{A}+\lambda_{2}\beta_{A}$ 
for some $\lambda_{1}, \lambda_{2}$. 
This means that the principal spinor fields of any para-Hermitian structure can be reached by different values of 
$\lambda$ in 
\begin{equation}
 \gamma_{A}(\lambda):=\alpha_{A}+\lambda\beta_{A}.
\end{equation} 
Note that $\alpha_{A}$ corresponds to $\lambda=0$ whereas $\beta_{A}$ corresponds to $\lambda=\infty$.
At any fixed point of $M$,
one can think of $\lambda$ as a stereographic coordinate on the Riemann sphere $\mathbb{P}^{1}\cong\mathbb{C}\cup\{\infty\}$ 
(for Riemannian and Lorentz signature) or on the circle $\mathbb{P}^{1} \cong \mathbb{R}\cup\{\infty\}$ 
(for split signature).

For any fixed $\lambda$, $\gamma_{A}(\lambda)$ is a principal spinor field of an integrable para-Hermitian structure, thus,
from section \ref{Sec:2dimtwistorspace}, it defines a 2D twistor space. 
Therefore, varying $\lambda$ we add one more dimension and we get a {\em three}-dimensional twistor space: 
equivalently, a mono-parametric family of 2D twistor spaces. 
We denote the 3D twistor space as $\mathbb{P}\mathcal{T}$.
Since varying $\lambda$ means varying $\gamma_{A}(\lambda)$, 
the family is parametrized by projective spinor fields.

Notice that in Riemannian and Lorentz signature, twistor surfaces are necessarily complex, 
and the twistor space $\mathbb{P}\mathcal{T}$ is complex. 
Twistor surfaces live in the complexification $\mathbb{C}M$, and
at each point of $\mathbb{C}M$ there is a 2-sphere $\mathbb{CP}^{1}$ worth of them.
In terms of $M$, what we have is a 2-sphere bundle of twistor {\em planes}, where 
a twistor plane is the tangent space to a twistor surface.
In split signature, we can restrict to {\em real} twistor surfaces, and 
the associated twistor space $\mathbb{P}\mathcal{T}$ is real. 
In this case, at each point of $M$ there is a circle $\mathbb{RP}^{1}$ worth of twistor surfaces; 
i.e. we have a circle bundle of twistor planes over $M$.
For the connection of these twistor spaces with other twistor constructions in the literature, 
see section \ref{Sec:twistorsinliterature} below.

\begin{remark}
From the constructions seen in section \ref{Sec:Integrability}, it follows that if any almost-para Hermitian 
structure is half-integrable, then there is a one-parameter family of Lie and Courant algebroids for the tangent bundle, 
where each algebroid structure is defined by a projective spinor field.
\end{remark}

Let us now discuss the possible fibration structure of $\mathbb{P}\mathcal{T}$.
What we will argue is that the family of 2D twistor spaces is fibered over projective spinors if and only if the Einstein 
vacuum equations are satisfied\footnote{In twistor theory, the fact that the twistor space of a Ricci-flat complex 4-manifold 
is fibered over $\mathbb{CP}^{1}$ is already known; see e.g. Remark (2) in p. 444 in \cite{WardWells}.}.
Essentially, the obstruction for a fibration is that, given an arbitrary integrable para-Hermitian structure $K$, defined 
by two projective spinor fields $[\alpha_{A}]$, $[\beta_{A}]$, we can achieve
\begin{equation}\label{derivativealongsurfaces}
 \alpha^{B}\nabla_{BB'}\alpha_{A}=0, \qquad \beta^{B}\nabla_{BB'}\beta_{A}=0
\end{equation}
but in general we have
\begin{equation}\label{derivativeoffsurfaces}
  \beta^{B}\nabla_{BB'}\alpha_{A}\neq 0, \qquad \alpha^{B}\nabla_{BB'}\beta_{A}\neq 0.
\end{equation}

First, recall that the assumption that any para-Hermitian structure is integrable implies that the Weyl curvature is self-dual 
(i.e. $\Psi_{ABCD}\equiv0$, we saw this in Lemma \ref{Lemma:SDWeyl}\footnote{The restriction to self-dual 
curvature in relation to the existence of a 3D twistor space has of course been known in twistor theory since its origins.}).
Recall also that any representative spinor field $\alpha_{A}$ satisfies the shear-free condition \eqref{ShearFree}. 
We can now ask a stronger condition and
choose the scaling such that $\alpha_{A}$ is covariantly constant over any twistor surface $\Sigma$ 
associated to it:
\begin{equation}\label{spinorconstantoversurface}
 X^{b}\nabla_{b}\alpha_{A} = 0 \quad \forall X\in \Gamma(T\Sigma),
\end{equation}
that is, $\alpha^{B}\nabla_{BB'}\alpha_{A}=0$. We can do this because the integrability condition for it
is $\Psi_{ABCD}\alpha^{C}\alpha^{B}\alpha^{D} = 0$, which is satisfied since $\Psi_{ABCD}\equiv0$.

Consider two neighboring points in $\mathbb{P}\mathcal{T}$. 
These correspond to two twistor surfaces $\Sigma, \Sigma'$ in $\mathbb{C}M$, which we take to be associated 
to the same congruence defined by the spinor field $\alpha_{A}$.
We now use the concept of a `connecting vector' as defined by Penrose and Rindler in \cite[Section 7.1]{PR2}, 
but slightly generalized to a congruence of surfaces instead of curves.
This is used by Ward and Wells in \cite[Section 9.1]{WardWells}:
a vector field $v$ defined over $\Sigma$ is a `connecting' vector to the nearby twistor surface 
$\Sigma'$ if the Lie derivative of $v$ along vectors tangent to $\Sigma$ is again tangent to $\Sigma$, 
i.e. $\pounds_{X}v\in\Gamma(T\Sigma)$ for all $X\in\Gamma(T\Sigma)$.
Given the connecting vector $v$, the change in the spinor field $\alpha_{A}$ between the nearby 
twistor surfaces $\Sigma$ and $\Sigma'$ is represented by the spinor field
\begin{equation}\label{changeinspinor}
 \delta_{A} = v^{b}\nabla_{b}\alpha_{A}.
\end{equation}
If $\delta_{A}\neq 0$, the different twistor surfaces defined by the same spinor {\em field} $\alpha_{A}$ (i.e. the differents 
points in the associated 2D twistor space) cannot be associated to a same {\em point} spinor.
So, there is no projection from the 2D twistor space to a point in $\mathbb{P}^{1}$, and consequently,
the 3D twistor space $\mathbb{P}\mathcal{T}$ is not fibered over $\mathbb{P}^{1}$.

Now, we can decompose the vector $v$ as $v^{b}=a^{B'}\alpha^{B}+b^{B'}\beta^{B}$ for some $a^{B'}, b^{B'}$. 
Thus, using \eqref{spinorconstantoversurface}, equation \eqref{changeinspinor} is equivalently
\begin{equation}
 \delta_{A} = b^{B'}\beta^{B}\nabla_{BB'}\alpha_{A}.
\end{equation}
Therefore, the change in $\alpha_{A}$ between the different twistor surfaces of the congruence 
is governed by $\beta^{B}\nabla_{BB'}\alpha_{A}$. 
If $\beta^{B}\nabla_{BB'}\alpha_{A}\neq 0$, $\alpha_{A}$ changes along the congruence and thus 
there is no projection $\mathbb{P}\mathcal{T} \to \mathbb{P}^{1}$.
If $\beta^{B}\nabla_{BB'}\alpha_{A}= 0$, we have $v^{b}\nabla_{b}\alpha_{A}=0$ for any connecting vector 
and so $\alpha_{A}$ does not change between the different twistor surfaces of the associated 2D twistor 
space. All such surfaces can then be associated to the same point spinor, i.e. 
to the same element of $\mathbb{P}^{1}$.
So in this case there {\em is} a fibration $\mathbb{P}\mathcal{T}\to \mathbb{P}^{1}$, where the fibers are 
2D twistor spaces. We represent this in Figure \ref{Figure:3Dtwistorspace}.
In addition, note that, since $\alpha_{A}\beta^{A}\neq 0 $, if we have both $\alpha^{B}\nabla_{BB'}\alpha_{A}=0$ 
and $\beta^{B}\nabla_{BB'}\alpha_{A}=0$ then we actually have $\nabla_{BB'}\alpha_{A}=0$. 
In particular this implies that $[\nabla_{a},\nabla_{b}]\alpha_{C}=0$, so the Riemann curvature must be half-flat 
(since $\alpha_{A}$ was arbitrary). 
Since we already had $\Psi_{ABCD}=0$, the new information is 
\begin{equation}
 \Phi_{ABC'D'} = 0 = \Lambda
\end{equation}
where $\Phi_{ABC'D'}$ is the spinor analogue of the trace-free Ricci tensor, and $\Lambda$ is proportional 
to the curvature scalar (see \cite[Section 4.6]{PR1}).
Therefore, the manifold must be Ricci-flat.

Notice that in the Riemannian case, the conditions $\Psi_{ABCD}=0=\Phi_{ABC'D'}=\Lambda$ imply that $(M,g)$ is hyperk\"ahler 
(see e.g. Theorem 9.3.3 in \cite{Dun10}).

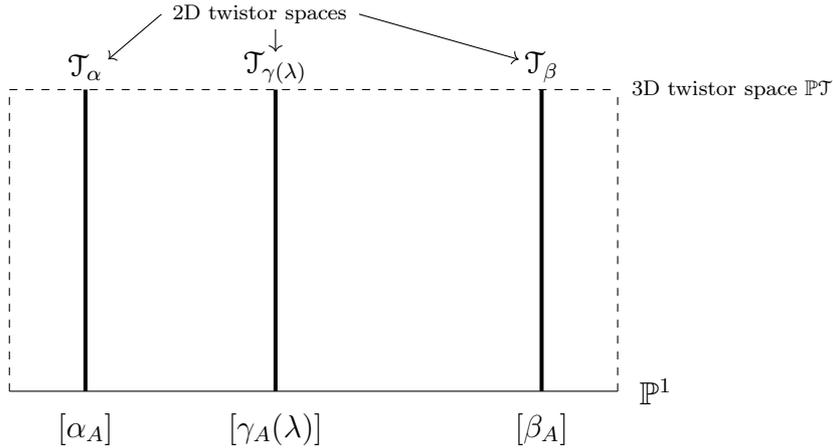
\begin{figure}

\begin{center}
\begin{tikzpicture}

\draw (0,0)--(8,0);
\node (cp1) at (8.5,0) {$\mathbb{P}^{1}$};
\draw[line width=1.5pt] (1,0) -- (1,4);
\node (alpha) at (1,-0.5) {$[\alpha_{A}]$};
\node (Talpha) at (1,4.3) {$\mathcal{T}_{\alpha}$};
\draw[line width=1.5pt] (3.5,0) -- (3.5,4);
\node (gamma) at (3.5,-0.5) {$[\gamma_{A}(\lambda)]$};
\node (Tgamma) at (3.5,4.3) {$\mathcal{T}_{\gamma(\lambda)}$};
\draw[line width=1.5pt] (7,0) -- (7,4);
\node (beta) at (7,-0.5) {$[\beta_{A}]$};
\node (Tbeta) at (7,4.3) {$\mathcal{T}_{\beta}$};

\draw[dashed] (0,0) -- (0,4);
\draw[dashed] (0,4)--(8,4);
\draw[dashed] (8,0) -- (8,4);
\node (T) at (9.5,4) {\scriptsize 3D twistor space $\mathbb{P}\mathcal{T}$};

\draw[<-] 
   (1.3,4.4) to (2,5) 
   node[right] {\scriptsize 2D twistor spaces};
\draw[<-]
   (3.5,4.5) to (3.5,4.8);
\draw[<-]
   (6.7,4.4) to (4.6,5);

\end{tikzpicture}
\end{center}
\caption{3D twistor space $\mathbb{P}\mathcal{T}$ seen as a fibration 
of 2D twistor spaces over projective spinors $\mathbb{P}^{1}$, in the case that the Einstein vacuum equations are satisfied.
In Riemannian and Lorentz signature, $\mathbb{P}^{1}=\mathbb{CP}^{1}\cong S^{2}$, 
while in split signature $\mathbb{P}^{1}=\mathbb{RP}^{1}\cong S^{1}$.
If the manifold is not Ricci-flat, $\mathbb{P}\mathcal{T}$ still consists of a family of 2D twistor spaces, but it 
is not fibered over $\mathbb{P}^{1}$. The 2D twistor spaces $\mathcal{T}_{\alpha}$ and $\mathcal{T}_{\beta}$ 
are ``T-dual'' in the sense that they define two T-dual Lagrangian foliations of the manifold $M$.}
\label{Figure:3Dtwistorspace}
\end{figure}

\subsubsection{Connections with other twistor constructions}\label{Sec:twistorsinliterature}

Let us briefly comment on some relationships with the twistor space constructions 
in the literature, following mainly \cite{Penrose76, WardWells, Dun10, Atiyah, LebrunMason}.
(We are only concerned with {\em projective} twistor space.)
The original construction of twistor space was done by Penrose in \cite{Penrose76}, where it is called non-linear graviton 
and is specific to {\em complex} (anti-)self-dual 4-manifolds. 
The non-linear graviton is a 3-complex-dimensional manifold
and its points are totally null surfaces in the (A)SD 4-manifold. 
The precise relationship between the two spaces can be encoded in a double fibration, where 
the correspondence space is the projective spin bundle. 
For real 4-manifolds, one can, according to signature, introduce different `real structures' (involutions)
on the twistor space of the complex 4-manifold;
but there are also other routes to construct the twistor space of a real 4-manifold.

In the Riemannian case, the reformulation of Penrose's twistor construction was done by Atiyah, Hitchin and Singer in \cite{Atiyah}.
This formulation associates a natural almost-complex manifold to any real Riemannian 4-manifold $(M,g)$.
The basic idea is to incorporate all almost-complex structures in $M$ into {\em one} 
almost-complex structure in a larger manifold; in other words, to provide the space of almost-complex structures 
{\em itself} with an almost-complex structure. 
The space of complex structures at a point is $\mathbb{CP}^{1}$ (see Remark \ref{Remark:Euclidean}), 
so over the whole manifold one gets the projective spin bundle $\mathbb{PS}$.
The almost-complex structure of $\mathbb{PS}$ is described by Ward and Wells 
\cite[Section 9.1]{WardWells} as follows.
Since $\mathbb{PS}$ has a connection, at any point $p=(x,\a)\in\mathbb{PS}$ the tangent space 
decomposes as $T_{p}\mathbb{PS} = T_{x}M\oplus T_{\a}\mathbb{CP}^{1}$. 
On $T_{x}M$ we have the complex structure $J=iK$ defined by $\a=[\a_{A}]$, with $K$ given by \eqref{KEuclidean}, 
and $T_{\a}\mathbb{CP}^{1}$ is the tangent space to a complex manifold so it naturally has a complex structure, 
say $J_{\mathbb{CP}^1}$. Thus, the complex structure of $T_{p}\mathbb{PS}$ is $\mathcal{J}={\rm diag}(J,J_{\mathbb{CP}^1})$.
Atiyah {\em et al.} proved \cite[Theorem 4.1]{Atiyah} that $\mathcal{J}$ is integrable if and only 
if the Weyl curvature of $g$ is (A)SD.
In that case, the space $\mathbb{PS}$ becomes a 3-dimensional complex manifold: this is the
twistor space of $M$. It coincides with the space of twistor surfaces in the complexification of $M$.

\begin{remark}[Twistor families]
Suppose that $\pi_{1}:\mathbb{P}\mathcal{T}\to\mathbb{CP}^{1}$ is the fibration by 2D twistor spaces, 
so that $\pi^{-1}_{1}(\alpha)$ is the 2D twistor space $\mathcal{T}_{\alpha}$.
In the Riemannian setting, there is an alternative way to define a fibration of 
$\mathbb{P}\mathcal{T}$ over $\mathbb{CP}^{1}$. Since $\mathbb{P}\mathcal{T}$ is actually 
the projective spin bundle of the real 4-manifold $M$, one can think of 
a different projection $\pi_{2}:\mathbb{P}\mathcal{T}=\mathbb{PS}\to\mathbb{CP}^{1}$
where a typical fiber $\pi^{-1}_{2}(\alpha)$ is the {\em complex} 2-manifold $(M,g,J_{\alpha})$, 
where $J_{\alpha}$ is the complex structure determined by $\alpha=[\alpha_{A}]$ 
(in our terms this is $J_{\alpha}=iK_{\alpha}$, with $K_{\alpha}$ given by the right hand side of \eqref{KEuclidean}), 
and $g$ is Hermitian w.r.t. $J_{\alpha}$. 
Therefore, $\mathbb{P}\mathcal{T}$ can also be seen as a $\mathbb{CP}^{1}$-family of complex surfaces
where the fibers are isometric as Riemannian manifolds but the complex structure changes from fiber to fiber. 
This is known in the literature as a {\em twistor family}. 
(Particularly relevant is the case in which $M$ is a K3 surface, so that
$\mathbb{P}\mathcal{T}$ is a $\mathbb{CP}^{1}$-family of K3 surfaces.)
The relation between the two fibrations $\pi_{1},\pi_{2} : \mathbb{P}\mathcal{T}\to\mathbb{CP}^{1}$
can be seen by considering the complexification of $M$, denoted $\mathbb{C}M$. 
An integral submanifold in $\mathbb{C}M$ of the involutive distribution $L_{\alpha}\subset TM\otimes\mathbb{C}=T\mathbb{C}M$ 
is, on the one hand, a point in $\mathcal{T}_{\alpha}=\pi^{-1}_{1}(\alpha)$ (i.e. a twistor surface defined by $\alpha=[\alpha_{A}]$), 
and on the other hand, a complex surface $(M,g,J_{\alpha})=\pi^{-1}_{2}(\alpha)$.
In other words, a fiber of $\pi_{1}$ is the space of leaves of the foliation of $\mathbb{C}M$ induced by a projective spinor $\alpha$, 
and a fiber of $\pi_{2}$ is a leaf in this foliation. Both typical fibers are 2-dimensional complex manifolds.
\end{remark}

In split signature, the twistor construction for ASD conformal structures was done by LeBrun and Mason in \cite{LebrunMason}, 
and it also involves, in a way, the consideration of the space of almost-complex structures over $M$.
In section \ref{Sec:twistorspaces} we restricted to real twistor planes (recall that a twistor plane is the tangent space to 
a twistor surface), but we can also
consider complex twistor planes by simply complexifying the space of spinors, i.e. taking 
$\mathbb{R}^{2}\otimes\mathbb{C}=\mathbb{C}^{2}$.
Then the bundle of complex twistor planes over $M$, which we denote by $\mathcal{Z}$ following \cite{LebrunMason},
has fibers $\mathbb{CP}^{1}$ (the twistor planes we consider here are the $\beta$-planes of \cite{LebrunMason}). 
The subbundle of real twistor planes is denoted by $F$ in \cite{LebrunMason}, and its fibers are $\mathbb{RP}^{1}$. 
Now, as we saw in Remark \ref{Remark:complexsplit}, the bundle of almost-complex structures over $M$, say $\mathcal{H}$, 
has fibers $\mathbb{CP}^{1}\backslash\mathbb{RP}^{1}$. This is the same as removing real twistor planes in $\mathcal{Z}$,
so it coincides with the bundle $\mathcal{Z}\backslash F$ in \cite{LebrunMason}.
Using that $\mathbb{CP}^{1}\backslash\mathbb{RP}^{1}$ is a hyperboloid of two sheets,
and identifying each sheet with a disk,
we see that $\mathcal{H}$ consists of two connected components, say $\mathcal{H} = U_{+} \cup U_{-}$,
each of which is a disk bundle over $M$. 
The twistor space of $M$ constructed in \cite{LebrunMason} 
is obtained essentially by giving a complex structure to the manifold-with-boundary $U_{+}\cup F$ 
(see \cite[Theorem 7.3]{LebrunMason}), 
which we can see as ``one half'' (with a boundary attached) of the space of almost-complex structures over $M$.
A twistor space construction involving the two halves of the hyperboloid $H^{2}$ together also with the equator $\mathbb{RP}^{1}$
(in other words, the Riemann sphere $\mathbb{CP}^{1}=H^{2}\cup\mathbb{RP}^{1}$)
can be found in \cite[Section 10.5]{Dun10}.
Here, the twistor space is obtained in a similar way to the Atiyah-Hitchin-Singer construction \cite{Atiyah},
i.e. by finding an involutive subbundle for $T\mathbb{PS}$\footnote{Whether this actually gives a complex 
structure to $\mathbb{PS}$ in the ordinary sense is a little subtle, see \cite{Eastwood}.};
and it consists of two open sets that are separated by a 3-dimensional real boundary. This boundary
is the real twistor space that we considered in section \ref{Sec:twistorspaces}.

\smallskip
Finally, in Lorentz signature, the Weyl curvature spinors of opposite chirality are complex conjugates of each other, 
so (A)SD Weyl tensor implies conformally flat curvature. 
Since the basic twistor construction is conformally invariant, the Lorentzian case is not very different 
from the Minkowski case.
Regarding spaces of complex structures, we saw that any para-Hermitian structure is necessarily complex-valued, 
so by simply multiplying by $i=\sqrt{-1}$ it can be converted into a ``complex'' structure. 
We showed that the space of such structures is $\mathbb{CP}^{1}\times\mathbb{CP}^{1}$ with 
the ``diagonal'' removed, which turns out to be a complex sphere $\mathbb{C}S^{2}$. 
The bundle of ``almost-complex structures'' compatible with $g$ over $M$ has then $\mathbb{C}S^{2}$ fibers,
and since $S^{2}\cong\mathbb{CP}^{1}$, $\mathbb{C}S^{2}$ is a kind of ``second complexification'' 
of $\mathbb{CP}^{1}$ regarded as a real 2-manifold. 
It seems that one is then considering a situation similar to the Riemannian case (in which the corresponding 
bundle has fibers $\mathbb{CP}^{1}$) but where the fibers are now the complexified version.
While in this case we are not aware of a twistor construction analogous to the one in other signatures, 
we find worth mentioning that complexified spheres are actually crucial in the construction of H-space and 
asymptotic twistor space (since these involve the complexification of null infinity). 
This suggests a different connection between Lorentzian ``almost-complex structures''
and twistor constructions: Since such ``almost-complex structures'' are complex-valued, 
we can just consider the bundle of ``almost-complex structures'' compatible with a complex metric 
on a genuine complex 4-manifold. 
As before, the fibers of this bundle are again $\mathbb{C}S^{2}$, so we have a 4-parameter family 
of complexified spheres. 
If all such ``almost-complex structures'' are ``integrable'', then $\Psi_{ABCD}=0$ (Lemma \ref{Lemma:SDWeyl}), 
which suggests that we may identify the complex 4-manifold of $\mathbb{C}S^{2}$'s
with Newman's H-space, and the $\mathbb{C}S^{2}$'s with the `good cuts' of the complexified null infinity 
of a {\em real}, asymptotically flat (Lorentzian) spacetime\footnote{H-space is also Ricci-flat, so 
presumably we would need to add some additional condition in order to make this suggestion more precise.}.

\subsection{Deformations of the para-complex structure}\label{Sec:Deformations}

We now wish to define a notion of deformations of a para-complex structure, and study the integrability 
of such deformations. 
Notice that we explicitly distinguish deformations of the {\em para-complex} structure $K$ from
deformations of the {\em para-Hermitian} structure $(g,K)$; in the former, we only deal with 
a family $K(t)$ of para-complex structures satisfying the para-Hermitian condition with 
respect to some {\em fixed} metric $g$, while in the latter, one should consider families of {\em both} 
para-complex structures $K(t)$ and metrics $g(t)$.
For the definition of deformations, we will follow \cite[Chapter 5]{Gualtieri} and \cite[Chapter 6]{Huybrechts}.
The setting in these works is however slightly different than ours in the sense that they consider deformations 
of {\em complex} structures, where the eigenbundles are complex conjugates, whereas in our 
case they are independent since we consider {\em para}-complex structures (or even complex-valued maps).
Our definition will agree with that of \cite{Svo18}.

\smallskip
In the following we leave the dimension unspecified; we particularize to four dimensions after this discussion.
Let $(g,K)$ be an almost-para Hermitian structure on a $d$-manifold $M$, 
and let $TM\otimes\mathbb{C}\cong L\oplus \tilde{L}$ be the splitting induced by $K$. 
We consider a continuous, smooth family $K(t)=K_{t}$ of para-complex structures, with $K_{0}=K$ 
and $g(K_{t}\cdot,K_{t}\cdot)=-g({\cdot,\cdot})$ for all $t$.
Then for each $t$ we have a splitting $TM\otimes\mathbb{C} \cong L_{t}\oplus\tilde{L}_{t}$ 
into the eigenbundles of $K_{t}$.
For small $t$, an element of $L_{t}$ can be described as $x+B_{t}x$ 
where $x\in L$, $B_{t}x \in \tilde{L}$, and the deformation has been encoded in $B_{t}$, 
which from the above we deduce is a map $B_{t}:L\to \tilde{L}$.
Since $L_{t}$ must be isotropic, we must have $g(x+B_{t}x,y+B_{t}y)=0$ 
for all $x,y\in L$, which translates into $g(x,B_{t}y)+g(B_{t}x,y)=0$, 
that is, $B_{t}$ is skew-symmetric and can be thought of as an element of $\wedge^2 L^{*}$.
Analogously, the deformation of $\tilde{L}$ is described by a skew-symmetric map $\tilde{B}_{t}:\tilde{L}\to L$,
i.e. an element of $\wedge^2 L$.
For the case of {\em Hermitian} structures, one has $\tilde{B}_{t}=\bar{B_{t}}$ since the 
eigenbundles are complex conjugates; but for {\em para-Hermitian} (or complex-valued) 
structures the maps $\tilde{B}_{t}$ and $B_{t}$ are independent.
The small deformation is then described by an endomorphism $A_{t}:L\oplus\tilde{L}\to L\oplus\tilde{L}$
whose matrix representation is
\begin{equation*}
 A_{t}=\left( \begin{matrix} \mathbb{I}_{L} & \tilde{B}_{t} \\ B_{t} & \mathbb{I}_{\tilde{L}} \end{matrix} \right).
\end{equation*}
Thus we see that the deformation is given by (one half of) the sum of the $B$- and $\beta$-transformations 
in \eqref{OrthogonalTransformations}, with parameters $B=2B_{t}$ and $\beta=2\tilde{B}_{t}$,
and according to the interpretation mentioned in Remark \ref{Remark:Shear}, it represents a simultaneous {\em shearing} 
in the directions of $L$ and $\tilde{L}$.
Provided $A_{t}$ is invertible, the deformed para-complex structure is $K_{t}=A_{t}KA^{-1}_{t}$.
Now, we are interested in the case where the original para-Hermitian structure is half-integrable,
which means that (say) the eigenbundle $L$ is involutive while $\tilde{L}$ is not. 
Since we want to study integrable deformations, we will focus on deformations of only $L$, 
that is, we set $\tilde{B}_{t}=0$. 
Then the deformation corresponds to a $B$-transformation \eqref{OrthogonalTransformations}, 
and noticing that (in the splitting $L\oplus\tilde{L}$) $K={\rm diag}(\mathbb{I}_{L}, -\mathbb{I}_{\tilde{L}})$,
the deformed (para-)Hermitian structure $K_{t}=A_{t}KA^{-1}_{t}$ is
\begin{equation}
  K_{t} = \left( \begin{matrix} \mathbb{I}_{L} & 0 \\ 2B_{t} & -\mathbb{I}_{\tilde{L}} \end{matrix} \right). 
  \label{DeformationKmatrix}
\end{equation}
This coincides with the definition given in \cite{Svo18}.
For the calculations below, we will need some more details about this matrix representation.
Let $E_{I}=(e_{i},\tilde{e}^{i})$ be a basis for $L\oplus\tilde{L}$, with $e_{i}$ and $\tilde{e}^{i}$ bases for 
$L$ and $\tilde{L}$ respectively. Any vector $X\in L\oplus\tilde{L}$ can then be decomposed as 
$X=x^{i}e_{i}+\tilde{x}_{i}\tilde{e}^{i}$.
Let $F^{I}=(\theta^{i},\tilde{\theta}_{j})$ be the dual basis, so that $F^{I}(E_{J})=\delta^{I}_{J}$.
The map $K_{t}\in{\rm Aut}(L\oplus\tilde{L})$ can be expanded as 
$K_{t}=(K_{t})^{I}{}_{J}E_{I}\otimes F^{J}$, and the matrix representation \eqref{DeformationKmatrix} means that
\begin{equation}
 K_{t} = e_{i}\otimes\theta^{i} - \tilde{e}^{i}\otimes\tilde{\theta}_{i} +2(B_{t})_{ij}\tilde{e}^{i}\otimes\theta^{j}.
 \label{DeformationK}
\end{equation}

\begin{remark}
In four dimensions, we have seen that half-integrability of $K$ is equivalent to the fact that one 
of the spinor fields representing $K$ is shear-free.
Therefore, the result that deformations of $K$ are encoded in $B$-transformations is somewhat natural,
since, as mentioned in Remark \ref{Remark:Shear}, $B$-transformations in generalized geometry are interpreted as a shearing.
\end{remark}

Now let us restrict to $d=4$ dimensions. 
Suppose that $\alpha^{A}$ and $\beta^{A}$ are the spinor fields representing $K$, eq. \eqref{KGR}.
Let $\mu^{iA'}=(\mu^{0A'},\mu^{1A'})$ and $\nu^{A'}_{i}=(\nu^{A'}_{0},\nu^{A'}_{1})$ 
be two bases for the primed spin bundle $\mathbb{S}'$. 
We have $\epsilon_{A'B'}\mu^{0A'}\mu^{1B'}=\tilde{N}\neq0$ and $\epsilon_{A'B'}\nu^{A'}_{0}\nu^{B'}_{1}=N\neq0$.
From these relations one can deduce\footnote{Here, $\epsilon_{ij}$ is defined by 
$\epsilon_{ij}=-\epsilon_{ji}$ ($i,j=0,1$) and $\epsilon_{01}=1$.} 
\begin{align*}
 \epsilon_{ij}\mu^{iA'}\mu^{jB'} = \tilde{N}\epsilon^{A'B'}, \qquad \epsilon_{A'B'}\mu^{iA'}\mu^{jB'} = \tilde{N}\epsilon^{ij}, \\
 \mu^{iA'}\mu_{jA'}=\tilde{N}\delta^{i}{}_{i}, \qquad \mu^{iA'}\mu_{iB'}=\tilde{N}\delta^{A'}{}_{B'},
\end{align*}
where $\mu_{jB'}\equiv \e_{A'B'}\epsilon_{ij}\mu^{iA'}$, and similarly for $\nu^{A'}_{i}$.
(Indices $i,j$ are raised and lowered with $\epsilon^{ij}$ and $\epsilon_{ij}$, analogously to $A',B'$.)
Using these identities and $F^{I}(E_{J})=\delta^{I}_{J}$, we deduce the following expressions for the frames and their duals:
\begin{align*}
  e_{i} = \beta^{A}\nu^{A'}_{i}\partial_{AA'}, \qquad & \qquad \tilde{e}^{i} = \alpha^{A}\mu^{iA'}\partial_{AA'}, \\
  \theta^{i} = (\phi N)^{-1}\alpha_{A}\nu^{i}_{A'}{\rm d}x^{AA'}, \qquad & 
  \qquad \tilde{\theta}_{i} = -(\phi \tilde{N})^{-1}\beta_{A}\mu_{iA'}{\rm d}x^{AA'}
\end{align*}
where we recall that $\phi = \alpha_{A}\beta^{A}$.
It then follows that $B_{t}$ is given by
\begin{equation*}
 B_{t} = (B_{t})_{ij}\tilde{e}^{i}\otimes\theta^{j} = 
 (\phi N)^{-1} (B_{t})_{ij}\alpha^{A}\alpha_{B}\mu^{iA'}\nu^{j}_{B'}\partial_{AA'}\otimes{\rm d}x^{BB'}
\end{equation*}
Using now that $(B_{t})_{ab}=g_{ac}(B_{t})^{c}{}_{b}$ is skew-symmetric, we get
\begin{equation*}
 (B_{t})_{ab} = - \phi^{-1}\varepsilon \alpha_{A}\alpha_{B}\epsilon_{A'B'}, \label{DeformationB}
\end{equation*}
where the scalar field $\varepsilon \equiv -\frac{1}{2}N^{-1}(B_{t})_{ij}\epsilon^{C'D'}\mu^{i}_{C'}\nu^{j}_{D'}$ 
encodes the deformation.
From \eqref{DeformationB} we get $(B_{t})^{a}{}_{b}=\phi^{-1}\varepsilon \alpha^{A}\alpha_{B}\delta^{A'}{}_{B'}$,
thus, the deformed para-complex structure \eqref{DeformationK} is 
\begin{align}
\nonumber (K_{t})^{a}{}_{b} ={}& K^{a}{}_{b} + 2(B_{t})^{a}{}_{b} \\
\nonumber ={}& \phi^{-1} (\alpha^{A}\beta_{B}+\beta^{A}\alpha_{B})\delta^{A'}{}_{B'} + 2\phi^{-1}\varepsilon \alpha^{A}\alpha_{B}\delta^{A'}{}_{B'} \\
 ={}& (\alpha_{C}\beta^{C}_{\varepsilon})^{-1}(\alpha^{A}\beta_{\varepsilon B}+\beta^{A}_{\varepsilon}\alpha_{B})\delta^{A'}{}_{B'},
 \label{DeformationK4D}
\end{align}
where we introduced 
\begin{equation}
 \beta^{A}_{\varepsilon}:= \beta^{A}+\varepsilon \alpha^{A}. \label{betaepsilon}
\end{equation}
Note that $t$ is the parameter of the continuous family $K_{t}$, while $\varepsilon$ is a (possibly complex) scalar field on the manifold.
The eigenbundles of $K_{t}$ are analogous to \eqref{LGR}-\eqref{LtildeGR}:
\begin{align}
 L_{t} ={}& \{u^{a}\in TM\otimes\mathbb{C} \;|\; u^{a}=\beta^{A}_{\varepsilon}\nu^{A'}, \;\; \nu^{A'}\in\mathbb{S}'(0,-1;-w_1) \}, \label{LDeformedK}\\
 \tilde{L}_{t} ={}& \{v^{a}\in TM\otimes\mathbb{C} \;|\; v^{a}=\alpha^{A}\mu^{A'}, \;\; \mu^{A'}\in\mathbb{S}'(-1,0;-w_0) \}. \label{LtildeDeformedK}
\end{align}
We see that the bundle $\tilde{L}$ is not modified since $\tilde{L}_{t}=\tilde{L}$, whereas $L$ is deformed to $L_{t}$
by using $\beta^{A}_{\varepsilon}$ instead of $\beta^{A}$.

Half-integrability of the deformation $K_{t}$ refers to the involutivity properties of $L_{t}$.
Such conditions were given in Lemma \ref{Lemma:paraHermitian}, where we saw that one has to study equation \eqref{ShearFree}. 
Therefore, the spinor field \eqref{betaepsilon} codifies the integrability properties of the deformation \eqref{DeformationK}. 
We first prove the following identity:

\begin{proposition}
Let $\alpha^{A},\beta^{A}$ be arbitrary spinor fields with $\alpha_{A}\beta^{A}=\phi\neq0$, let $\mathcal{C}_{AA'}$
be the associated covariant derivative (Def. \ref{Def:GHPconformalCV}) and let 
$\mathcal{C}_{A'}, \tilde{\mathcal{C}}_{A'}$ be the projections \eqref{CProjections}. 
Let $\beta^{A}_{\varepsilon}$ be given by \eqref{betaepsilon}, where $\varepsilon$ is an arbitrary (possibly complex) scalar field.
Then we have the identity
\begin{equation}
 \beta^{A}_{\varepsilon}\beta^{B}_{\varepsilon}\nabla_{AA'}\beta_{\varepsilon B} = 
 \beta^{A}\beta^{B}\nabla_{AA'}\beta_{B} 
 + \phi\mathcal{C}_{A'} \varepsilon + \tfrac{1}{2}\phi\tilde{\mathcal{C}}_{A'}\varepsilon^{2}
 + (\alpha^{A}\alpha^{B}\nabla_{AA'}\alpha_{B})\varepsilon^{3}. \label{DeformedShear}
\end{equation}
\end{proposition}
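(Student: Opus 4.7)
The identity is purely algebraic, and my plan is to prove it by direct expansion followed by matching term-by-term. Substituting $\beta^{A}_{\varepsilon}=\beta^{A}+\varepsilon\alpha^{A}$ gives
$\beta^{A}_{\varepsilon}\beta^{B}_{\varepsilon}=\beta^{A}\beta^{B}+\varepsilon(\beta^{A}\alpha^{B}+\alpha^{A}\beta^{B})+\varepsilon^{2}\alpha^{A}\alpha^{B}$,
and the Leibniz rule yields $\nabla_{AA'}\beta_{\varepsilon B}=\nabla_{AA'}\beta_{B}+\alpha_{B}\nabla_{AA'}\varepsilon+\varepsilon\nabla_{AA'}\alpha_{B}$. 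Multiplying these out produces nine terms, which I would organise by total power of $\varepsilon$ and by the presence of $\nabla_{AA'}\varepsilon$ factors.

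Next I would invoke the standard 2-spinor identities $\alpha^{C}\alpha_{C}=0=\beta^{C}\beta_{C}$, $\alpha_{C}\beta^{C}=\phi$, $\alpha^{C}\beta_{C}=-\phi$, and $\xi^{C}\eta_{C}=-\xi_{C}\eta^{C}$. The $\varepsilon^{0}$ and $\varepsilon^{3}$ contributions reproduce the first and last terms of the right-hand side directly. Among the middle terms, the $\nabla_{AA'}\varepsilon$-containing pieces collapse to $\phi\beta^{A}\nabla_{AA'}\varepsilon$ at order $\varepsilon$ and to $\phi\varepsilon\alpha^{A}\nabla_{AA'}\varepsilon=\tfrac{1}{2}\phi\alpha^{A}\nabla_{AA'}\varepsilon^{2}$ at order $\varepsilon^{2}$, while the $\alpha^{A}\alpha^{B}\alpha_{B}\nabla_{AA'}\varepsilon$ contraction vanishes. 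The surviving ``algebraic'' cross-terms are $\beta^{A}\beta^{B}\nabla_{AA'}\alpha_{B}+(\beta^{A}\alpha^{B}+\alpha^{A}\beta^{B})\nabla_{AA'}\beta_{B}$ at order $\varepsilon$, together with three analogous cross-terms at order $\varepsilon^{2}$.

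The crux of the argument is to recognise that the combination of derivative pieces and cross-terms at order $\varepsilon$ equals $\phi\mathcal{C}_{A'}\varepsilon$, and at order $\varepsilon^{2}$ equals $\tfrac{1}{2}\phi\tilde{\mathcal{C}}_{A'}\varepsilon^{2}$. This forces $\varepsilon$ to carry the unique weight type $(-1,1;\,w_{1}-w_{0})$, which is precisely what makes $\varepsilon\alpha^{A}$ transform as $\beta^{A}$ and $\beta_{\varepsilon}^{A}$ a well-defined weighted spinor of type $(0,1;w_{1})$. With this assignment, Definition~\ref{Def:GHPconformalCV} gives
$\mathcal{C}_{AA'}\varepsilon=\nabla_{AA'}\varepsilon+((w_{1}-w_{0})f_{AA'}-L_{AA'}+M_{AA'})\varepsilon$,
and analogously for $\varepsilon^{2}$ with doubled weight factors. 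Multiplying by $\phi\beta^{A}$ (respectively $\phi\alpha^{A}$) and substituting the explicit formulas \eqref{connection1}--\eqref{connection2} for $L$ and $M$, the $f_{AA'}$-weight coefficients cancel by $(w_{1}-w_{0})+w_{0}-w_{1}=0$, while the $\phi^{-1}\beta_{B}\nabla_{AA'}\alpha^{B}$ and $\phi^{-1}\alpha_{B}\nabla_{AA'}\beta^{B}$ pieces of $L,M$ produce exactly the cross-terms $\beta^{A}\beta^{B}\nabla_{AA'}\alpha_{B}$ and $\beta^{A}\alpha^{B}\nabla_{AA'}\beta_{B}$ (and their $\varepsilon^{2}$ analogues).

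The main obstacle is reconciling the residual Lee-form pieces. The summands $\phi^{-1}\beta_{A}f_{A'B}\alpha^{B}$ in $L_{AA'}$ and $\phi^{-1}\alpha_{A}f_{A'B}\beta^{B}$ in $M_{AA'}$, once contracted with $\phi\beta^{A}$ or $\phi\alpha^{A}$ via $\beta^{A}\alpha_{A}=\phi$, leave orphan contributions $\pm\phi f_{A'B}\beta^{B}$ and $\pm\phi f_{A'B}\alpha^{B}$ with no unprimed index, which must coincide with the remaining cross-terms $\alpha^{A}\beta^{B}\nabla_{AA'}\beta_{B}$ and $\beta^{A}\alpha^{B}\nabla_{AA'}\alpha_{B}$ coming from the expansion. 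This reconciliation is not purely formal: it relies on $f_{AA'}$ being the specific Lee form induced by $K$ via Proposition~\ref{Prop:Lee}, and can be verified directly by computing $\theta_{a}=\frac{1}{d-2}K^{b}{}_{c}\nabla_{b}K^{c}{}_{a}$ for $d=4$ and the explicit form \eqref{KGR} of $K$ in terms of $\alpha,\beta$. Once these Lee-form identities are substituted, all residuals cancel and the stated identity follows.
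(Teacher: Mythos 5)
Your argument is correct, and it reaches the identity by a genuinely different route from the paper. You work ``bottom-up'': expand the left-hand side entirely in terms of the Levi-Civita connection, isolate the cross-terms order by order in $\varepsilon$, and then recognise that they assemble into $\phi\,\mathcal{C}_{A'}\varepsilon+\tfrac{1}{2}\phi\,\tilde{\mathcal{C}}_{A'}\varepsilon^{2}$ once $\varepsilon$ is assigned type $(-1,1;w_{1}-w_{0})$ and the explicit forms \eqref{connection1}--\eqref{connection2} of $L_{a},M_{a}$ and of the Lee form are substituted. The paper instead works ``top-down'': it first notes that $\beta_{\varepsilon B}$ has the same weights as $\beta_{B}$, so by \eqref{cvCexplicit} the extra terms in $\mathcal{C}_{AA'}\beta_{\varepsilon B}$ are killed upon contraction with $\beta^{A}_{\varepsilon}\beta^{B}_{\varepsilon}$ and the left-hand side equals $\beta^{A}_{\varepsilon}\beta^{B}_{\varepsilon}\mathcal{C}_{AA'}\beta_{\varepsilon B}$; it then expands $\mathcal{C}_{AA'}\beta_{\varepsilon B}=\mathcal{C}_{AA'}\beta_{B}+\alpha_{B}\mathcal{C}_{AA'}\varepsilon+\varepsilon\,\mathcal{C}_{AA'}\alpha_{B}$ and finishes in two lines using the prepackaged identities \eqref{Calpha}--\eqref{Cbeta} together with $\beta^{A}_{\varepsilon}\alpha_{A}=\phi$, $\beta^{A}_{\varepsilon}\beta_{A}=-\phi\varepsilon$. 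The two routes are essentially inverse to one another: the paper never touches the Lee form because it is already absorbed into \eqref{Calpha}--\eqref{Cbeta}, while your version must verify the contractions $f_{A'B}\beta^{B}=-\phi^{-1}\alpha^{A}\beta^{B}\nabla_{AA'}\beta_{B}$ and $f_{A'B}\alpha^{B}=\phi^{-1}\beta^{A}\alpha^{B}\nabla_{AA'}\alpha_{B}$ explicitly --- which you correctly flag as the non-formal step, and which do follow from the expression for $f_{AA'}$ in terms of $\alpha,\beta$ given in Appendix B. What the paper's route buys is brevity and manifest gauge covariance at every step; what yours buys is independence from the appendix identities and a transparent accounting of exactly which cancellations encode the compatibility between the Lee form and the representative spinors. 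One small presentational caveat: your text is a plan with the decisive cancellations described rather than displayed, so to be a complete proof the order-$\varepsilon$ and order-$\varepsilon^{2}$ bookkeeping (nine terms, the two Lee-form contractions, and the derivation $\varepsilon\tilde{\mathcal{C}}_{A'}\varepsilon=\tfrac{1}{2}\tilde{\mathcal{C}}_{A'}\varepsilon^{2}$) should be written out; all of it checks out.
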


\begin{proof}
We first check that $\nabla_{AA'}$ can be replaced by $\mathcal{C}_{AA'}$ in the left hand side of \eqref{DeformedShear}.
Note that the weights of $\beta_{\varepsilon B}$ are the same as those of $\beta_{B}$, so using Table \ref{Table:weights} 
and formula \eqref{cvCexplicit} we have
\begin{equation*}
 \mathcal{C}_{AA'}\beta_{\varepsilon B} = \nabla_{AA'}\beta_{\varepsilon B}-f_{A'B}\beta_{\varepsilon A}
 +((w_{1}+1)f_{AA'}+M_{AA'})\beta_{\varepsilon B},
\end{equation*}
thus, contracting with $\beta^{A}_{\varepsilon}\beta^{B}_{\varepsilon}$, we get
\begin{equation*}
 \beta^{A}_{\varepsilon}\beta^{B}_{\varepsilon}\nabla_{AA'}\beta_{\varepsilon B} = 
 \beta^{A}_{\varepsilon}\beta^{B}_{\varepsilon}\mathcal{C}_{AA'}\beta_{\varepsilon B}.
\end{equation*}
Now,
\begin{align*}
\mathcal{C}_{AA'}\beta_{\varepsilon B} ={}& \mathcal{C}_{AA'}\beta_{B} + \alpha_{B}\mathcal{C}_{AA'}\varepsilon 
 +\varepsilon\mathcal{C}_{AA'}\alpha_{B}\\
 ={}& \phi^{-2}\alpha_{A}\alpha_{B}(\beta^{C}\beta^{D}\nabla_{CA'}\beta_{D}) + \alpha_{B}\mathcal{C}_{AA'}\varepsilon 
 + \varepsilon\phi^{-2}\beta_{A}\beta_{B}(\alpha^{C}\alpha^{D}\nabla_{CA'}\alpha_{D}) 
\end{align*}
where in the second line we used identities \eqref{Calpha} and \eqref{Cbeta}. 
Contracting now with $\beta^{A}_{\varepsilon}\beta^{B}_{\varepsilon}$ and using $\beta^{A}_{\varepsilon}\beta_{A}=-\phi\varepsilon$ 
and $\beta^{A}_{\varepsilon}\alpha_{A}=\phi$, we get
\begin{equation*}
 \beta^{A}_{\varepsilon}\beta^{B}_{\varepsilon}\mathcal{C}_{AA'}\beta_{\varepsilon B} = 
 \beta^{A}\beta^{B}\nabla_{AA'}\beta_{B} 
 + \phi\mathcal{C}_{A'} \varepsilon + \phi\varepsilon\tilde{\mathcal{C}}_{A'}\varepsilon
 + (\alpha^{A}\alpha^{B}\nabla_{AA'}\alpha_{B})\varepsilon^{3}
\end{equation*}
therefore \eqref{DeformedShear} follows.
\end{proof}

\begin{theorem}\label{Theorem:IntegrableDeformations}
Let $(g,K)$ be a half-integrable para-Hermitian structure (Def. \ref{Def:LparaHermitian}) on a 
4-manifold $M$, and let $\a_{A},\b_{A}$ be the spinor fields representing $K$, where $\b_{A}$ satisfies \eqref{ShearFree}.
Let $K_{t}$ be a small deformation of the para-complex structure $K$, represented  by the scalar field $\varepsilon$
as in \eqref{DeformationK4D}, \eqref{betaepsilon}.
Then the eigenbundle $L_{t}$ is involutive  (i.e. $K_{t}$ is a half-integrable deformation) if and only if 
\begin{equation}
 \mathcal{C}_{A'} \varepsilon = 0. \label{IntegrableDeformation}
\end{equation}
\end{theorem}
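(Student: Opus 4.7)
The plan is to combine the integrability criterion of Lemma \ref{Lemma:paraHermitian} with the identity \eqref{DeformedShear} just established, and then linearize in the deformation parameter.

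First, I would note that the deformed eigenbundle $L_{t}$, given explicitly in \eqref{LDeformedK}, is spanned by vectors of the form $\beta^{A}_{\varepsilon}\nu^{A'}$, so the role previously played by $\beta^{A}$ is now played by $\beta^{A}_{\varepsilon}$. By Lemma \ref{Lemma:paraHermitian} (applied to the para-Hermitian structure $(g,K_{t})$ and noticing that the chirality is preserved), $L_{t}$ is involutive if and only if the spinor field $\beta^{A}_{\varepsilon}$ is shear-free, that is
\begin{equation*}
 \beta^{A}_{\varepsilon}\beta^{B}_{\varepsilon}\nabla_{AA'}\beta_{\varepsilon B}=0.
\end{equation*}

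Next I would plug into this the identity \eqref{DeformedShear}. The first term on the right-hand side vanishes by the hypothesis that $\beta_{A}$ itself is shear-free, so the integrability condition for $\beta^{A}_{\varepsilon}$ reduces to
\begin{equation*}
 \phi\,\mathcal{C}_{A'}\varepsilon+\tfrac{1}{2}\phi\,\tilde{\mathcal{C}}_{A'}\varepsilon^{2}+(\alpha^{A}\alpha^{B}\nabla_{AA'}\alpha_{B})\,\varepsilon^{3}=0.
\end{equation*}

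The final step is to invoke the hypothesis that $K_{t}$ is a \emph{small} deformation, which means we retain only the terms linear in $\varepsilon$. The $\tilde{\mathcal{C}}_{A'}\varepsilon^{2}$ and $\varepsilon^{3}$ contributions are of second and third order respectively and therefore drop out to leading order. Since $\phi=\alpha_{C}\beta^{C}\neq 0$, the surviving linearized equation is equivalent to $\mathcal{C}_{A'}\varepsilon=0$, proving \eqref{IntegrableDeformation}. Conversely, if $\mathcal{C}_{A'}\varepsilon=0$, the linear obstruction vanishes and $L_{t}$ is involutive to first order, which is all that is meant by integrability of a small deformation.

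There is no real obstacle: the algebraic work has been absorbed into the preceding proposition, and the only conceptual point is clarifying that ``small deformation'' corresponds to linearization in $\varepsilon$, so that the $\varepsilon^{2}$ and $\varepsilon^{3}$ tails are dropped. If one wanted a non-perturbative statement the criterion would instead be the full vanishing of the cubic polynomial in $\varepsilon$ displayed above, which is not a conformally/projectively invariant condition unless extra structure (e.g.\ $\alpha^{A}$ also shear-free) is assumed.
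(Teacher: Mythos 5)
Your proposal is correct and follows essentially the same route as the paper's proof: reduce involutivity of $L_{t}$ to the shear-free condition for $\beta^{A}_{\varepsilon}$ via Lemma \ref{Lemma:paraHermitian}, substitute the identity \eqref{DeformedShear}, and keep only the terms linear in $\varepsilon$. The closing remark on the non-linearized (cubic) condition is a reasonable aside but not needed for the statement as given.
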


\begin{proof}
From Lemma \ref{Lemma:paraHermitian} and its proof we know that involutivity of $L_{t}$ 
is equivalent to $N_{P_{t}}\equiv 0$, which in turn is true if and only if $\beta^{A}_{\varepsilon}$
satisfies equation \eqref{ShearFree}. The left hand side of \eqref{ShearFree} has been computed for $\beta^{A}_{\varepsilon}$
in \eqref{DeformedShear}. Since $\beta^{A}$ satisfies \eqref{ShearFree} by assumption, and ``small'' deformation 
means that we only keep terms linear in $\varepsilon$, the result \eqref{IntegrableDeformation} follows.
\end{proof}

\begin{remark}
Equation \eqref{IntegrableDeformation} has non-trivial integrability conditions: applying $\mathcal{C}^{A'}$,
we get $\mathcal{C}^{A'}\mathcal{C}_{A'} \varepsilon = 0$. Using the first equation in \eqref{CCscalars}, we have
\begin{equation*}
 \mathcal{C}_{A'}\mathcal{C}^{A'} \varepsilon = (\beta^{A}\beta^{B}\nabla_{A'A}\psi^{A'}_{B})\varepsilon
\end{equation*}
where $\psi_{a}$ is given by \eqref{internalconnection} with $r,r',w$ the weights of $\varepsilon$.
These weights can be deduced from \eqref{betaepsilon}: $w(\varepsilon)=w_{1}-w_{0}$, $r(\varepsilon)=-1$, $r'(\varepsilon)=+1$.
Replacing these values in identity \eqref{bbdpsi2}:
\begin{equation*}
 \mathcal{C}_{A'}\mathcal{C}^{A'} \varepsilon  = 4\phi^{-1}\Psi_{ABCD}\alpha^{A}\beta^{B}\beta^{C}\beta^{D}\; \varepsilon.
\end{equation*}
Therefore, if $\varepsilon\neq0$ satisfies \eqref{IntegrableDeformation}, then we must have
$\Psi_{ABCD}\alpha^{A}\beta^{B}\beta^{C}\beta^{D}=0$, or equivalently $\Psi_{ABCD}\beta^{B}\beta^{C}\beta^{D}=0$ 
(since $\beta^{A}$ satisfies \eqref{IntegrabilityConditionSFR}). 
This means that {\em in order for integrable deformations to exist, the Weyl tensor must be (half) algebraically special}.
\end{remark}

We end this section with a few remarks.

First, notice that in view of \eqref{weightedd0} (see also \eqref{LieAlgExtDerGR}), 
we can write \eqref{IntegrableDeformation} equivalently as
\begin{equation}\label{MC}
 {\rm d}^{\mathbb{L}}\varepsilon=0
\end{equation}
where ${\rm d}^{\mathbb{L}}$ is the weighted Lie algebroid exterior derivative \eqref{weightedd}.
Taking into account the fact that $\varepsilon$ is a scalar field, we can interpret this as a Maurer-Cartan equation
(since the commutator term in the usual Maurer-Cartan eq. vanishes because $\varepsilon$ is a scalar).
The reason for interpreting \eqref{MC} in such a way is that it
fits with the usual theory of integrable deformations of complex structures (see e.g. \cite[Chapter 6]{Huybrechts}).

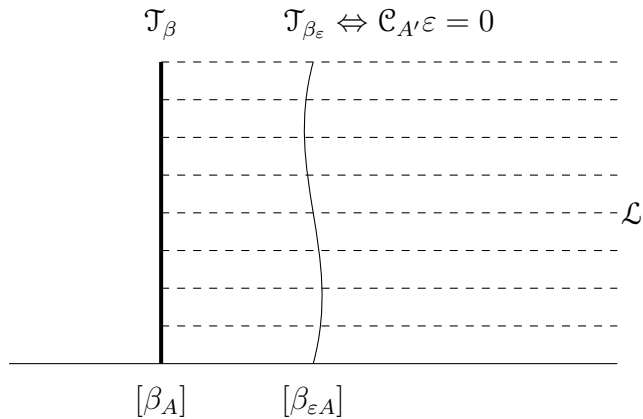
\begin{figure}

\begin{center}
\begin{tikzpicture}

\draw (0,0)--(8,0);

\draw[line width=1.5pt] (2,0) -- (2,4);
\node (beta) at (2,-0.5) {$[\beta_{A}]$};
\node (Tbeta) at (2,4.5) {$\mathcal{T}_{\beta}$};

\draw[dashed]  (2,0.5) -- (8,0.5);
\draw[dashed]  (2,1) -- (8,1);
\draw[dashed]  (2,1.5) -- (8,1.5);
\draw[dashed]  (2,2) -- (8,2);
\draw[dashed]  (2,2.5) -- (8,2.5);
\draw[dashed]  (2,3) -- (8,3);
\draw[dashed]  (2,3.5) -- (8,3.5);
\draw[dashed]  (2,4) -- (8,4);
\node (L) at (8.2,2) {$\mathcal{L}$};

\draw (4,0) to[out=75,in=255] (4,4);
\node (betae) at (4,-0.5) {$[\beta_{\varepsilon A}]$};
\node (Tbetae) at (5,4.5) {$\mathcal{T}_{\beta_{\varepsilon}} \Leftrightarrow \mathcal{C}_{A'}\varepsilon = 0$};

\end{tikzpicture}
\end{center}
\caption{Illustration of small deformations of a para-complex structure in $M$ in terms of 2D twistor spaces. 
A principal spinor field $\beta_{A}$ of a half-integrable 
para-complex structure $K$ defines a 2D twistor space $\mathcal{T}_{\beta}$ (thick vertical straight line).
The dashed horizontal lines represent a line bundle $\mathcal{L}$ over $\mathcal{T}_{\beta}$.  
A section of $\mathcal{L}$ (vertical curved line) corresponds to a scalar field $\varepsilon$ on $M$ such that $\mathcal{C}_{A'}\varepsilon=0$. 
It represents a small half-integrable deformation of $K$, so it defines a 2D twistor space $\mathcal{T}_{\beta_{\varepsilon}}$. 
Integrability conditions imply that the Weyl tensor must be half-algebraically special.}
\label{Figure:deformations}
\end{figure}

Second, recall from section \ref{Sec:2dimtwistorspace} that a half-integrable para-Hermitian structure defines 
a 2D twistor space, say $\mathcal{T}_{\beta}$.
In \cite{Ara20}, functions $\varepsilon$ satisfying \eqref{IntegrableDeformation} were interpreted as 
defining sections of line bundles over $\mathcal{T}_{\beta}$. 
Here we see that \eqref{IntegrableDeformation} appear as integrability conditions for deformations of a para-complex structure in $M$.
We can attempt a more or less ``intuitive'' understanding of this by considering a picture similar to 
figure \ref{Figure:3Dtwistorspace}, in which 2D twistor spaces are represented as vertical lines. 
We illustrate this in figure \ref{Figure:deformations}.
If $\mathcal{L}$ is a line bundle over $\mathcal{T}_{\beta}$, a section is a smooth map 
$\mathcal{T}_{\beta}\to\mathcal{L}$, defined by the condition that it should be covariantly constant over 
the twistor surfaces associated to $\mathcal{T}_{\beta}$, i.e. we can think of it as a scalar field $\varepsilon$ on $M$
such that $\mathcal{C}_{A'}\varepsilon=0$.
So a section of $\mathcal{L}$ defines, by theorem \ref{Theorem:IntegrableDeformations}, a half-integrable small deformation, 
and consequently a 2D twistor space $\mathcal{T}_{\beta_{\varepsilon}}$, ``close to'' $\mathcal{T}_{\beta}$, 
see fig. \ref{Figure:deformations}.

Finally, in the literature on DFT (see \cite{Svo18}), there are weaker forms of integrability where one 
replaces the ordinary Lie bracket by the so-called {\em D-bracket} (a generalization of the Dorfman bracket).
In particular, in \cite[Proposition 5.6]{Svo18} it is shown that a small deformation is `weakly integrable with respect to $K$'
if and only if a certain Maurer-Cartan equation is satisfied, where the operators involved are associated to the Lie algebroid structure. 
The notion of weak integrability refers to the involutivity of a subbundle in terms of the {\em canonical D-bracket} 
$\llbracket \cdot,\cdot \rrbracket^{\rm can}$, see \cite{Svo18} for details.
Adapting these concepts to our formulation, the subbundle $L_{t}$ is weakly integrable if, for all $x,y\in\Gamma(L_{t})$, 
it holds $(\llbracket x,y \rrbracket^{\rm can})^{AA'}=\beta^{A}_{\varepsilon}\pi^{A'}$ for some $\pi^{A'}$. 
Equivalently, this is $\beta_{\varepsilon A }(\llbracket x,y \rrbracket^{\rm can})^{AA'}=0$. 
By a direct calculation (using the definition of $\llbracket \cdot,\cdot \rrbracket^{\rm can}$ given in \cite{Svo18})
one can show that this is always true in our case,
so any small deformation of $K$ is weakly integrable with respect to $K$.

\section{Summary and conclusions}\label{Sec:Conclusions}

In this work we have attempted to show that there is a close connection between the para-Hermitian approach to
double field theory (a formulation of string theory designed to be manifestly covariant under T-duality) 
and the geometry of special four-dimensional manifolds that are of particular interest in general relativity.
This approach to DFT is based on the consideration of para-complex structures 
where one of the eigenbundles is integrable, 
so the tangent bundle of the extended manifold has the structure of a Courant algebroid, 
which is one of the main objects studied in generalized geometry.
We have analysed the four-dimensional version of this approach 
while also allowing the para-complex structure to be {\em complex-valued}, which in turn allows us to 
work with different metric signatures.

One of our original motivations was to point out the similarity between some of the geometrical 
structures in the para-Hermitian approach to DFT, and Pleba\'nski's hyper-heavenly construction in
general relativity. 
This connection is revealed when one observes that Pleba\'nski's congruence of ``null strings'' 
can be understood as the foliation by integral manifolds of an involutive eigenbundle of a 
complex-valued ``almost-complex structure''.
Such a structure has independent eigenbundles and so it is actually `half-integrable'. 
In real geometry, this phenomenon is captured by {\em para}-complex structures.

Since a spacetime manifold $M$ is, in general, not naturally equipped with a (para-)complex structure 
(see also below), we first deduced the general form of any almost (para-)Hermitian structure in four dimensions,
and we described the space of such structures over $M$.
We did this by showing their equivalence to (anti-)self-dual 2-forms and then describing the corresponding 
spaces in terms of projective spaces, for the three different possible metric signatures. 
We showed that these spaces are certain fiber bundles over the 4-manifold, where the fibers 
can be real 2-spheres (Riemannian signature), 1-sheet hyperboloids (split signature), 
or complex 2-spheres (Lorentz signature).
Noticing that an almost para-Hermitian structure is associated only to a {\em conformal} metric 
and to {\em projective} spinor fields, we showed how to deal with the associated ``gauge freedom'' 
by employing a formalism manifestly covariant under gauge transformations.
The corresponding covariant objects are {\em weighted fields}.

Having the general form of any almost para-Hermitian structure, we proceeded to analyse 
integrability issues, showing that half-integrability corresponds to the existence of 
special spinor fields (``shear-free congruences''), and then constructing 
the associated Lie and Courant algebroids. 
We were naturally led to the problem of generalizing the algebroid structures to fields with arbitrary weights, 
and we showed that ordinary Lie algebroids seem to be incompatible with them, 
since the Leibniz rule is not compatible with the gradation of the vector bundle of weighted fields 
(assuming that the anchor does not vanish identically).
Nevertheless, we were able to generalize the Lie algebroid differential complex, 
under the additional assumption of algebraically special curvature.
Further applications of these generalized geometric structures are left for future work.

Finally, we discussed connections of this approach with twistors.
The first observation is that the Lagrangian submanifolds of DFT correspond to the basic object in 
twistor theory, namely twistor surfaces in a 4-manifold.
Such twistor surfaces arise as integral submanifolds of a half-integrable (para-)Hermitian structure $K$.
This $K$ defines a two-dimensional (2D) twistor space, by means of the corresponding projective spinor 
field satisfying the shear-free condition. 
If all (para-)Hermitian structures are integrable, we showed that one then gets
a three-dimensional (3D) twistor space, which is a one-parameter family of 2D twistor spaces, 
parametrized by projective spinor fields.
In general this is not a fibration over projective spinors: the fibration structure is obtained iff the 
vacuum Einstein equations hold.
We discussed connections of this construction with other developments 
such as the Atiyah-Hitchin-Singer and the LeBrun-Mason approaches to twistor theory, 
and also with the $\mathbb{CP}^{1}$ twistor families known in the literature.
Lastly, we analysed deformations of a para-complex structure $K$ in $M$, and we showed that a 
small deformation is half-integrable iff it corresponds to a section of a line bundle over the 
2D twistor space associated to $K$.
We also found that small half-integrable deformations can only exist 
if the Weyl tensor is half-algebraically special.

\medskip
As mentioned, a choice of (para-)complex structure in a spacetime manifold is in general an extra assumption, 
and this is one of the reasons why we first focused on classifying all such possible choices. 
However, for 4-manifolds there are, at least, two situations of interest where natural almost (para-)complex manifolds arise:
\begin{enumerate}
\item Einstein manifolds with algebraically special Weyl tensor,
\item Twistor theory.
\end{enumerate}
The first case is perhaps the one of immediate interest in general relativity. The Goldberg-Sachs theorem implies that 
a vacuum solution (cosmological constant allowed) to the Einstein equations which is algebraically special,
has at least one shear-free congruence. 
If the corresponding (projective) spinor field is $\beta_{A}$, one can choose any other spinor field $\alpha_{A}$ 
with $\beta^{A}\alpha_{A}\neq0$ and define a half-integrable para-Hermitian structure by \eqref{KGR}.
In the second case, for 4-manifolds $M$ with a metric of Riemannian or split signature,
the twistor construction associates a natural almost-complex manifold: it gives an almost-complex structure 
to the space of all almost-complex structures in $M$.

\medskip
To conclude, we make a few comments about some possible directions that will be 
explored in future works.

First, as mentioned before, the Lie algebroid structure of a manifold with a shear-free spinor field has associated 
natural differential complexes that, in our case, allow to find certain potentials for 
problems of interest in general relativity. This particular point will be the subject of a separate work.
In addition, we notice that in the current work we have merely given an explicit description of the Courant algebroid 
associated to such special 4-manifolds, but we have not applied the resulting construction in any interesting way. 
It may be worth analysing if this explicit connection between generalized geometry and spacetimes 
of relevance in relativity has potentially interesting consequences.
More broadly, the generalized notions of integrability encountered in double field theory (such as the D-bracket or 
the associated metric algebroids) were not applied in this work.

In addition, we have not attempted to analyse applications of the notion of T-duality that is obtained 
in general relativity by particularizing the DFT framework to 4-manifolds. 
However, an immediate observation is that, for Einstein manifolds 
of Petrov type D, there is a natural para-Hermitian structure where both eigenbundles are integrable, 
and one eigenbundle is associated to ingoing principal null directions while the other one is 
associated to outgoing p.n.d.'s.
The transformation between ingoing and outgoing p.n.d.'s can then be understood 
as a T-duality transformation.
The question of whether this has some interesting consequences for e.g. black hole spacetimes 
(or perturbations thereof) is left for future work.

Another point that will be elaborated on in a separate work is the relationship between the 
bundle of complex-valued almost-complex structures in a 4-manifold and the theory 
of H-space and asymptotic twistor space (see section \ref{Sec:twistorsinliterature}).

Finally, a possible generalization to higher dimensional manifolds of some of the ideas developed in this work  
will likely follow the perspective and results in \cite{Mason2010}. 
A spinorial treatment should involve the use of {\em pure} spinors.

\subsubsection*{Acknowledgements}

I gratefully acknowledge the support of the Alexander von Humboldt Foundation
through a Humboldt Research Fellowship for Postdoctoral Researchers.

\appendix

\section{Spinors in four dimensions}\label{Appendix:spinors}

In this appendix we review some basic facts about spinors at a point in four dimensions, in any signature. 
For spinor {\em fields}, see the beginning of Appendix \ref{Appendix:CovariantFormalism}.

Given a $d$-dimensional vector space with a non-degenerate inner product of signature $(p,q)$, 
with $p+q=d$, the spin group can be defined as the double covering of the identity component of the orthogonal group, that is 
${\rm Spin}(p,q)=\widetilde{{\rm SO}_o}(p,q)$.
In four dimensions, one has the following isomorphisms depending on the signature:
\begin{align}
 {\rm Spin}(1,3) \cong{}& {\rm SL}(2,\mathbb{C}), \label{spin13} \\
 {\rm Spin}(4) \cong{}& {\rm SU}(2)\times{\rm SU}(2), \label{spin4} \\
 {\rm Spin}(2,2) \cong{}& {\rm SL}(2,\mathbb{R})\times{\rm SL}(2,\mathbb{R}). \label{spin22}
\end{align}
The natural (inequivalent) representations of ${\rm SL}(2,\mathbb{C})$ are $\mathbb{C}^{2}$ and $\bar{\mathbb{C}}^{2}$.
For ${\rm SU}(2)$, the natural representation is $\mathbb{C}^2$ (which in this case is equivalent as a representation 
to $\bar{\mathbb{C}}^{2}$), and for ${\rm SL}(2,\mathbb{R})$ the natural representation is $\mathbb{R}^{2}$.
Therefore, in all cases the finite dimensional irreducible representations are classified by two integers or half-integers $(n,m)$, 
that are said to be of opposite chirality. The fundamentals are $(\frac{1}{2},0)$ and $(0,\frac{1}{2})$.
We see that, in the Lorentzian and Riemannian cases, spinors are complex, while in neutral signature they are real.
Furthermore, in Lorentz signature, complex conjugation interchanges chirality (i.e. it maps an element of 
$\mathbb{C}^{2}$ to an element of $\bar{\mathbb{C}}^{2}$), but in the Riemannian case it is an involution.
In the following everything is valid for any signature.

We denote elements in the $(\frac{1}{2},0)$ representation by e.g. $\varphi^{A}$, 
and elements in $(0,\frac{1}{2})$ by e.g. $\psi^{A'}$.
The vector representation is $(\frac{1}{2},\frac{1}{2})=(\frac{1}{2},0)\otimes(0,\frac{1}{2})$, thus, a vector $v^{a}$ can 
be represented as an object with a pair of spinor indices $v^{AA'}$. The map from $v^{a}$ to $v^{AA'}$ is 
$v^a \to v^{AA'} = \sigma_{a}{}^{AA'}v^{a}$, where $\sigma_{a}$ represents the identity and the 
three Pauli matrices.
It is customary to omit $\sigma_{a}{}^{AA'}$ in this correspondence, so that one simply associates 
a vector index with a pair of spinor indices of opposite chirality, e.g. $v^{a}\equiv v^{AA'}$, etc.;
in this work we follow this convention.

The spin spaces are equipped with the symplectic structures $\e_{AB}$ and $\e_{A'B'}$. 
The inverses are denoted by $\e^{AB}$ and $\e^{A'B'}$, and they satisfy 
$\e^{AC}\e_{BC}=\d^{A}{}_{B}$ and $\e^{A'C'}\e_{B'C'}=\d^{A'}{}_{B'}$.
These objects allow to raise and lower spinor indices; our convention is 
\begin{equation}\label{RaisingLowering}
 \varphi^{A}:=\e^{AB}\varphi_{B}, \qquad \phi_{A}:=\e_{BA}\phi^{B},
\end{equation}
and analogously for primed indices. 
The relation between $\e_{AB}$, $\e_{A'B'}$ and the metric is simply 
$g_{ab}=g_{AA'BB'}=\epsilon_{AB}\epsilon_{A'B'}$.

Given two spinors $\xi_{A},\eta_{A}$ with $\e^{AB}\xi_{A}\eta_{B}=\chi\neq 0$, 
and using the convention \eqref{RaisingLowering} to define $\xi^{A},\eta^{A}$, we have
\begin{align}
 \e_{AB} ={}& \chi^{-1}(\xi_{A}\eta_{B}-\xi_{B}\eta_{A}), \\
 \delta^{A}{}_{B} ={}& -\chi^{-1}(\xi^{A}\eta_{B}-\eta^{A}\xi_{B}). \label{KroneckerSpinors}
\end{align}
These identities are used repeatedly throughout this work. For example, consider a spinor $\varphi_{AB}=\varphi_{(AB)}$,
then using the identity
\begin{equation}
 \varphi_{AB} = \delta^{C}{}_{A}\delta^{D}{}_{B}\varphi_{CD} 
\end{equation}
and replacing \eqref{KroneckerSpinors} one gets the expression \eqref{ExpansionInBasis}.

Any symmetric spinor can be decomposed into a symmetrized product of 1-index spinors, 
which are then called its {\em principal spinors}. 
This is a consequence of the fact that the field of complex numbers is algebraically closed, see Proposition (3.5.18) in \cite{PR1}.
We illustrate this with a simple example.
Using a basis $\xi^{A},\eta^{A}$, with $\xi_{A}\eta^{A}=\chi\neq 0$, let $\zeta^{A}=\xi^{A}+z \eta^{A}$, and consider 
$\varphi_{AB}=\varphi_{(AB)}$. Then 
\begin{equation}
 p(z) := \varphi_{AB}\zeta^{A}\zeta^{B} = \varphi_{0}+2\varphi_{1} z + \varphi_{2} z^{2} = (z-z_{+})(z-z_{-}),
\end{equation}
where we assume $\varphi_{2}\neq 0$, and the roots $z_{\pm}$ are
\begin{equation}
 z_{\pm} = \varphi^{-1}_{2}(-\varphi_{1} \pm (\varphi^{2}_{1}-\varphi_{0}\varphi_{2})^{1/2}).
\end{equation}
Then the principal spinors of $\varphi_{AB}$ are $\zeta^{\pm}_{A}=\xi_{A}+z_{\pm}\eta_{A}$, and
\begin{equation}\label{PrincipalSpinors}
 \varphi_{AB} = \chi^{-2} \varphi_{2} \zeta^{+}_{(A}\zeta^{-}_{B)}.
\end{equation}
It is important to note, however, that the principal spinors may be complex. 
This is of course always true in Riemannian and Lorentz signature, but in split signature, a 
symmetric spinor can be real while its principal spinors are complex.
This depends on whether the roots $z_{\pm}$ are real or complex, which in turn depends on the sign 
of $-\frac{1}{2}\varphi_{AB}\varphi^{AB}=\chi^{-2}(\varphi^{2}_1-\varphi_{0}\varphi_{2})$.
If $\varphi^{2}_1-\varphi_{0}\varphi_{2}\geq 0$, then $z_{\pm}$ are real and the principal spinors $\zeta^{\pm}_{A}$ are real. 
If $\varphi^{2}_1-\varphi_{0}\varphi_{2}< 0$, then $z_{\pm}$ are complex conjugates, and $\zeta^{+}_{A} = \overline{(\zeta^{-}_{A})}$.

\section{Further details on the covariant formalism}\label{Appendix:CovariantFormalism}

Here we give more details on the covariant formalism introduced in section \ref{sec:CovariantFormalism}.

Let $P_{\rm spin}$ denote the principal fibre bundle corresponding to the spin structure in the conformal manifold. 
Let $G$ be the associated structure group: $G={\rm Spin}(p,q)\times\mathbb{R}^{+}$, 
where ${\rm Spin}(p,q)$ is one of the three groups \eqref{spin13}--\eqref{spin22} (and $\mathbb{R}^{+}$ accounts 
for the conformal rescalings of the metric).
Let $V=(\otimes^{k}\mathbb{K}^{2})\otimes(\otimes^{k'}\bar{\mathbb{K}}^{2})\otimes(\otimes^{l}\mathbb{K}^{2*})
\otimes(\otimes^{l'}\bar{\mathbb{K}}^{2*})$, where $\mathbb{K}=\mathbb{C}$ for signature $(+---)$ and $(++++)$, 
and $\mathbb{K}=\mathbb{R}$ for $(++--)$ (in which case ``$\bar{\mathbb{R}}$'' is simply another copy of $\mathbb{R}$). 
The spinor bundles are the associated vector bundles
\begin{equation}\label{spinbundle}
 \mathbb{S}^{B...B'...}_{C...C'...} = P_{\rm spin}\times_{G} V
\end{equation}
where $\mathbb{S}^{B...B'...}_{C...C'...} $ has $k$ unprimed and $k'$ primed indices in the upper position, and 
$l$ unprimed and $l'$ primed indices in the lower position; and $\times_{G}$ denotes the ``natural'' representation of $G$ 
on $V$ \footnote{More precisely, the tensor product of the natural representations. Also, we are being a bit sloppy since $G$ 
also includes the conformal rescalings $\mathbb{R}^{+}$, so this representation is understood to have also some 
definite conformal weight. But this is not important here since it is accounted for anyway in \eqref{representationGo}.}.
For only one index, we also use the notation $\mathbb{S}=\mathbb{S}^{B}$, $\mathbb{S}'=\mathbb{S}^{B'}$, 
$\mathbb{S}^{*}=\mathbb{S}_{B}$, $\mathbb{S}'^{*}=\mathbb{S}_{B'}$.

A spinor field is a section of \eqref{spinbundle}.
But in this work we also need {\em weighted} spinor fields, as a consequence of the fact that we have the `gauge freedom' 
\eqref{gaugefreedomK} and \eqref{ConformalRescaling}. 
In section \ref{sec:CovariantFormalism} we mentioned that the gauge group associated to this freedom is $G_o$. 
We see that the transformation law \eqref{weightedfields} corresponds simply to the representation of $G_o$ on $V$ 
given by $\rho_{r,r';w}:G_o\to{\rm GL}(V)$, 
\begin{equation}\label{representationGo}
 [\rho_{r,r';w}(\lambda,\mu,\Omega)\varphi]^{B...B'...}_{C...C'...} = \lambda^{r}\mu^{r'}\Omega^{w}\varphi^{B...B'...}_{C...C'...}.
\end{equation}
The principal bundle $B$ over $M$ whose structure group is $G_o$ can be understood as 
a `reduction' of $P_{\rm spin}$ (in the sense that $G_o\subset G$), and, similarly to \eqref{spinbundle}, 
we now have the weighted spinor bundles
\begin{equation}\label{weightedbundles}
 \mathbb{S}^{B...B'...}_{C...C'...}(r,r';w) = B\times_{\rho_{r,r';w}} V.
\end{equation}
A spinor field of type $(r,r';w)$ (Definition \ref{Def:weightedsfields}) is a section of this bundle. 
For example, the spinor fields $\a_{A}$, $\beta_{A}$ associated to a given almost para-Hermitian structure $K$, 
and the scalar field $\phi=\a_{A}\beta^{A}$, are all sections of \eqref{weightedbundles}, for different values 
of $r,r',w$. We give the specific weights in table \ref{Table:weights}.
For convenience we mention that objects without indices, i.e. weighted {\em scalar} fields, are sections of the {\em line} bundles
\begin{equation}\label{weightedlinebundles}
 \mathbb{S}(r,r';w) = B\times_{\rho_{r,r';w}} V
\end{equation}
where $V=\mathbb{C}$ or $V=\mathbb{R}$.

The covariant derivative introduced in Definition \ref{Def:GHPconformalCV} is an operator on sections of \eqref{weightedbundles}, 
whose construction can be understood as follows. 
First, use the spinors $\alpha^{A},\beta^{A}$ associated to $K$ as a basis:
$\varepsilon^{A}_{\bf A}\equiv(\alpha^{A},\beta^{A})$ (where ${\bf A}=0,1$ is 
understood as a concrete index), with $\alpha_{A}\beta^{A}=\phi$. 
The dual frame is $\varepsilon^{\bf A}_{A}=(-\phi^{-1}\beta_{A},\phi^{-1}\alpha_{A})$.
The natural Weyl connection ${}^{\rm w}\nabla_{a}$ induced by $K$ defines a local 
connection 1-form by ${}^{\rm w}\nabla_{a}\varepsilon^{B}_{\bf B}= {}^{\rm w}\omega_{a{\bf B}}{}^{\bf C}\varepsilon^{B}_{\bf C}$, 
or equivalently 
\begin{equation}
  {}^{\rm w}\omega_{a{\bf B}}{}^{\bf C} = \varepsilon^{\bf C}_{B} \; {}^{\rm w}\nabla_{a}\varepsilon^{B}_{\bf B}.
\end{equation}
Under $G_{o}$, some parts of this object transform covariantly, and some other do not.
The components that transform covariantly are ${}^{\rm w}\omega_{a 0}{}^{1}$ and  ${}^{\rm w}\omega_{a 1}{}^{0}$.
For the rest, one can check that
\begin{align*}
 {}^{\rm w}\omega_{a 0}{}^{0} \xrightarrow{G_{o}}{}& {}^{\rm w}\omega_{a 0}{}^{0} +w_{0}\Omega^{-1}\partial_{a}\Omega 
 +\lambda^{-1}\partial_{a}\lambda, \\
 {}^{\rm w}\omega_{a 1}{}^{1} \xrightarrow{G_{o}}{}& {}^{\rm w}\omega_{a 1}{}^{1} +w_{1}\Omega^{-1}\partial_{a}\Omega 
 +\mu^{-1}\partial_{a}\mu.
\end{align*}
Therefore, defining the 1-forms
\begin{subequations}
\begin{align}
 L_{a}:={}& -{}^{\rm w}\omega_{a 0}{}^{0} - w_{0}f_{a}, \\
 M_{a}:={}& -{}^{\rm w}\omega_{a 1}{}^{1} - w_{1}f_{a},
\end{align}
\end{subequations}
and recalling the Lee-form $f_{a}$, we have
\begin{align*}
 f_{a} \xrightarrow{G_{o}}{}& f_{a} -\Omega^{-1}\partial_{a} \Omega, \\
 L_{a} \xrightarrow{G_{o}}{}& L_{a} -\lambda^{-1}\partial_{a}\lambda, \\
 M_{a} \xrightarrow{G_{o}}{}& M_{a} - \mu^{-1}\partial_{a}\mu,
\end{align*}
so $\psi_{a}:=(f_{a},L_{a},M_{a})$ can be thought of as the connection 1-form associated to the gauge symmetry $G_{o}$.
The explicit expression of $L_{a}$ and $M_{a}$ in terms of the spinors $\alpha_{A},\beta_{A}$ is given in 
formulas \eqref{connection1}, \eqref{connection2}. For $f_{a}$, a calculation shows the following:

\begin{proposition}
The Lee form \eqref{LeeForm} induced by an almost-para Hermitian structure $K$ is given by
\begin{equation}
 f_{AA'} = \phi^{-2}(\a_{A}\b_{C}\a^{B}\nabla_{BA'}\b^{C}+\b_{A}\a_{C}\b^{B}\nabla_{BA'}\a^{C}).
\end{equation}
where $\a_{A}$ and $\beta_{A}$ are the (projective) spinor fields representing $K$, with $\phi=\alpha_{A}\beta^{A}$.
\end{proposition}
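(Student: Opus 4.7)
The plan is to start from the rewriting of the Lee form displayed just below equation \eqref{LeeForm2}, which in four dimensions reads
\begin{equation*}
f_{a} = \tfrac{1}{2}\,K^{b}{}_{c}\,\nabla_{b}K^{c}{}_{a},
\end{equation*}
and then substitute into it the explicit spinor expression \eqref{KGR} for $K$, namely $K^{a}{}_{b} = \phi^{-1}(\a^{A}\b_{B}+\b^{A}\a_{B})\delta^{A'}{}_{B'}$. After this substitution the entire argument reduces to expanding by the Leibniz rule and simplifying with the basic contractions of the two-spinor frame $(\a_{A},\b_{A})$.

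Structurally, the primed part of $K^{b}{}_{c}\nabla_{b}K^{c}{}_{a}$ trivialises, since $\delta^{B'}{}_{C'}\delta^{C'}{}_{A'} = \delta^{B'}{}_{A'}$, so the unique free primed index ends up sitting on the derivative as $\nabla_{BA'}$. The unprimed part then splits into two blocks: the piece coming from $\nabla\phi^{-1} = -\phi^{-2}\nabla\phi$, for which I will further write $\nabla_{BA'}\phi = \b^{D}\nabla_{BA'}\a_{D} + \a_{D}\nabla_{BA'}\b^{D}$, and the piece coming from $\nabla$ acting directly on $\a^{C}\b_{A}+\b^{C}\a_{A}$. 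Both blocks carry an overall factor $\phi^{-2}$, matching the prefactor in the claim.

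The calculation is governed by a handful of algebraic identities: the null relations $\a^{C}\a_{C} = \b^{C}\b_{C} = 0$, the normalisations $\a_{C}\b^{C} = \phi = -\a^{C}\b_{C}$, and the metric-compatibility consequences $\b^{C}\nabla_{BA'}\a_{C} = -\b_{C}\nabla_{BA'}\a^{C}$ and $\a^{C}\nabla_{BA'}\b_{C} = -\a_{C}\nabla_{BA'}\b^{C}$, both of which follow from $\nabla\epsilon_{CD}=0$ together with the lowering convention \eqref{RaisingLowering}. The key intermediate contraction is
\begin{equation*}
(\a^{B}\b_{C}+\b^{B}\a_{C})(\a^{C}\b_{A}+\b^{C}\a_{A}) = \phi\,(\b^{B}\a_{A}-\a^{B}\b_{A}),
\end{equation*}
which controls the $\nabla\phi$ block, while the $\nabla$-on-spinors block is simplified via the auxiliary identities $(\a^{B}\b_{C}+\b^{B}\a_{C})\a^{C} = -\phi\,\a^{B}$ and $(\a^{B}\b_{C}+\b^{B}\a_{C})\b^{C} = \phi\,\b^{B}$. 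The decisive step is to verify that when the two blocks are added, the contractions of the form $\a^{B}\nabla_{BA'}\a^{C}$ and $\b^{B}\nabla_{BA'}\b^{C}$, i.e.\ the derivatives of each spinor along itself, cancel between the two blocks, leaving only the mixed terms $\a_{A}\b_{C}\a^{B}\nabla_{BA'}\b^{C}$ and $\b_{A}\a_{C}\b^{B}\nabla_{BA'}\a^{C}$ claimed in the proposition.

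The main obstacle is purely organisational: the initial expansion produces on the order of a dozen contractions, and careful tracking of the signs coming from $\a_{C}\b^{C} = \phi$ versus $\a^{C}\b_{C} = -\phi$ is essential, especially when applying the $\epsilon$-induced sign flips in the third paragraph above. As a final consistency check, the claimed formula is manifestly invariant under the interchange $\a \leftrightarrow \b$: this operation sends $\phi \to -\phi$ and $K\to -K$ but leaves $\omega^{bc}({\rm d}\omega)_{abc}$ (and hence $f_{a}$) unchanged, while the two summands on the right hand side of the claim swap into one another, which confirms both the relative sign of the two terms and the overall normalisation.
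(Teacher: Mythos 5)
Your proposal is correct and is essentially the calculation the paper leaves implicit (``a calculation shows''): starting from $\theta_{a}=\tfrac{1}{2}K^{b}{}_{c}\nabla_{b}K^{c}{}_{a}$, substituting \eqref{KGR}, and using exactly the contraction identities you list, the $\nabla\phi$ block and the $\nabla(\a^{C}\b_{A}+\b^{C}\a_{A})$ block combine to $2\phi^{-2}(\a_{A}\b_{C}\a^{B}\nabla_{BA'}\b^{C}+\b_{A}\a_{C}\b^{B}\nabla_{BA'}\a^{C})$, and the prefactor $\tfrac{1}{2}$ gives the claim. One small caveat: the cancellation between the two blocks is broader than you describe --- besides the self-directional terms $\a^{B}\b_{C}\nabla_{BA'}\a^{C}$ and $\b^{B}\a_{C}\nabla_{BA'}\b^{C}$, the terms $\b^{B}\b_{C}\nabla_{BA'}\a^{C}$ and $\a^{B}\a_{C}\nabla_{BA'}\b^{C}$ also cancel pairwise (note that $\a_{C}\nabla\a^{C}$ and $\b_{C}\nabla\b^{C}$ do \emph{not} vanish identically) --- but this only works in your favour and does not affect the result or the $\a\leftrightarrow\b$ symmetry check.
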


Now, consider an element $u=(x,y,z)\in{\rm Lie}(G_{o})$ in the Lie algebra of $G_{o}$.
The representation \eqref{representationGo} of $G_{o}$ induces a representation $\rho'_{r,r';w}$
of ${\rm Lie}(G_{o})$, given by
\begin{equation}
 \rho'_{r,r';w}(u) = rx+r'y+wz.
\end{equation}
Since the connection 1-form $\psi_{a}$ is valued in ${\rm Lie}(G_{o})$, we have 
$\rho'_{r,r';w}(\psi_{a})=rL_{a}+r'M_{a}+wf_{a}$.
Combining this with the Weyl connection ${}^{\rm w}\nabla_{a}$, the induced covariant derivative is
\begin{equation}\label{CVassociatedbundle}
 {}^{\rm w}\nabla_{a} + \rho'_{r,r';w}(\psi_{a}) =: \mathcal{C}_{a}
\end{equation}
which is simply \eqref{cvC}.
For convenience we give the explicit formula for \eqref{CVassociatedbundle} in terms of an arbitrary 
Levi-Civita connection $\nabla_{AA'}$, when acting on a spinor field $\varphi^{BB'}_{CC'}$ of type $(r,r';w)$:
\begin{align}
\nonumber \mathcal{C}_{AA'}\varphi^{BB'}_{CC'}
 ={}& \nabla_{AA'}\varphi^{BB'}_{CC'} +(wf_{AA'}+rL_{AA'}+r'M_{AA'})\varphi^{BB'}_{CC'} \\
& + \epsilon_{A}{}^{B}f_{A'X}\varphi^{XB'}_{CC'}+\epsilon_{A'}{}^{B'}f_{AX'}\varphi^{BX'}_{CC'} 
- f_{A'C}\varphi^{BB'}_{AC'} - f_{AC'}\varphi^{BB'}_{CA'}.
 \label{cvCexplicit}
\end{align}

\begin{proposition}
We have the general identities:
\begin{subequations}
\begin{align}
 \mathcal{C}_{a}g_{bc} ={}& 0 = \mathcal{C}_{a}\e_{BC}=\mathcal{C}_{a}\e_{B'C'}, \\
 \mathcal{C}_{AA'}\a^{B} ={}& \phi^{-2}(\a^{C}\a^{D}\nabla_{CA'}\a_{D})\b_{A}\b^{B}, \label{Calpha} \\
 \mathcal{C}_{AA'}\b^{B} ={}& \phi^{-2}(\b^{C}\b^{D}\nabla_{CA'}\b_{D})\a_{A}\a^{B}, \label{Cbeta} \\
 \mathcal{C}_{AA'}\phi ={}& 0. \label{Cchi}
\end{align}
\end{subequations}
\end{proposition}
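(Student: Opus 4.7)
The plan is to verify all four identities by direct calculation from the explicit formula \eqref{cvCexplicit} for $\mathcal{C}_{AA'}$, using the definitions \eqref{connection1}--\eqref{connection2} of $L_{AA'}$ and $M_{AA'}$ together with the defining property of the Weyl connection. I would check the identities in order of increasing difficulty. First I would dispatch the compatibility identities $\mathcal{C}_a g_{bc}=0$ and $\mathcal{C}_a\epsilon_{BC}=\mathcal{C}_a\epsilon_{B'C'}=0$: the metric has conformal type $(0,0;2)$ and by definition ${}^{\rm w}\nabla_a g_{bc}=-2 f_a g_{bc}$, so the $2f_{AA'}g_{bc}$ contribution from the weight term in \eqref{cvCexplicit} precisely cancels the Weyl-connection anomaly. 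For the symplectic spinors, each has conformal weight $1$ by \eqref{ConformalRescaling}, and the induced spinorial Weyl connection satisfies ${}^{\rm w}\nabla_a\epsilon_{BC}=-f_a\epsilon_{BC}$ (forced by the factorization $g_{ab}=\epsilon_{AB}\epsilon_{A'B'}$); the weight term combined with the $\epsilon_A{}^B f_{A'X}$-type spinor-index corrections in \eqref{cvCexplicit}, which are built exactly for this purpose, yield $\mathcal{C}_a\epsilon_{BC}=0$, and likewise for the primed version.

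The core of the proof is \eqref{Calpha}, from which \eqref{Cbeta} follows by the symmetric argument under the exchange $\alpha\leftrightarrow\beta$, $L\leftrightarrow M$. I would apply \eqref{cvCexplicit} to $\varphi^B=\alpha^B$, of type $(1,0;w_0)$, and substitute \eqref{connection1}; the $-w_0 f_{AA'}$ piece of $L_{AA'}$ then cancels the $w_0 f_{AA'}$ weight contribution. Next, I would expand $\nabla_{AA'}\alpha^B$ in the basis $(\alpha^B,\beta^B)$ via the Kronecker-delta identity \eqref{KroneckerSpinors} with $\xi=\alpha$, $\eta=\beta$ (so $\chi=\phi$). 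The key point is that the $\alpha^B$-component of $\mathcal{C}_{AA'}\alpha^B$ cancels identically thanks to the $\phi^{-1}\beta_B\nabla_{AA'}\alpha^B$ piece of $L_{AA'}$: this cancellation is the \emph{design principle} of $L_{AA'}$. Only a $\beta^B$-component survives, and after combining the spinor-index $f$-correction in \eqref{cvCexplicit} with the $\phi^{-1}\beta_A f_{A'B}\alpha^B$ piece of $L_{AA'}$, and using $\alpha_B\alpha^B=0$ to rewrite $\alpha_C\nabla_{AA'}\alpha^C=-\alpha^C\nabla_{AA'}\alpha_C$, the coefficient simplifies to precisely $\phi^{-2}(\alpha^C\alpha^D\nabla_{CA'}\alpha_D)\beta_A$, completing \eqref{Calpha}.

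Finally, for $\mathcal{C}_{AA'}\phi=0$: since $\phi=\alpha_B\beta^B$ has type $(1,1;w_0+w_1+1)$, the scalar version of \eqref{cvC} yields
\begin{equation*}
 \mathcal{C}_{AA'}\phi=\nabla_{AA'}\phi+[(w_0+w_1+1)f_{AA'}+L_{AA'}+M_{AA'}]\phi.
\end{equation*}
Adding $L_{AA'}+M_{AA'}$, the $f$-dependent pieces collapse via the Kronecker identity $\alpha_A\beta^B-\beta_A\alpha^B=\phi\,\delta_A^B$, and the expression reduces to $\nabla_{AA'}\phi+\beta_B\nabla_{AA'}\alpha^B-\alpha_B\nabla_{AA'}\beta^B$. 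Expanding $\nabla_{AA'}\phi$ by Leibniz and using the $\epsilon$-compatibility of $\nabla$ to write $\beta^B\nabla_{AA'}\alpha_B=-\beta_B\nabla_{AA'}\alpha^B$ produces exact cancellation. Alternatively, one may deduce it from \eqref{Calpha}--\eqref{Cbeta} plus the Leibniz rule and $\mathcal{C}_a\epsilon^{BC}=0$, since then $\mathcal{C}_{AA'}\alpha_B\propto\beta_B$ and $\mathcal{C}_{AA'}\beta^B\propto\alpha^B$, and both contracted contributions vanish.

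No substantive obstacle arises: the entire proof is index bookkeeping. The only thing that is easy to trip over is sign conventions, so I would fix them at the outset, namely $\chi=\epsilon^{AB}\alpha_A\beta_B=\alpha_A\beta^A=\phi$ (while $\alpha^A\beta_A=-\phi$), the precise meaning of the spinorial Kronecker object $\epsilon_A{}^B$ as it enters \eqref{cvCexplicit}, and the $L\leftrightarrow M$ exchange used in passing from \eqref{Calpha} to \eqref{Cbeta}.
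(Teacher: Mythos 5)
Your overall strategy --- direct evaluation of \eqref{cvCexplicit} term by term --- is sound (the paper states this proposition without proof, so there is nothing to compare it against), and your treatment of the first identity and of $\mathcal{C}_{AA'}\phi=0$ is correct: in both of those cases the $f$-dependence cancels purely structurally, via the two-dimensional identity $\e_{[AB}f_{|A'|C]}=0$ and the Kronecker decomposition $\delta_{A}{}^{B}=-\phi^{-1}(\a^{B}\b_{A}-\b^{B}\a_{A})$ respectively, exactly as you describe.

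The gap is in \eqref{Calpha} (and hence in \eqref{Cbeta}). The mechanisms you list do not eliminate all the $f$-dependence. Combining the weight term, the $-w_{0}f_{AA'}$ inside $L_{AA'}$, the $\phi^{-1}\b_{A}f_{A'C}\a^{C}$ piece of $L_{AA'}$ and the index correction $\e_{A}{}^{B}f_{A'X}\a^{X}$ leaves the residue
\begin{equation*}
 \phi^{-1}\a_{A}\b^{B}\,(f_{A'C}\a^{C}),
\end{equation*}
while the $\nabla$-part, after the $\a^{B}$-component cancels as you say, is $\phi^{-1}\b^{B}\a_{C}\nabla_{AA'}\a^{C}$, which still carries a free index $A$ and is not yet proportional to $\b_{A}\b^{B}$. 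You must also decompose the free index $A$ in the basis $(\a_{A},\b_{A})$: its $\b_{A}$-component reproduces the right-hand side of \eqref{Calpha}, but its $\a_{A}$-component, $\phi^{-2}\a_{A}\b^{B}\,\b^{D}\a_{C}\nabla_{DA'}\a^{C}$, survives and is killed by the residue above only if
\begin{equation*}
 f_{A'C}\a^{C} = -\phi^{-1}\a_{C}\b^{D}\nabla_{DA'}\a^{C}.
\end{equation*}
This is not a consequence of the ``defining property of the Weyl connection'' (which constrains ${}^{\rm w}\nabla g$ but does not determine $f$); it requires the explicit spinor expression for the Lee form induced by $K$, namely $f_{AA'}=\phi^{-2}(\a_{A}\b_{C}\a^{B}\nabla_{BA'}\b^{C}+\b_{A}\a_{C}\b^{B}\nabla_{BA'}\a^{C})$, which is the proposition immediately preceding the one you are proving. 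Indeed \eqref{Calpha} is simply false for a generic Weyl 1-form, so no amount of index bookkeeping with an unspecified $f$ can close the argument. Once that expression is inserted the computation does close, so the gap is local and repairable, but it is a missing ingredient rather than a sign convention.
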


\begin{remark}
Recalling the definition \eqref{CProjections} of $\mathcal{C}_{A'}, \tilde{\mathcal{C}}_{A'}$, 
we deduce from \eqref{Calpha} and \eqref{Cbeta} the general identities
\begin{equation}
 \mathcal{C}_{A'}\alpha^{B} = 0, \qquad \tilde{\mathcal{C}}_{A'}\beta^{B} = 0. \label{parallelspinors}
\end{equation}
\end{remark}

\section{Curvature of $\mathcal{C}_{a}$}\label{Appendix:curvature}

In this appendix we describe some properties of the curvature of the connection \eqref{CVassociatedbundle}, 
that we need in the main text. 
The curvature of $\mathcal{C}_{a}$ is defined by the commutator $[\mathcal{C}_{a},\mathcal{C}_{b}]$. 
Since this is skew-symmetric in $ab$, the decomposition \eqref{SDdecomposition} implies that $[\mathcal{C}_{a},\mathcal{C}_{b}]$ 
splits into SD and ASD pieces; this will be applied below.

Let $v^{AA'}$ be a vector field with weights $(r,r';w)$. Then by definition
\begin{equation}
 \mathcal{C}_{b}v^{DD'} = \nabla_{b}v^{DD'} +\psi_{b}v^{DD'}+W_{bC}{}^{D}v^{CD'}+\tilde{W}_{bC'}{}^{D'}v^{C'D},
\end{equation}
where for convenience we put
\begin{align}
 \psi_{b} ={}& w f_{b} + r L_{b} + r' M_{b}, \label{internalconnection} \\
 W_{bC}{}^{D} ={}& f_{B'C}\epsilon_{B}{}^{D}, \\
 \tilde{W}_{bC'}{}^{D'} ={}& f_{BC'}\epsilon_{B'}{}^{D'}.
\end{align}
Applying another covariant derivative $\mathcal{C}_{a}$ and taking the commutator, after a short calculation we find
\begin{equation}\label{curvature1}
[\mathcal{C}_{a},\mathcal{C}_{b}]v^{DD'} = [\nabla_{a},\nabla_{b}]v^{DD'} + (2\nabla_{[a}\psi_{b]})v^{DD'} 
 + F_{abC}{}^{D}v^{CD'} + \tilde{F}_{abC'}{}^{D'}v^{C'D}
\end{equation}
where 
\begin{align}
 F_{abC}{}^{D} :={}& 2\nabla_{[a}W_{b]C}{}^{D} + 2W_{[a|E|}{}^{D}W_{b]C}{}^{E}, \\
 \tilde{F}_{abC'}{}^{D'} :={}& 2\nabla_{[a}\tilde{W}_{b]C'}{}^{D'} + 2\tilde{W}_{[a|E'|}{}^{D'}\tilde{W}_{b]C'}{}^{E'}.
\end{align}
We can see three contributions to \eqref{curvature1}: the curvature of the Levi-Civita connection, 
the curvature of the ``internal'' connection $\psi_{b}$, and the curvature of the Weyl connection, which 
is encoded in the two pieces $F_{abC}{}^{D}$ and $\tilde{F}_{abC'}{}^{D'}$. 

Now, a straightforward computation shows that
\begin{equation}\label{decompositionCurvatureC}
 [\mathcal{C}_{a},\mathcal{C}_{b}] = \epsilon_{AB}\mathcal{C}_{C(A'}\mathcal{C}_{B')}{}^{C}
 + \epsilon_{A'B'}\mathcal{C}_{C'(A}\mathcal{C}_{B)}{}^{C'},
\end{equation}
and similarly, all the pieces in the right hand side of \eqref{curvature1} decompose into SD and ASD pieces.
Thus the SD and ASD pieces in \eqref{curvature1} can be simply obtained as $\frac{1}{2}\epsilon^{AB}[\mathcal{C}_{a},\mathcal{C}_{b}]$
and $\frac{1}{2}\epsilon^{A'B'}[\mathcal{C}_{a},\mathcal{C}_{b}]$. We find:
\begin{align}
 \mathcal{C}_{A'(A}\mathcal{C}_{B)}{}^{A'}v^{DD'} ={}& \Box_{AB}v^{DD'} + (\nabla_{A'(A}\psi^{A'}_{B)})v^{DD'} 
 +F_{ABC}{}^{D}v^{CD'} + \tilde{G}_{ABC'}{}^{D'}v^{C'D}, \label{ASDcurvatureC} \\
 \mathcal{C}_{A(A'}\mathcal{C}_{B')}{}^{A}v^{DD'} ={}& \Box_{A'B'}v^{DD'} + (\nabla_{A(A'}\psi^{A}_{B')})v^{DD'} 
 +\tilde{F}_{A'B'C'}{}^{D'}v^{C'D} + G_{A'B'C}{}^{D}v^{CD'}, \label{SDcurvatureC}
\end{align}
where $\Box_{AB}$ and $\Box_{A'B'}$ are ordinary spinor curvature operators, see \cite[Section 4.9]{PR1},
and we defined
\begin{align}
 F_{ABC}{}^{D} :={}& \epsilon_{(A}{}^{D}\left[ \nabla_{B)C'}f_{C}{}^{C'}+f_{B)C'}f_{C}{}^{C'} \right], \label{FASD} \\
 G_{A'B'C}{}^{D} :={}& -\nabla_{(A'}{}^{D}f_{B')C} + f_{(A'}{}^{D}f_{B')C}, \label{FSD} \\
 \tilde{F}_{A'B'C'}{}^{D'} :={}& \epsilon_{(A'}{}^{D'}\left[ \nabla_{B')C}f_{C'}{}^{C}+f_{B')C}f_{C'}{}^{C} \right], \label{FtildeSD} \\
 \tilde{G}_{ABC'}{}^{D'} :={}& -\nabla_{(A}{}^{D'}f_{B)C'} + f_{(A}{}^{D'}f_{B)C'}. \label{FtildeASD}
\end{align}
The spinors $G,F$ are respectively the SD and ASD pieces of $F_{abC}{}^{D}$, and 
$\tilde{G},\tilde{F}$ those of $\tilde{F}_{abC'}{}^{D'}$.
Likewise, $\nabla_{A(A'}\psi^{A}_{B')}$ and $\nabla_{A'(A}\psi^{A'}_{B)}$ are the SD and ASD pieces 
of $2\nabla_{[a}\psi_{b]}$.

\begin{remark}
For weighted {\em scalars} $\Phi$, formulas \eqref{ASDcurvatureC}-\eqref{SDcurvatureC} are much simpler:
\begin{equation}\label{CCscalars}
 \mathcal{C}_{A'(A}\mathcal{C}_{B)}{}^{A'}\Phi = (\nabla_{A'(A}\psi^{A'}_{B)})\Phi, 
 \qquad \mathcal{C}_{A(A'}\mathcal{C}_{B')}{}^{A}\Phi = (\nabla_{A(A'}\psi^{A}_{B')})\Phi.
\end{equation}
\end{remark}

Using \eqref{internalconnection} and the expressions \eqref{connection1}-\eqref{connection2} for $L_{a},M_{a}$, we note that 
\begin{equation}
 \psi_{a} = (w-r w_{0}-r'w_{1})f_{a} + r'\psi^{(\beta)}_{a} + r\psi^{(\alpha)}_{a},
\end{equation}
where 
\begin{align}
 \psi^{(\beta)}_{a} ={}& -\phi^{-1}\alpha_{B}\nabla_{AA'}\beta^{B}-\phi^{-1}\alpha_{A}f_{A'B}\beta^{B}, \\
 \psi^{(\alpha)}_{a} ={}& \phi^{-1}\beta_{B}\nabla_{AA'}\alpha^{B}+\phi^{-1}\beta_{A}f_{A'B}\alpha^{B}.
\end{align}
Note also that $\psi^{(\alpha)}_{a}$ can be rewritten as $\psi^{(\alpha)}_{a}=-\psi^{(\beta)}_{a}-\nabla_{a}\log\phi-f_{a}$; 
therefore, the curvature of the ``internal'' connection can be computed in terms of the curvatures of $f_{a}$ and $\psi^{(\beta)}_{a}$:
\begin{equation}
 \nabla_{[a}\psi_{b]} = (w-r-r w_{0}-r'w_{1})\nabla_{[a}f_{b]}+(r'-r)\nabla_{[a}\psi^{(\beta)}_{b]}.
\end{equation}
For later use, we note in particular the following:
\begin{equation}
 \beta^{A}\beta^{B}\nabla_{A'(A}\psi_{B)}^{A'} = (w-r-r w_{0}-r'w_{1})\beta^{A}\beta^{B}\nabla_{A'(A}f_{B)}^{A'}
 +(r'-r)\beta^{A}\beta^{B}\nabla_{A'(A}\psi_{B)}^{(\beta)A'}. \label{bbdpsi}
\end{equation}

\subsection*{Some identities when $\beta^{A}$ is shear-free}

\begin{proposition}
Suppose that $\beta^{A}$ is shear-free, then we have the following identities:
\begin{align}
 & \Phi_{A'B'CD}\beta^{C}\beta^{D} +G_{A'B'CD} \beta^{C}\beta^{D} = 0, \label{identityPhi} \\
 & \Psi_{ABCD}\beta^{C}\beta^{D}+2\Lambda\beta_{A}\beta_{B} + F_{ABCD}\beta^{C}\beta^{D}=0. \label{identityPsi}
\end{align}
\end{proposition}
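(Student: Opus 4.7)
My plan is to exploit the powerful simplification that the shear-free condition gives when combined with the covariant derivative $\mathcal{C}_{AA'}$. The key starting observation is that, by identity \eqref{Cbeta},
\[
\mathcal{C}_{AA'}\beta^{B} = \phi^{-2}(\beta^{C}\beta^{D}\nabla_{CA'}\beta_{D})\alpha_{A}\alpha^{B},
\]
so the shear-free assumption $\beta^{C}\beta^{D}\nabla_{CA'}\beta_{D}=0$ implies the much stronger statement $\mathcal{C}_{AA'}\beta^{B}\equiv 0$. Consequently the full commutator $[\mathcal{C}_{a},\mathcal{C}_{b}]\beta^{D}$ vanishes, and using the decomposition \eqref{decompositionCurvatureC} I obtain separately
\[
\mathcal{C}_{A'(A}\mathcal{C}_{B)}{}^{A'}\beta^{D} = 0, \qquad \mathcal{C}_{A(A'}\mathcal{C}_{B')}{}^{A}\beta^{D} = 0.
\]

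Next I evaluate these two vanishing expressions using the spinor analogues of \eqref{ASDcurvatureC}--\eqref{SDcurvatureC}. Since $\beta^{D}$ carries only an unprimed index, the $\tilde{G}$ and $\tilde{F}$ pieces do not contribute, and I get
\[
\Box_{AB}\beta^{D} + (\nabla_{A'(A}\psi^{A'}_{B)})\beta^{D} + F_{ABC}{}^{D}\beta^{C} = 0,
\]
\[
\Box_{A'B'}\beta^{D} + (\nabla_{A(A'}\psi^{A}_{B')})\beta^{D} + G_{A'B'C}{}^{D}\beta^{C} = 0.
\]
Now I contract each equation with $\beta_{D}$. The middle (internal-curvature) terms drop out because $\beta^{D}\beta_{D}=0$, which is the decisive simplification. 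For the curvature operators I use the standard identities (see \cite[Section 4.9]{PR1})
\[
\Box_{AB}\kappa_{C} = \Psi_{ABCD}\kappa^{D} - 2\Lambda\,\epsilon_{C(A}\kappa_{B)}, \qquad \Box_{A'B'}\kappa_{C} = \Phi_{A'B'CD}\kappa^{D},
\]
applied after lowering the $D$ index on $\beta^{D}$, which yields the Weyl and Ricci contractions $\Psi_{ABCD}\beta^{C}\beta^{D}+2\Lambda\beta_{A}\beta_{B}$ and $\Phi_{A'B'CD}\beta^{C}\beta^{D}$ respectively.

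The remaining pieces $F_{ABC}{}^{D}\beta^{C}\beta_{D}$ and $G_{A'B'C}{}^{D}\beta^{C}\beta_{D}$ are simply the $F_{ABCD}\beta^{C}\beta^{D}$ and $G_{A'B'CD}\beta^{C}\beta^{D}$ appearing in \eqref{identityPsi}, \eqref{identityPhi}. Combining all terms gives the two asserted identities. The main obstacle is bookkeeping: one has to carry through the correct signs in raising and lowering with $\epsilon_{AB}$, and verify that the symmetrizations in \eqref{FASD} and \eqref{FSD} produce precisely the $F_{ABCD}\beta^{C}\beta^{D}$ and $G_{A'B'CD}\beta^{C}\beta^{D}$ expressions after contraction with $\beta_{D}$; but these are routine manipulations with no conceptual difficulty once $\mathcal{C}_{AA'}\beta^{B}=0$ has been established.
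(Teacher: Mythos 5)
Your proposal is correct and follows essentially the same route as the paper's (one-line) proof: apply the curvature formulas \eqref{ASDcurvatureC}--\eqref{SDcurvatureC} to $\beta^{D}$, use that the shear-free condition together with \eqref{Cbeta} gives $\mathcal{C}_{a}\beta^{B}=0$ so the left-hand sides vanish, and contract with $\beta_{D}$ so that the internal-connection term drops out by $\beta^{D}\beta_{D}=0$ and the $\Box_{AB}$, $\Box_{A'B'}$ terms produce the $\Psi$, $\Lambda$ and $\Phi$ contractions. Nothing is missing; your extra care about the sign conventions in the spinor Ricci identities is the only bookkeeping left, exactly as you say.
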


\begin{proof}
These identities follow from \eqref{SDcurvatureC} and \eqref{ASDcurvatureC} 
by applying them to $\beta^{D}$, using $\mathcal{C}_{a}\beta^{B}=0$, and contracting with $\beta_{D}$.
\end{proof}

\begin{proposition}
If $\beta^{A}$ is shear-free, we have:
\begin{align}
 & \beta^{A}\beta^{B}\Phi_{ABC'D'}+\beta^{A}\beta^{B}\tilde{G}_{ABC'D'} 
 = \tfrac{1}{2}\epsilon_{C'D'}\beta^{A}\beta^{B}\nabla_{A'(A}f_{B)}^{A'}, \label{identityPhiGtilde} \\
 & \Psi_{ABCD}\beta^{B}\beta^{C}\beta^{D} = \tfrac{1}{2}\beta_{A}\beta^{B}\beta^{C}\nabla_{B'(B}f_{C)}^{B'}
 = -\tfrac{1}{3}\beta_{A}\beta^{B}\beta^{C}\nabla_{B'(B}\psi_{C)}^{(\beta)B'}. \label{identityPsibbb}
\end{align}
\end{proposition}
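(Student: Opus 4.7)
The plan is to derive both identities by applying the ASD curvature operator \eqref{ASDcurvatureC} to $\beta^D$, exploiting the fact that $\mathcal{C}_{AA'}\beta^B\equiv 0$ for shear-free $\beta$ (which follows at once from \eqref{Cbeta} and the vanishing of $\beta^A\beta^B\nabla_{AA'}\beta_B$). The previous proposition already used this to derive \eqref{identityPsi} by contracting the curvature equation with $\beta_D$; here I will obtain new information by contracting it differently.

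For the first equality of \eqref{identityPsibbb}, contract \eqref{identityPsi} with $\beta^B$. The cosmological term $2\Lambda\beta_A\beta_B\beta^B$ dies by $\beta^B\beta_B=0$, leaving $\Psi_{ABCD}\beta^B\beta^C\beta^D=-F_{ABCD}\beta^B\beta^C\beta^D$. Inserting $F_{ABC}{}^D=\epsilon_{(A}{}^D T_{B)C}$ with $T_{BC}=\nabla_{BC'}f_C{}^{C'}+f_{BC'}f_C{}^{C'}$ from \eqref{FASD} and using $\epsilon_{AD}\beta^D\propto\beta_A$ together with $\beta^B\beta_B=0$, one of the two symmetrised pieces of $F_{ABCD}\beta^B\beta^C\beta^D$ vanishes, and the survivor yields $\beta_A$ times $\beta^B\beta^C T_{BC}$. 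The crucial algebraic simplification is that the quadratic $ff$ piece vanishes: writing $\lambda_{A'}:=\beta^B f_{BA'}$, one has $\beta^B\beta^C f_{BA'}f_C{}^{A'}=\lambda_{A'}\lambda^{A'}=0$. Only the $\nabla f$ piece survives, giving the first equality after identifying it with $\beta^B\beta^C\nabla_{B'(B}f_{C)}^{B'}$ (the $(BC)$-symmetrisation being automatic under contraction with $\beta^B\beta^C$).

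For the second equality, I return to the full curvature equation $\mathcal{C}_{A'(A}\mathcal{C}_{B)}{}^{A'}\beta^D=0$, but contract with $\beta^A\beta^B$ rather than with $\beta_D$. For the weights $(0,1;w_1)$ of $\beta^D$, formula \eqref{bbdpsi} collapses the general $\beta^A\beta^B\nabla_{A'(A}\psi^{A'}_{B)}$ down to $\beta^A\beta^B\nabla_{A'(A}\psi^{(\beta)A'}_{B)}$, since the $f$-coefficient $w-r-rw_0-r'w_1$ is exactly zero. After lowering the free $D$-index with an $\epsilon$, using the same $\lambda_{A'}\lambda^{A'}=0$ trick on $\beta^A\beta^B\beta^C F_{ABC}{}^D$, and discarding the $\Lambda$ piece (once more killed by $\beta^A\beta_A=0$), the resulting scalar identity is a linear relation among $\Psi_{ABCD}\beta^B\beta^C\beta^D$, $\beta_A\beta^B\beta^C\nabla_{B'(B}\psi^{(\beta)B'}_{C)}$ and $\beta_A\beta^B\beta^C\nabla_{B'(B}f_{C)}^{B'}$. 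Substituting the first equality to eliminate $\nabla f$ in favour of $\Psi\beta\beta\beta$ then pins the coefficient $-\tfrac{1}{3}$ appearing in the second equality.

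For \eqref{identityPhiGtilde}, the strategy is parallel but targets the primed-index content. Apply the ASD operator \eqref{ASDcurvatureC} to a weighted primed spinor $\mu^{D'}$: the $\Box_{AB}$ term now produces $\Phi_{AB}{}^{D'}{}_{C'}\mu^{C'}$, and the $\tilde{G}$ piece appears explicitly. Contracting with $\beta^A\beta^B$ and commuting the $\beta$-factors through the derivatives via $\mathcal{C}_{AA'}\beta^B=0$, the left-hand side becomes $\mathcal{C}_{A'}\mathcal{C}^{A'}\mu^{D'}=\tfrac{1}{2}\epsilon^{A'B'}[\mathcal{C}_{A'},\mathcal{C}_{B'}]\mu^{D'}$; matching this against the structure from \eqref{ASDcurvatureC} and then tracing the antisymmetric part in $C'D'$ (i.e.\ contracting with $\epsilon^{C'D'}$, which annihilates the $C'D'$-symmetric $\Phi$ but not $\tilde{G}_{AB[C'D']}$) produces the $\epsilon_{C'D'}$ factor on the right, while the $ff$ contribution to $\tilde{G}$ again cancels by $\lambda_{A'}\lambda^{A'}=0$. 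The main obstacle throughout will be keeping track of sign conventions and exact normalisations (signs of $\Box_{AB}\xi^C$, the value of $\epsilon^{AB}\epsilon_{AB}$, the placement of epsilons in $F$ and $\tilde G$), so that the coefficients $\tfrac12$ and $-\tfrac13$ land correctly; the algebraic lever that makes every calculation collapse to something manageable is the single identity $\beta^B\beta^C f_{BA'}f_C{}^{A'}=0$.
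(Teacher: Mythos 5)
Your treatment of \eqref{identityPsibbb} is correct and follows the paper's own route essentially verbatim: contract \eqref{identityPsi} with $\beta^{B}$, kill the quadratic term via $\lambda_{A'}\lambda^{A'}=0$ with $\lambda_{A'}=\beta^{B}f_{BA'}$ to get the first equality, then apply \eqref{ASDcurvatureC} to $\beta^{D}$, contract with $\beta^{A}\beta^{B}$ (noting that for the weights of $\beta^{D}$ the $f$-coefficient in \eqref{bbdpsi} vanishes), and combine the two evaluations of $F\beta\beta\beta$ to pin down the $-\tfrac{1}{3}$.

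The proof of \eqref{identityPhiGtilde}, however, has a genuine gap. Your route is to apply \eqref{ASDcurvatureC} to a weighted primed spinor $\mu^{D'}$ and contract with $\beta^{A}\beta^{B}$; but the resulting identity has $\mathcal{C}_{A'}\mathcal{C}^{A'}\mu^{D'}$ on one side, and this quantity is not independently known --- it is precisely what Lemma \ref{Lemma:CC0} computes \emph{using} \eqref{identityPhiGtilde} as input. A single equation with two unknowns cannot determine $\beta^{A}\beta^{B}(\Phi_{ABC'D'}+\tilde{G}_{ABC'D'})$, so the argument is circular. Moreover, contracting with $\epsilon^{C'D'}$ only recovers the skew part in $C'D'$; the substantive half of \eqref{identityPhiGtilde} is that the $C'D'$-\emph{symmetric} part of $\beta^{A}\beta^{B}(\Phi_{ABC'D'}+\tilde{G}_{ABC'D'})$ vanishes, and your trace argument says nothing about it. Finally, the claimed cancellation of the $ff$ piece of $\tilde{G}$ by $\lambda_{A'}\lambda^{A'}=0$ is not right: from \eqref{FtildeASD}, $\beta^{A}\beta^{B}f_{(A}{}^{D'}f_{B)C'}=\lambda^{D'}\lambda_{C'}$, which is a nonzero rank-one tensor; only its trace vanishes. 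The missing ingredient is the companion identity \eqref{identityPhi} from the preceding proposition (obtained from the \emph{self-dual} curvature operator \eqref{SDcurvatureC} applied to $\beta^{D}$ and contracted with $\beta_{D}$), which lets you replace $\beta^{A}\beta^{B}\Phi_{ABC'D'}$ by $-\beta^{A}\beta^{B}G_{C'D'AB}$; the $ff$ pieces of $-G$ and $\tilde{G}$ then cancel against each other, and the surviving derivative terms combine into the antisymmetrized expression $\tfrac{1}{2}\epsilon_{C'D'}\beta^{A}\beta^{B}\nabla_{A'(A}f_{B)}^{A'}$.
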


\begin{proof}
To prove \eqref{identityPhiGtilde}, we use \eqref{identityPhi}:
\begin{align}
\nonumber \beta^{A}\beta^{B}\Phi_{ABC'D'}+\beta^{A}\beta^{B}\tilde{G}_{ABC'D'}  ={}& 
 -\beta^{A}\beta^{B}G_{C'D'AB} + \beta^{A}\beta^{B}\tilde{G}_{ABC'D'} \\
\nonumber ={}& \beta^{A}\beta^{B}\nabla_{A(D'}f_{C')A} - \beta^{A}\beta^{B}\nabla_{AD'}f_{C'A} \\
\nonumber ={}& -\beta^{A}\beta^{B}\nabla_{A[D'}f_{C']A} \\
 ={}& \tfrac{1}{2}\epsilon_{C'D'}\beta^{A}\beta^{B}\nabla_{A'(A}f_{B)}^{A'}.
\end{align}

Now let us prove \eqref{identityPsibbb}. Contracting \eqref{identityPsi} with $\beta^{B}$, we find
\begin{equation}
 \Psi_{ABCD}\beta^{B}\beta^{C}\beta^{D}+F_{ABCD}\beta^{B}\beta^{C}\beta^{D}=0. \label{idPsibbbaux}
\end{equation}
The second term is
\begin{equation*}
 F_{ABCD}\beta^{B}\beta^{C}\beta^{D} 
 = \epsilon_{(A|D|}\left[ \nabla_{B)C'}f_{C}{}^{C'}+f_{B)C'}f_{C}{}^{C'} \right]\beta^{B}\beta^{C}\beta^{D}
 =-\tfrac{1}{2}(\beta^{B}\beta^{C}\nabla_{B'(B}f_{C)}^{B'})\beta_{A}
\end{equation*}
which establishes the first equality in \eqref{identityPsibbb}. 
Now, applying \eqref{ASDcurvatureC} to $\beta^{D}$ and contracting with $\beta^{A}\beta^{B}$, we get
\begin{equation*}
 0=\Psi_{ABCD}\beta^{A}\beta^{B}\beta^{C}+(\beta^{B}\beta^{C}\nabla_{B'(B}\psi_{C)}^{(\beta)B'})\beta_{D}
 +F_{ABCD}\beta^{A}\beta^{B}\beta^{C}.
\end{equation*}
From \eqref{idPsibbbaux} we deduce that 
$\Psi_{ABCD}\beta^{A}\beta^{B}\beta^{C}=-F_{DABC}\beta^{A}\beta^{B}\beta^{C}$, 
where we used $\Psi_{ABCD}=\Psi_{(ABCD)}$. Thus,
\begin{equation}
 0 = (\beta^{B}\beta^{C}\nabla_{B'(B}\psi_{C)}^{(\beta)B'})\beta_{D} -F_{DABC}\beta^{A}\beta^{B}\beta^{C} 
 +F_{ABCD}\beta^{A}\beta^{B}\beta^{C}.
\end{equation}
The last term in the right hand side is
\begin{equation*}
 F_{ABCD}\beta^{A}\beta^{B}\beta^{C} 
 = \epsilon_{(A|D|}\left[ \nabla_{B)C'}f_{C}{}^{C'}+f_{B)C'}f_{C}{}^{C'} \right]\beta^{A}\beta^{B}\beta^{C} 
 = (\beta^{B}\beta^{C}\nabla_{B'(B}f_{C)}^{B'})\beta_{D},
\end{equation*}
which shows that $F_{ABCD}\beta^{A}\beta^{B}\beta^{C} = -2 F_{DABC}\beta^{A}\beta^{B}\beta^{C}$. 
Therefore
\begin{equation*}
 (\beta^{B}\beta^{C}\nabla_{B'(B}\psi_{C)}^{(\beta)B'})\beta_{D} = 3F_{DABC}\beta^{A}\beta^{B}\beta^{C}
 = -3\Psi_{ABCD}\beta^{A}\beta^{B}\beta^{C},
\end{equation*}
and the last equality in \eqref{identityPsibbb} follows.
\end{proof}

\begin{remark}
Using \eqref{identityPsibbb}, when $\beta^{A}$ is shear-free we can deduce a useful identity for \eqref{bbdpsi}, 
for arbitrary weights $w,r,r'$ in \eqref{internalconnection}:
\begin{equation}\label{bbdpsi2}
\beta^{A}\beta^{B}\nabla_{A'(A}\psi_{B)}^{A'}  =  -2\phi^{-1}(w-r-\tfrac{3}{2}(r'-r)-rw_{0}-r'w_{1})
\Psi_{ABCD}\alpha^{A}\beta^{B}\beta^{C}\beta^{D}.
\end{equation}
\end{remark}

\begin{lemma}\label{Lemma:CC0}
Suppose that $\beta^{A}$ is shear-free. Then the equation 
\begin{equation}
 \mathcal{C}_{A'}\mathcal{C}^{A'} = 0
\end{equation}
is satisfied when acting on {\em any} weighted scalar/spinor/tensor field, 
if and only if $\beta^{A}$ is a repeated principal spinor of the ASD Weyl curvature.
\end{lemma}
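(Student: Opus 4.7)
The plan is to reduce $\mathcal{C}_{A'}\mathcal{C}^{A'}$ on an arbitrary weighted field to a single curvature obstruction, and then show that this obstruction vanishes iff $\beta^A$ is a repeated principal spinor. First I observe that the shear-free assumption on $\beta^A$, together with \eqref{Cbeta} and \eqref{parallelspinors}, gives $\mathcal{C}_{AA'}\beta^B = 0$. Hence $\beta$ commutes freely with $\mathcal{C}_{AA'}$, so for any weighted spinor field $\Phi$,
\[
\mathcal{C}_{A'}\mathcal{C}^{A'}\Phi = \beta^A\beta^B\,\mathcal{C}_{AA'}\mathcal{C}_B{}^{A'}\Phi = \beta^A\beta^B\,\mathcal{C}_{A'(A}\mathcal{C}_{B)}{}^{A'}\Phi,
\]
where the symmetrization is automatic since $\beta^A\beta^B$ is symmetric. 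The problem therefore reduces to evaluating the ASD commutator \eqref{ASDcurvatureC} contracted with $\beta^A\beta^B$, acting on fields of each possible type.

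Next I treat the scalar case. For $\Phi$ of type $(r,r';w)$, only the internal-connection curvature survives, and \eqref{CCscalars} together with \eqref{bbdpsi2} gives
\[
\mathcal{C}_{A'}\mathcal{C}^{A'}\Phi = -2\phi^{-1}\bigl(w - r - \tfrac{3}{2}(r'-r) - r w_0 - r' w_1\bigr)\,\Psi_{ABCD}\alpha^A\beta^B\beta^C\beta^D\,\Phi.
\]
Requiring this to vanish for \emph{all} choices of weights $(r,r';w)$ forces $\Psi_{ABCD}\alpha^A\beta^B\beta^C\beta^D = 0$. Combined with the integrability consequence $\Psi_{ABCD}\beta^A\beta^B\beta^C\beta^D = 0$ from the shear-free hypothesis (\eqref{IntegrabilityConditionSFR}) and the fact that $(\alpha,\beta)$ span the spin space (since $\phi \neq 0$), this is equivalent to $\Psi_{ABCD}\beta^B\beta^C\beta^D = 0$, i.e.\ $\beta^A$ is a repeated principal spinor of the ASD Weyl curvature. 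This establishes the ``only if'' direction and also the ``if'' direction on scalar fields.

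For the converse on general spinor fields, I handle the two kinds of indices separately. If $\mu^{D'}$ is a weighted primed spinor, \eqref{ASDcurvatureC} contributes an extra term $\beta^A\beta^B\bigl(\Phi_{AB}{}^{D'}{}_{C'} + \tilde G_{AB}{}^{D'}{}_{C'}\bigr)\mu^{C'}$; by identity \eqref{identityPhiGtilde} this collapses to $\tfrac{1}{2}\beta^A\beta^B\bigl(\nabla_{A'(A}f_{B)}^{A'}\bigr)\mu^{D'}$, which by \eqref{identityPsibbb} is a scalar multiple of $\Psi_{ABCD}\alpha^A\beta^B\beta^C\beta^D$ and therefore vanishes. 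This extra term only modifies the weight-dependent coefficient in the scalar formula, so the same obstruction governs the primed-spinor case. For an unprimed spinor $\eta^D$ the extra term is $\beta^A\beta^B\bigl(\Psi_{AB}{}^D{}_C - 2\Lambda\,\delta\text{-terms} + F_{AB}{}^D{}_C\bigr)\eta^C$, which by identity \eqref{identityPsi} reduces directly to a contraction of $\Psi_{ABCD}\beta^C\beta^D$ with $\eta^C$ and therefore vanishes. By the Leibniz rule on tensor products of such fields, $\mathcal{C}_{A'}\mathcal{C}^{A'}$ then vanishes on arbitrary weighted tensor-spinor fields, completing the proof.

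The main obstacle I expect is the bookkeeping in the last paragraph: ensuring that for every index type and every weight the curvature terms from $\Box_{AB}$, $\nabla_{A'(A}\psi^{A'}_{B)}$, and the Weyl-connection pieces $F$ and $\tilde G$ all conspire, via the previously proved identities, to collapse onto the single scalar obstruction $\Psi_{ABCD}\beta^B\beta^C\beta^D$. Getting the conventions, signs, and weights right so that the coefficients of $\Lambda$, $\Phi$, and the internal curvature cancel exactly as claimed is where the technical care is required; the underlying mechanism, however, is entirely supplied by identities \eqref{identityPhi}--\eqref{bbdpsi2}.
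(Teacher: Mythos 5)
Your overall strategy is the one the paper uses: reduce $\mathcal{C}_{A'}\mathcal{C}^{A'}$ to $\beta^{A}\beta^{B}\mathcal{C}_{A'(A}\mathcal{C}_{B)}{}^{A'}$ using $\mathcal{C}_{AA'}\beta^{B}=0$, evaluate the ASD curvature via \eqref{ASDcurvatureC}, and collapse everything onto the single obstruction $\Psi_{ABCD}\beta^{B}\beta^{C}\beta^{D}$ through \eqref{identityPhiGtilde}, \eqref{identityPsibbb} and \eqref{bbdpsi2}. Your treatment of scalars and of primed spinor fields matches the paper's proof essentially line by line, and your ``only if'' argument (genericity of the weight coefficient, plus $\Psi_{ABCD}\beta^{A}\beta^{B}\beta^{C}\beta^{D}=0$ from \eqref{IntegrabilityConditionSFR} and the fact that $\alpha,\beta$ span spin space) is sound.

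The unprimed-spinor step, however, does not go through as written. The extra curvature term acting on $\eta^{C}$ is $\beta^{A}\beta^{B}\bigl(X_{ABC}{}^{D}+F_{ABC}{}^{D}\bigr)\eta^{C}$, i.e.\ the two $\beta$'s contract the \emph{first} index pair of $F$, whereas identity \eqref{identityPsi} involves $F_{ABCD}\beta^{C}\beta^{D}$, with the $\beta$'s in the \emph{last} two slots. Since $F_{ABC}{}^{D}=\epsilon_{(A}{}^{D}\bigl[\nabla_{B)C'}f_{C}{}^{C'}+f_{B)C'}f_{C}{}^{C'}\bigr]$ is symmetric only in its first pair, these are genuinely different contractions, so \eqref{identityPsi} cannot be invoked directly; moreover, the quantity you claim vanishes, ``a contraction of $\Psi_{ABCD}\beta^{C}\beta^{D}$ with $\eta^{C}$,'' need not vanish when $\beta^{A}$ is merely a repeated principal spinor (only the triple contraction $\Psi_{ABCD}\beta^{B}\beta^{C}\beta^{D}$ does). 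The paper avoids this entirely by expanding an arbitrary weighted unprimed spinor as $\varphi^{A}=a\,\alpha^{A}+b\,\beta^{A}$ and using $\mathcal{C}_{A'}\alpha^{B}=0$ (from \eqref{parallelspinors}) together with $\mathcal{C}_{A'}\beta^{B}=0$ (from shear-freeness), so that $\mathcal{C}_{A'}\mathcal{C}^{A'}\varphi^{D}=\alpha^{D}\mathcal{C}_{A'}\mathcal{C}^{A'}a+\beta^{D}\mathcal{C}_{A'}\mathcal{C}^{A'}b$ and the unprimed case reduces to the scalar case already settled. You should replace your last paragraph by this basis expansion; with that substitution the proof is complete.
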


\begin{proof}
Consider first a primed spinor field $\pi^{A'}$, with arbitrary weights $(r,r';w)$. 
Applying \eqref{ASDcurvatureC} to $\pi^{A'}$ and contracting with $\beta^{A}\beta^{B}$, we have
\begin{align*}
 \mathcal{C}_{A'}\mathcal{C}^{A'}\pi^{D'} ={}& \beta^{A}\beta^{B}\Box_{AB}\pi^{D'} + 
 (\beta^{A}\beta^{B}\nabla_{A'(A}\psi^{A'}_{B)})\pi^{D'} + \beta^{A}\beta^{B}\tilde{G}_{ABC'}{}^{D'}\pi^{C'} \\
 ={}& (\beta^{A}\beta^{B}\Phi_{ABC'}{}^{D'} + \beta^{A}\beta^{B}\tilde{G}_{ABC'}{}^{D'})\pi^{C'} 
 +(\beta^{A}\beta^{B}\nabla_{A'(A}\psi^{A'}_{B)})\pi^{D'} \\
 ={}& (\tfrac{1}{2}\beta^{A}\beta^{B}\nabla_{A'(A}f_{B)}^{A'})\pi^{D'}+(\beta^{A}\beta^{B}\nabla_{A'(A}\psi^{A'}_{B)})\pi^{D'} \\
 ={}& (w+\tfrac{1}{2}+\tfrac{1}{2}r-rw_{0}-\tfrac{3}{2}r'-r'w_{1})(\beta^{A}\beta^{B}\nabla_{A'(A}f^{A'}_{B)})\pi^{D'},
\end{align*}
where in the third line we replaced \eqref{identityPhiGtilde}, and in the fourth we used \eqref{bbdpsi} and 
the second equality in \eqref{identityPsibbb}. 
Therefore, we see that $\mathcal{C}_{A'}\mathcal{C}^{A'}\pi^{D'}=0$ for any $\pi^{D'}$ if and only if 
$\beta^{A}\beta^{B}\nabla_{A'(A}f^{A'}_{B)}=0$, which, using the first equality in \eqref{identityPsibbb}, 
is true if and only if $\Psi_{ABCD}\beta^{B}\beta^{C}\beta^{D}=0$.

Consider now an unprimed spinor field $\varphi^{A}$ with weights $(r,r';w)$. This can be written as
$\varphi^{A} = a\alpha^{A}+b\beta^{A}$.  Using $\mathcal{C}_{a}\beta^{B}=0$ and the first equation 
in \eqref{parallelspinors}, we have
\begin{equation}
 \mathcal{C}_{A'}\mathcal{C}^{A'}\varphi^{D} 
 = \alpha^{D} \mathcal{C}_{A'}\mathcal{C}^{A'}a + \beta^{D} \mathcal{C}_{A'}\mathcal{C}^{A'}b.
\end{equation}
Thus, we see that $\mathcal{C}_{A'}\mathcal{C}^{A'}\varphi^{D} = 0$ on arbitrary weighted unprimed spinor fields $\varphi^{D}$
if and only if $\mathcal{C}_{A'}\mathcal{C}^{A'}\Phi=0$ for arbitrary weighted scalar fields. 
From \eqref{CCscalars}, we get
\begin{equation}
 \mathcal{C}_{A'}\mathcal{C}^{A'}\Phi = (\beta^{A}\beta^{B}\nabla_{A'(A}\psi^{A'}_{B)})\Phi.
\end{equation}
Using now \eqref{bbdpsi} and \eqref{identityPsibbb}, we see that $\mathcal{C}_{A'}\mathcal{C}^{A'}\Phi=0$ 
if and only if $\Psi_{ABCD}\beta^{B}\beta^{C}\beta^{D}=0$.
\end{proof}

\end{document}